\newenvironment{proof}{{\bf Proof:  }}{\hfill\rule{2mm}{2mm}\vspace*{5pt}}
\newenvironment{proofof}[1]{{\vspace*{5pt} \noindent\bf Proof of #1:  }}{\hfill\rule{2mm}{2mm}\vspace*{5pt}}
\numberwithin{figure}{section}
\numberwithin{equation}{section}
\newtheorem{definition}{Definition}[section]
\newtheorem{theorem}{Theorem}[section]
\newtheorem{lemma}{Lemma}[section]
\newtheorem{fact}{Fact}[section]
\newcommand{\fomp}{\textsf{Fully Online Matching}\xspace}
\newcommand{\fobmp}{\textsf{Fully Online Bipartite Matching}\xspace}
\newcommand{\obmp}{\textsf{Online Bipartite Matching}\xspace}
\newcommand{\ranking}{\textsf{Ranking}\xspace}
\newcommand{\eqdef}{\stackrel{\textrm{def}}{=}}
\newcommand{\expect}[2]{\operatorname{\mathbf E}_{#1}\left[#2\right]}
\newcommand{\vect}[1]{\ensuremath{\vec{#1}}}
\newcommand{\vecy}{\vect{y}}
\newcommand{\vecmv}[1][v]{\vect{y}_{\text{-}#1}}
\newcommand{\onev}{\ensuremath{\mathbbm{1}}}
\newcommand{\idr}[1]{\onev\left(#1\right)}
\newcommand*\samethanks[1][\value{footnote}]{\footnotemark[#1]}
\title{How to Match when All Vertices Arrive Online}
\author{Zhiyi Huang\thanks{Department of Computer Science, University of Hong Kong. {\texttt{\{zhiyi,nkang,zhtang,yhzhang2,xzhu2\}@cs.hku.hk}}}\and Ning Kang\samethanks \and Zhihao Gavin Tang\samethanks
	\and Xiaowei Wu\thanks{Department of Computing, Hong Kong Polytechnic University. {\texttt{csxwu@comp.polyu.edu.hk}.} The work was done when the author was a postdoc at the University of Hong Kong.}
	\and Yuhao Zhang\samethanks[1]
	\and Xue Zhu\samethanks[1]}
\date{}
\begin{document}

\begin{titlepage}
	\thispagestyle{empty}
	\maketitle
	
	\begin{abstract}
		We introduce a fully online model of maximum cardinality matching in which all vertices arrive online.
		On the arrival of a vertex, its incident edges to previously-arrived vertices are revealed.
		Each vertex has a deadline that is after all its neighbors' arrivals.
		If a vertex remains unmatched until its deadline, the algorithm must then irrevocably either match it to an unmatched neighbor, or leave it unmatched.
		The model generalizes the existing one-sided online model and is motivated by applications including ride-sharing platforms, real-estate agency, etc.
		
		We show that the Ranking algorithm by Karp et al.\ (STOC 1990) is $0.5211$-competitive in our fully online model for general graphs.
		Our analysis brings a novel charging mechanic into the randomized primal dual technique by Devanur et al.\ (SODA 2013), allowing a vertex other than the two endpoints of a matched edge to share the gain.
		To our knowledge, this is the first analysis of Ranking that beats $0.5$ on general graphs in an online matching problem, a first step towards solving the open problem by Karp et al.\ (STOC 1990) about the optimality of Ranking on general graphs.
		If the graph is bipartite, we show that the competitive ratio of Ranking is between $0.5541$ and $0.5671$.
		Finally, we prove that the fully online model is strictly harder than the previous model as no online algorithm can be $0.6317 < 1-\frac{1}{e}$-competitive in our model even for bipartite graphs.
	\end{abstract}
\end{titlepage}
\section{Introduction}

\obmp is a central problem in the area of online algorithms with a wide range of applications.
Consider a bipartite graph where the left-hand-side is known in advance, while vertices on the right-hand-side arrive online in an arbitrary order.
On the arrival of a vertex, its incident edges are revealed and the algorithm must irrevocably either match it to one of its unmatched neighbors or leave it unmatched.
Karp et al.~\cite{stoc/KarpVV90} introduced the \ranking algorithm, which picks at the beginning a random permutation over offline vertices, and matches each online vertex to the first unmatched neighbor according to the permutation.
Further, they proved that \ranking is $(1-\frac{1}{e})$-competitive
and the best possible among online algorithms.
The analysis of \ranking has been subsequently simplified in a series of papers~\cite{soda/GoelM08, sigact/BenjaminC08, soda/DevanurJK13}.
Further, it has been generalized to several extended settings, including the vertex-weighted case~\cite{soda/AggarwalGKM11}, the random arrival model~\cite{stoc/KarandeMT11, stoc/MahdianY11}, and the Adwords problem~\cite{jacm/MehtaSVV07, esa/BuchbinderJN07, stoc/DevanurJ12}.

However, all the above successful applications of \ranking crucially rely on the assumption that one side of the bipartite graph is known upfront. 
This assumption prevents us from applying the known positive results to some applications.
Here is an example:

\begin{center}
	\fcolorbox{lightgray}{lightgray}{
		\parbox{0.96\textwidth}{%
			\textbf{Example (Real Estate Agency).~} 
			During a typical day of a real estate agent in Hong Kong, both tenants and landlords drop by in an online fashion.
			Tenants specify what kinds of apartments they are looking for as well as the deadlines before which they need to move in\footnotemark. 
			Similarly, landlords list certain rules for tenant screening together with their deadlines.
			Tenants and landlords can be modeled as the vertices in a bipartite graph.
			There is an edge between a tenant-landlord pair if (1) they mutually satisfy each other's conditions, and (2) their time windows (between their respective arrivals and deadlines) overlap. 
			Real estate agents charge for each successful deal and, thus, seek to maximize the size of the bipartite matching. 
		}%
	}
\end{center}

\footnotetext{Tenants may also specify their earliest move-in dates. This is omitted in our model because, from the algorithmic point of view, it is equivalent to having each tenant arrive on the earliest move-in date. The same applies to landlords.}

This is clearly a bipartite matching problem with an online nature. 
However, it does not fit into the existing model in two fundamental ways.
First, vertices from both sides of the bipartite graphs arrive online.
Second, matching decision of each vertex is made at its deadline rather than its arrival.
There are many other applications with similar flavors such as job market intermediary and organ transplantation.

\paragraph{A Fully Online Model.}
Motivated by these applications, we formulate the following alternative online model of bipartite matching.
We call it {\fobmp} since vertices from both sides arrive online.
Let there be an underlying bipartite graph that is completely unknown to the algorithm at the beginning.
Each time step falls into one of the following two kinds:
\medskip
\begin{compactitem}
	\item Arrival of $v$: A vertex $v$ arrives; edges between $v$ and previously-arrived vertices are revealed. 
	\item Deadline of $v$: This is the last time a vertex $v$ can be matched (if it is not matched yet). 
\end{compactitem}
\medskip

The model further guarantees that all edges incident to a vertex are revealed before its deadline. 
Indeed, a tenant-landlord pair must have overlapping time windows in order to have an edge between them in the above example.
We further assume without loss of generality that algorithms are lazy in the sense that they only make decisions on the deadlines of the vertices. 
On the deadline of a vertex $v$, it might be the case that $v$ has already been matched to another vertex $u$ on $u$'s deadline.
Otherwise, the algorithm must irrevocably either match $v$ to one of its unmatched neighbors, or leave it unmatched. 

The previous one-sided online model is a special case in which all offline vertices arrive at the beginning and have deadlines at the end, and each online vertex has its deadline right after arrival.

Further, there are many applications for which the underlying graph is not necessarily bipartite.
Consider the following example:

\begin{center}
	\fcolorbox{lightgray}{lightgray}{
		\parbox{0.96\textwidth}{%
			\textbf{Example (Ride-sharing Platform).~} 
			DiDi is a major ride-sharing platform in China, handling tens of millions of rides on a daily basis.
			Requests are submitted to the platform in an online fashion.
			Each request is active in the system for a few minutes.
			The platform may match a pair of requests and serve them with the same taxi (or self-employed driver), provided that the pick-up locations and destinations are compatible, and their active time windows overlap.
			Requests can be modeled as vertices in a general graph and the compatibilities of pairs of requests can be modeled as edges.
		}
	}
\end{center}

Our model generalizes straightforwardly to general graphs by removing the bipartite assumption on the underlying graph. 
We refer to the generalization as \fomp.

It is easy to check that the na\"ive greedy algorithm that simply matches a vertex to an arbitrary unmatched neighbor remains to be $0.5$-competitive. 
Can we do better?

\subsection{Our Results and Techniques}

We consider a natural generalization of the \ranking algorithm that picks a random permutation over all vertices at the beginning, and matches each vertex (if unmatched at its deadline) to the first unmatched neighbor according to the permutation.
This algorithm can be implemented in our fully online model following the interpretation of \ranking by Devanur et al.~\cite{soda/DevanurJK13}:
On the arrival of $v \in V$, the rank of vertex $v$, denoted by $y_v$, is chosen uniformly at random from $[0, 1)$;
each vertex (if unmatched at its deadline) is matched to its unmatched neighbor with the highest, i.e., smallest, rank.
We show that the $\ranking$ algorithm is strictly better than $0.5$-competitive:

\begin{theorem}
	\label{th:general}
	\ranking is $0.5211$-competitive for \fomp.
\end{theorem}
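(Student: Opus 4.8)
The plan is to prove Theorem~\ref{th:general} by the randomized primal-dual method of Devanur, Jain and Kale, augmented with the three-way gain-sharing the abstract promises. Work with the fractional matching LP, $\max\sum_{(u,v)\in E} x_{uv}$ subject to $\sum_{v} x_{uv}\le 1$ for every $u$ and $x\ge 0$; its value $\mathrm{LP}^\star$ is at least the maximum matching size $\mathrm{OPT}$, and its dual is the fractional vertex-cover LP $\min\sum_v\alpha_v$ subject to $\alpha_u+\alpha_v\ge 1$ for every edge and $\alpha\ge 0$. Writing $M=M(\vect{y})$ for the matching produced by \ranking on the rank vector $\vect{y}\in[0,1)^V$, it suffices to assign random dual values $\alpha_v=\alpha_v(\vect{y})$ satisfying (i) $\sum_v\alpha_v=|M|$ for every $\vect{y}$, and (ii) $\expect{\vect{y}}{\alpha_u}+\expect{\vect{y}}{\alpha_v}\ge F$ for every edge $(u,v)$, with $F=0.5211$: then $(\expect{\vect{y}}{\alpha_v}/F)_v$ is dual feasible, so $\mathrm{OPT}\le\mathrm{LP}^\star\le\frac1F\sum_v\expect{\vect{y}}{\alpha_v}=\frac1F\expect{\vect{y}}{|M|}$.

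To define the $\alpha_v$'s, note that \ranking is the lazy process that scans vertices in deadline order and, whenever a vertex $u$ reaches its deadline unmatched, matches it to its unmatched neighbor of smallest rank. When \ranking matches a pair $u,v$ --- say $u$'s deadline is the earlier one, so $v$ is $u$'s smallest-rank unmatched neighbor at that instant --- we split one unit of dual value among \emph{three} vertices: $u$, $v$, and a carefully chosen third vertex $w$, intuitively the neighbor of $v$ that gets crowded out by $v$'s being matched to $u$ (the split degenerating to the usual two-way one when no such $w$ is present). Using a nondecreasing gain-sharing function $g:[0,1)\to[0,1]$ to be optimized, the three shares are set as explicit functions of $y_u,y_v,y_w$ chosen so that (a) they always sum to one --- which gives (i) for free --- and (b) the resulting $\alpha_v$'s meet (ii). It is precisely the freedom to divert part of $v$'s gain to the third vertex $w$, impossible in the two-endpoint scheme of Devanur et al., that pushes the ratio above $1/2$ on general graphs.

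For (ii), fix an edge $(u,v)$, condition on the ranks $\vecmv[u]$ of all vertices other than $u$, and integrate over $y_u\in[0,1)$. The crux is a monotonicity lemma: worsening $u$'s rank, or deleting $u$ altogether, can only help $v$, so that in the run without $u$ the vertex $v$ is matched to some neighbor of rank at most $\theta$, where $\theta$ measures the quality of $v$'s match in the run with $u$ present. Because \ranking is maximal, at least one of $u,v$ is matched in every run, which already pins a base level on $\alpha_u+\alpha_v$; the lemma upgrades it, forcing $\alpha_v$ to absorb close to $1-g(\theta)$ and $\alpha_u$ close to $g(\theta)$ over the relevant range of $y_u$, while the third-vertex shares that $u$ and $v$ collect from \emph{other} matched edges cover whatever deficit remains. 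The integral then reduces to a one-dimensional functional inequality in $g$, and choosing $g$ to make it tight yields $F=0.5211$ --- comfortably below $1-\frac1e$, consistent with the impossibility result established later in the paper.

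The main obstacle is the monotonicity lemma on general graphs. In the one-sided bipartite model, deleting an offline vertex perturbs the run in a single direction and a one-step exchange argument suffices; here every matched edge has two online endpoints, so deleting $u$ can set off an alternating chain of re-matchings that weaves back and forth through the graph, and the naive comparison breaks. Taming this chain --- identifying the displaced vertex at each step, showing ranks improve monotonically along it, and keeping a ledger of the gain the displaced vertex is owed --- is exactly what necessitates the three-way charging, and balancing that ledger while keeping $F$ as large as possible is the technically delicate part; by comparison, setting up the duality, verifying the sum-to-one identity, and optimizing $g$ are routine.
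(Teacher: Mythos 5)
Your proposal correctly identifies the paper's framework --- randomized dual fitting with approximate dual feasibility in expectation (Lemma~\ref{lemma:dual_fitting}), plus a charging rule that lets a third vertex share the gain of a matched edge --- but the actual content of the proof is exactly the part you defer to ``the technically delicate part,'' and the specific choices you do commit to differ from what is needed. First, the third vertex is not ``the neighbor of $v$ that gets crowded out by $v$'s being matched to $u$.'' In the paper the recipient is the \emph{victim} of the active endpoint $w$ of a matched edge: an \emph{unmatched} neighbor of $w$ that becomes matched in $M(\vecmv[w])$; the alternating-path lemma (Lemma~\ref{lemma:alternating_path}) guarantees each active vertex has at most one victim, and the whole point of the transfer is to rescue the analysis of an edge $(u,v)$ precisely when $v$ is unmatched and the bipartite structural property (Lemma~\ref{lemma:friendly_partner_bipartite}) fails because of an odd cycle. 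One must then prove (Lemma~\ref{lem:compensate}) that for every rank $y_v\in[\theta,\tau_i)$ at which $v$ is unmatched, $v$ \emph{is} the victim of the identifiable neighbor $w_i$ that would have matched it, so that $v$ actually collects $h(\theta_i)$; nothing in your sketch supplies this, and with your ``crowded out'' vertex (which may well be matched elsewhere) the accounting does not serve the step where it is needed. Second, the paper needs a \emph{second} function $h$ for the compensation, with $h(1)=0$, $1-g-h\ge 0$, and $h(y)/y$ non-increasing; the last property is not cosmetic --- it drives the amortization over the multiple thresholds $\theta_i,\tau_i$ (Lemma~\ref{lemma:technical_lower_bound}) that reduces the general case to a single $(\theta,\tau)$ pair. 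A single gain-sharing function $g$ does not give you the trade-off between what $u$ keeps ($1-g(\tau)-h(\tau)$) and what $v$ receives ($(\tau-\theta)h(\theta)$).

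There is also a structural mismatch in how you take expectations. You propose to fix $\vecmv[u]$ (hence fix $y_v$) and integrate over $y_u$ alone, hoping for a pointwise bound in the remaining ranks. The paper's bound is genuinely two-dimensional for each edge: with $u$ the earlier-deadline endpoint, one first takes $\expect{y_v}{\alpha_u+\alpha_v}\ge f(y_u)$ for every $\vecmv$ (Lemma~\ref{lemma:general_main}) --- the integral $\int_0^\theta g$ comes from $v$ being passive below its marginal rank, and the compensation terms from $v$ being a victim above it --- and only then integrates $f$ over $y_u$ to amortize the cases where $u$ is passive with low rank (where $f(y_u)\approx g(y_u)<\tfrac12$). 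A bound that is pointwise in $y_v$ cannot exceed $\tfrac12$ in the worst case, so your decomposition would not reach $0.5211$. Finally, the constant itself is not the output of a clean closed-form optimization in $g$: the paper must exhibit explicit piecewise-linear $g,h$ and verify, by a case analysis over $\psi_1,\psi_2$ and their derivatives (Lemmas~\ref{lemma:psi2} and~\ref{lemma:psi1}), that $\int_0^1 f(y_u)\,dy_u>0.5211$. In short, the skeleton is right, but the victim definition, the compensation lemma, the two-function charging with its monotonicity conditions, the order of integration, and the numerical verification --- i.e., the proof --- are all missing or misdirected.
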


\begin{theorem}
	\label{th:bipartite}
	\ranking is $0.5541$-competitive for \fobmp.
\end{theorem}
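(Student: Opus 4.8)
The plan is to prove the $0.5541$ lower bound by the randomized primal--dual method of Devanur et al., adapted to the fully online setting and specialized to exploit the known bipartition. Give every vertex $w$ a dual variable $\alpha_w$, initially $0$. Each time \ranking matches an edge we distribute a total gain of exactly $1$ among a bounded number of vertices --- the two endpoints, and possibly one extra vertex via the charging mechanism behind Theorem~\ref{th:general} --- so that when the instance ends $\sum_w \alpha_w$ equals the size of \ranking's matching, deterministically. It then suffices to exhibit a constant $F \ge 0.5541$ with $\mathbf{E}\!\left[\alpha_u + \alpha_v\right] \ge F$ for \emph{every} edge $(u,v)$ of the underlying graph: the deterministic dual $w \mapsto \mathbf{E}[\alpha_w]/F$ is then feasible for the matching LP, so $\mathbf{E}[\mathrm{ALG}] = \sum_w \mathbf{E}[\alpha_w] \ge F\cdot \mathrm{OPT}$ by weak LP duality.

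Because the bipartition $U \cup V$ is known, I would use side-dependent gain-sharing functions, $g_L$ on $U$ and $g_R$ on $V$; this extra degree of freedom over the single (symmetry-forced) function of Theorem~\ref{th:general} is precisely what should buy the improvement from $0.5211$ to $0.5541$. Concretely, when \ranking forms an edge $(u,v)$ at, say, $v$'s deadline by letting $v$ grab its available neighbor of smallest rank $u$, the chosen endpoint $u$ retains an amount increasing in its own (small) rank $y_u$, the choosing endpoint $v$ retains a complementary amount, and a controlled portion may instead be credited to a third vertex --- a neighbor of $u$ lying on $v$'s side that was ``crowded out'' --- so as to shift dual value onto exactly those edges that would otherwise be under-covered. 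One first verifies the bookkeeping: each matched edge contributes exactly $1$ to $\sum_w \alpha_w$ and unmatched vertices contribute $0$, giving the primal identity above.

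The core of the argument is a structural lemma that, for a fixed edge $(u,v)$ and with all ranks other than $y_u, y_v$ frozen, controls $\alpha_u$ and $\alpha_v$ jointly as functions of the pair $(y_u, y_v) \in [0,1]^2$. The one-sided prototype --- raising a vertex's rank only worsens the rank of its eventual partner, and if a vertex of rank $\theta$ ends unmatched then some neighbor of it is matched below rank $\theta$ --- has to be re-established here, and it can fail in its naive form: in the fully online model a vertex may be matched \emph{at its own deadline} to a mediocre neighbor before a better neighbor has even arrived, or be poached by a later deadline. The fix is a weaker, amortized invariant whose deficit is absorbed exactly by the third-vertex charge; proving this clean invariant in the presence of interleaved arrivals and deadlines is, I expect, the main obstacle. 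Granting the lemma, for each $(\theta_u,\theta_v)$ one lower-bounds $\alpha_u+\alpha_v$ by comparing the actual run to the runs that delete $u$, delete $v$, or delete both, and then integrates over the unit square. Unlike the classical single integral $\int_0^1(\cdots)\,dy \ge 1-\tfrac1e$, what remains is a genuinely two-dimensional functional inequality in $g_L$ and $g_R$.

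The last step is to choose $g_L$ and $g_R$ --- e.g.\ as the solution of an associated boundary-value problem that makes the two-dimensional inequality tight --- and to verify that the best achievable constant $F$ is at least $0.5541$. I would expect this optimization to be the bottleneck \emph{computation} (as opposed to the structural lemma, which is the bottleneck \emph{idea}), requiring either a careful closed-form analysis of the extremal rank configurations or a rigorous numerical certification. The matching upper bound of $0.5671$ mentioned in the abstract is a separate hard-instance construction and is not needed for this theorem.
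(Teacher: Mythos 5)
Your overall framework (randomized primal--dual with per-edge expected dual feasibility, and crucially integrating over the ranks of \emph{both} endpoints rather than only the later one) matches the paper's, but as written the proposal has two genuine gaps. First, the structural lemma you correctly identify as the crux is left unproved, and the fix you propose for its failure --- absorbing the deficit by a third-vertex ``crowded out'' charge --- is the wrong tool for the bipartite case. That compensation mechanic is what the paper needs for \emph{general} graphs (and it is what caps the ratio there at $0.5211$); the whole point of the bipartite warm-up is that no third vertex ever needs a share, because bipartiteness rescues the invariant: when $y_v$ drops below its marginal rank $\theta$, the earlier-deadline neighbor $u$ lies at odd distance from $v$ on the symmetric-difference alternating path (Lemma~\ref{lemma:alternating_path}), so $u$ can only get better, and combined with Lemma~\ref{lemma:unmatched_neighbor} one gets that for \emph{every} value of $y_v$, $u$ is either passive or actively matched to a vertex of rank at most $\theta$ (Lemma~\ref{lemma:friendly_partner_bipartite}). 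If you instead lean on compensations, you are effectively running the general-graph argument and should not expect to clear $0.5211$, let alone $0.5541$, without proving that the compensations never fire --- which is exactly the lemma you skipped. Relatedly, your claimed source of the improvement (side-dependent functions $g_L,g_R$ exploiting the known bipartition) is not where the gain comes from and is not even well aligned with the model: in the fully online setting the active/passive roles are governed by deadlines, not by sides, and \ranking itself ignores the bipartition; the paper uses a single function $g$ and the asymmetry it exploits is ``earlier deadline vs.\ later deadline,'' amortized by taking expectation over the earlier vertex's rank $y_u$.

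Second, the quantitative step is missing entirely: you do not exhibit the gain-sharing function(s) nor verify that the resulting two-variable bound integrates to at least $0.5541$. In the paper this is short and explicit: Lemma~\ref{lemma:bipartite_main} gives $\expect{y_v}{\alpha_u+\alpha_v}\ge f(y_u)=\min_\theta\{\int_0^\theta g+\min\{1-g(\theta),g(y_u)\}\}$, and with $g(x)=e^{x-1}$ one has $f(y_u)\ge\min\{g(y_u),1-\tfrac1e\}$, whence $\int_0^1 f(y_u)\,dy_u\ge \tfrac{e-2}{e}+(1-\ln(e-1))(1-\tfrac1e)\approx 0.5541$, and Lemma~\ref{lemma:dual_fitting} concludes. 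No boundary-value problem or numerical certification is needed. To repair your proposal, prove the bipartite structural lemma via the alternating-path parity argument, drop the third-vertex charge and the side-dependent functions, and carry out the one-line optimization with $g(x)=e^{x-1}$.
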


To our knowledge, our result for the \fomp problem is the first generalization of \ranking that achieves a competitive ratio strictly better than $0.5$ in an online matching model that allows general graphs, making the first step towards providing a positive answer to the open question of whether \ranking is optimal for general graphs by \cite{stoc/KarpVV90}.

\paragraph{Our Techniques (Bipartite Case).}
We build on the randomized primal dual technique introduced by Devanur et al.~\cite{soda/DevanurJK13}.
It can be viewed as a charging argument for sharing the gain of each matched edge between its two endpoints.
Whenever an edge $(u, v)$ is added to the matching, where $v$ is an offline vertex and $u$ is an online vertex, imagine a total gain of $1$ being shared between $u$ and $v$ based on the rank of the offline vertex $v$.
The higher the rank of $v$, the smaller share it gets.
For \obmp, Devanur et al.~\cite{soda/DevanurJK13} introduced a gain sharing method such that, for any edge $(u, v)$ and for any fixed ranks of offline vertices other than $v$, the expected gains of $u$ and $v$ (from all of their incident edges) combined is at least $1 - \frac{1}{e}$ over the randomness of $v$'s rank.
This implies the $1 - \frac{1}{e}$ competitive ratio.

Next, consider an edge $(u, v)$ in our model.
Suppose $u$ is the one with an earlier deadline. 
Since algorithms are lazy, the edge can only be added into the matching as a result of $u$'s decision at its deadline.
In this sense, $u$ plays a similar role as the online vertex and $v$ as the offline vertex in the analysis of Devanur et al.~\cite{soda/DevanurJK13}.
Hence, a natural attempt is to consider the expected gains of $u$ and $v$ combined in the charging argument over the randomness of $v$'s rank alone.

To explain why the above approach fails, we first introduce the notions of active and passive vertices. 
We say a vertex $u$ matches actively (or is active) if it is added to the matching by the algorithm at $u$'s deadline; other matched vertices are passive.
All previous analyses~\cite{stoc/KarpVV90, soda/AggarwalGKM11, soda/DevanurJK13, soda/ChanCWZ14, esa/AbolhassaniCCEH16} crucially rely on a structural property that whenever vertex $v$ is unmatched, its neighbor $u$ must be matched to some other vertex with rank higher than $v$. 
In our model, however, this holds only if $v$'s neighbor $u$ is active.

One may try to resolve this issue with a global amortized argument.
If we go over all the edges in the graph, it cannot be the case that the endpoint with an earlier deadline of every edge always matches passively.
After all, the numbers of active and passive vertices are equal.
Interestingly, we instantiate this intuition with a local amortized argument by taking expectation over the randomness of $u$'s rank as well.
Recall that $u$ is the vertex with earlier deadline and, thus, plays a similar role as the online vertex in the argument of Devanur et al.~\cite{soda/DevanurJK13}.
Taking expectation over $u$'s rank can be viewed as amortizing the case when $u$'s rank is low (active) and the case when $u$'s rank is high (passive).

\paragraph{Our Techniques (General Case).}
Moving from bipartite graphs to general graphs takes away another crucial structural property that the previous arguments rely on.
In a bipartite graph, if a vertex $u$ is matched by the \ranking algorithm for a realization of ranks while one of its neighbors $v$ is not, then $u$ remains matched no matter how the rank of $v$ changes.
In a general graph, however, it is possible that $u$ becomes unmatched when $v$ gets a higher rank.

We introduce a novel charging mechanic on top of the gain sharing rule used in the bipartite case.
After a matching has been chosen by \ranking, for each active vertex $w$, consider an alternative run of \ranking with the same ranks but with $w$ removed from the graph.
The difference between the two matchings will be an alternating path and, thus, at most one vertex $v$ would change from unmatched to matched in the absence of $w$.
We shall refer to such a vertex $v$ as the \emph{victim} of $w$.
Note that each active vertex has at most one victim, but an unmatched vertex could be the victim of many vertices.
If the victim of $w$ turns out to be its neighbor\footnote{This would induce an odd cycle and, thus, can only happen in non-bipartite graphs.}, our new charging mechanic will have $w$ send to $v$ a portion of $w$'s share from its incident edge in the matching, which we shall refer to as the \emph{compensation} from $w$ to $v$.
Further, we show that whenever the aforementioned structural property fails, namely, some vertex $u$ becomes unmatched when its unmatched neighbor $v$ gets a higher rank, we can always identify a unique neighbor $w$ of $v$ that sends a compensation to $v$ to remedy the loss in the charging argument.

Putting together the gain sharing mechanic from the bipartite case and the new mechanic of compensations, we can prove that for any edge $(u, v)$, the expected net gains of $u$ and $v$ combined is strictly greater than $0.5$ over the randomness of the ranks of both $u$ and $v$.

To our knowledge, this is the first charging mechanic that allows a vertex other than the two endpoints of a matched edge to get a share.
We believe this novel charging mechanic will find further applications in other matching problems that consider general graphs. 

\paragraph{Hardness Results.}

We complement our competitive analysis with two hardness results.
The first hardness applies to arbitrary online algorithms, showing a separation between the best competitive ratio in our fully online model and the optimal ratio of $1 - \frac{1}{e} \approx 0.6321$ in the existing one-sided online model. 
The second hardness focuses on the \ranking algorithm, certifying that our analysis for the bipartite case is close to the best possible.

\begin{theorem}
	\label{th:problem_hardness}
	No randomized algorithm can achieve a competitive ratio better than $0.6317$ for  \fobmp.
\end{theorem}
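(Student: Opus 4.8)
The plan is to invoke Yao's minimax principle: to rule out any randomized algorithm better than $0.6317$-competitive against an oblivious adversary, it suffices to exhibit, for every large $n$, a distribution $\mathcal{I}_n$ over \fobmp\ instances such that every \emph{deterministic} lazy algorithm $\mathrm{ALG}$ satisfies $\mathbf{E}_{\mathcal{I}_n}\!\big[|\mathrm{ALG}|\big] \le (0.6317+o(1))\cdot \mathbf{E}_{\mathcal{I}_n}\!\big[|\mathrm{OPT}|\big]$ as $n\to\infty$, where $|\mathrm{OPT}|$ is the offline maximum matching of the realized graph and $\mathbf{E}[|\mathrm{OPT}|]=\Theta(n)$ so that additive slack vanishes in the ratio. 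The construction must beat the trivial $0.5$ barrier, so it will build on the classical instance witnessing the $1-\tfrac1e$ upper bound for \obmp, while additionally using the two levers that only exist in the fully online model: the adversary controls when each vertex's deadline falls relative to its neighbors' arrivals, and it may \emph{adaptively truncate} the arrival sequence.

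The hard distribution I would use proceeds in $\Theta(n)$ \emph{levels}. Level $i$ introduces a fresh block of ``left'' vertices whose neighborhoods against a common pool of ``right'' vertices form the usual upper-triangular pattern — exactly the gadget that stops any \obmp\ algorithm from matching more than a $(1-\tfrac1e)$ fraction — but the \emph{deadlines} of the level-$i$ left vertices are scheduled before any level-$j$ right vertex with $j>i$ has arrived, so that at the moment the algorithm must irrevocably decide on a level-$i$ left vertex it has seen only a prefix of the instance, just as in \obmp. On top of this, after each level the adversary flips a biased coin: with some probability it \emph{stops}, in which case the right vertices that were committed too eagerly at earlier levels turn out to be a waste (their would-be future partners never arrive, keeping $|\mathrm{OPT}|$ small and rewarding an algorithm that waited); with the remaining probability it \emph{continues}, in which case an algorithm that hedged by leaving right vertices unmatched loses them to expired deadlines. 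The per-level stopping probabilities are tuned so that neither pure strategy — commit greedily, or wait — nor any interpolation escapes the bound.

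For the analysis I would first argue, by standard exchange and averaging arguments (deferring decisions, relabelling symmetric vertices), that the best deterministic algorithm against $\mathcal{I}_n$ may be taken in a canonical ``threshold'' form, specified by one scalar per level recording the fraction of currently-available right vertices it chooses to commit at that level. Substituting this parametrization into the definitions gives closed forms for $\mathbf{E}[|\mathrm{ALG}|]$ and $\mathbf{E}[|\mathrm{OPT}|]$ in terms of the level thresholds and the adversary's stopping distribution, reducing the whole question to a finite min--max optimization — a recurrence in the number of levels. Letting $n\to\infty$ turns the recurrence into a fixed-point / integral equation, which I would solve numerically; the value comes out to $0.6317<1-\tfrac1e$, and an explicit near-optimal choice of the adversary's stopping distribution certifies it.

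I expect the main obstacle to be the reduction to canonical algorithms together with the correct bookkeeping of the adversary's adaptive truncation: one must show that an algorithm gains nothing by letting its level-$i$ commitments depend on fine details of the history beyond the summary statistic, and that the two sources of hardness — the triangular $1-\tfrac1e$ barrier \emph{within} a level and the early-commitment penalty \emph{across} levels — genuinely compound rather than the weaker one being absorbed into the stronger. A secondary technical point is controlling the $o(1)$ boundary terms (very short realizations, and the algorithm's behaviour near the final level) so that the limit is clean and lands strictly below $1-\tfrac1e$ rather than merely at it.
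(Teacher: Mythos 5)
There is a genuine gap: what you have written is a plan, not a proof, and the two steps that carry all the difficulty are exactly the ones left open. First, your hard distribution is never actually specified --- you do not give the graph, the deadline schedule, or the stopping probabilities, and the key quantitative claim (that the resulting min--max value is $0.6317$) is simply asserted as the outcome of a fixed-point equation you say you ``would solve numerically.'' A hardness theorem with an explicit constant needs that constant to be derived; in the paper it comes out of a closed form. Second, your reduction of arbitrary deterministic algorithms to canonical per-level ``threshold'' strategies via ``standard exchange and averaging arguments'' is a substantial unproven claim, and it is precisely where such arguments tend to fail: an algorithm's commitments at level $i$ can in principle depend on the realized structure of earlier levels, and you give no symmetrization of the instance that would make this dependence provably useless. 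Relatedly, your framing of the adversary as ``adaptively truncating'' needs care: against an oblivious adversary the truncation time must be drawn in advance, independently of the algorithm's actions (a random stopping time is fine under Yao, but nothing adaptive is available), and your construction's hardness seems to lean on the adaptive reading.

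For comparison, the paper's construction avoids both issues by a different mechanism. It builds a $(k+1)$-ary tree of depth $h$ in which each internal vertex $u_i$ arrives together with its $k+1$ children and has its deadline immediately, while which child is the pendant vertex $v_i$ (never to receive another neighbor) is chosen uniformly at random and revealed only later; this makes the children statistically indistinguishable at decision time, so \emph{any} algorithm's choice is equivalent to a uniform random one, and a short alternating-path exchange argument shows one may assume the algorithm always matches at a deadline --- no reduction to threshold strategies is needed. The recurrence $p_i = \frac{1-p_{i-1}}{k+1}$ tracks the probability a level-$i$ vertex is already matched, the classical upper-triangular gadget is used only once, at the leaves, analyzed by water filling, and the ratio evaluates to $\frac{k-1}{k}\bigl(1-e^{-\frac{k+1}{k+2}}\bigr) + \frac{1}{2k} + \frac{1}{2(k+2)}$, which at $k=7$ gives $\approx 0.6317$. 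Your repeated-triangular-layers-plus-random-stopping design might conceivably be made to work, but as it stands nothing in the sketch certifies a bound strictly below $1-\frac{1}{e}$, let alone the specific constant claimed.
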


\begin{theorem}
	\label{th:ranking_hardness}
	\ranking is at most $0.5671$-competitive for \fobmp.
\end{theorem}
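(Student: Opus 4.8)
The plan is to construct an explicit family of bipartite instances on which \ranking, run in the fully online model, achieves a matching whose expected size is at most roughly $0.5671$ times the optimum. First I would recall the structural feature of the fully online model that the hardness must exploit: a vertex only commits at its deadline, and whether it matches ``actively'' or ``passively'' depends on its random rank relative to its neighbors. So unlike the one-sided model, both the arrival order \emph{and} the deadline order are adversarial parameters, and the adversary should choose a deadline schedule that forces \ranking into bad decisions on a constant fraction of vertices. Concretely, I would look for a gadget — most likely a small bipartite graph such as a path of length $3$ or a short cycle-like configuration, or the ``upper-triangular'' graph used in classical lower bounds — and set arrival times and deadlines so that a vertex $u$ with an early deadline is forced to match to a neighbor $v$ that, under a better schedule, would have been saved for a perfect matching. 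Then I would amplify the gadget by taking many disjoint (or overlapping, interleaved) copies, or by a recursive/layered construction, so that the constant-factor loss per gadget survives in the aggregate ratio.

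The key steps, in order, are: (1) fix the gadget and the timeline (arrival times, edge-revelation, deadlines), making sure the model's constraint that all of a vertex's edges are revealed before its deadline is respected; (2) identify the optimal offline matching on the instance and compute $\mathrm{OPT}$; (3) for a uniformly random permutation $\vecy$ of the ranks, track \ranking's execution — in particular, determine for each vertex whether it matches actively or passively and to whom, as a function of the relative order of the relevant ranks; (4) compute $\expect{\vecy}{|M(\vecy)|}$, either by a direct combinatorial sum over the relevant orderings in a single gadget and then multiplying by the number of gadgets, or by setting up a small recurrence if the construction is layered; (5) optimize any free parameters of the construction (sizes of layers, number of copies, fanout) to push the ratio down to the claimed $0.5671$, presumably solving a small optimization or taking a limit. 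Throughout I would keep the bipartite structure explicit so the bound also certifies that the analysis behind Theorem~\ref{th:bipartite} is nearly tight.

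The main obstacle I expect is step (3)–(4): faithfully simulating \ranking in the fully online model. The subtlety is that a vertex's matching choice at its deadline depends on which of its neighbors are \emph{still} unmatched at that moment, which in turn depends on decisions made at earlier deadlines, so the analysis is genuinely sequential rather than a one-shot ``first unmatched neighbor in $\vecy$'' computation. Getting the event decomposition right — conditioning on the induced order of a small set of ranks and correctly propagating the active/passive status through the deadline sequence — is where the real work lies, and it is also where an off-by-a-little error would change the constant. A secondary obstacle is ensuring the amplification is lossless: interleaving many gadgets can create spurious edges or let a vertex ``escape'' into a neighboring gadget's good match, so the copies must be made edge-disjoint or otherwise insulated, which may cost a bit in the ratio and must be accounted for when optimizing the parameters. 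Once the per-gadget expectation and the amplification are nailed down, the final bound is a routine arithmetic/limit computation.
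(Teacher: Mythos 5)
There is a genuine gap: your proposal is a plan for finding a hard instance, but it never actually specifies one, and the missing construction is exactly where the content of this theorem lies. The constant $0.5671$ is not a number you can reach by taking disjoint (or insulated) copies of a small gadget such as a length-$3$ path and optimizing a few parameters --- any fixed finite gadget repeated independently yields a rational per-gadget expectation, whereas $0.5671\ldots$ is the Omega constant, the root of $x = e^{-x}$. That transcendental value can only emerge from a limiting process in which losses propagate recursively through an unbounded structure, so the ``amplification'' cannot be lossless duplication; the copies must be chained together so that the damage done in one layer feeds into the next. Your parenthetical nod to ``a recursive/layered construction'' and ``a small recurrence'' points in the right direction, but you leave both the layers and the recurrence unspecified, and you yourself flag steps (3)--(4) as unresolved; as written there is nothing concrete to verify.

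For comparison, the paper's instance is a chain of $h$ groups $U_1,\dots,U_h$ of $k$ vertices each, with a complete bipartite graph between consecutive groups and, crucially, a private pendant neighbor $v_i$ attached to each $u_i$ (these pendants make $\textsf{OPT}$ a perfect matching). Deadlines are processed in the order $u_1, u_2, \dots$, so an unmatched $u_i$ at its deadline compares the rank of its pendant against the ranks of the still-unmatched vertices of the next group, and with constant probability it ``steals'' a vertex of $U_{i+1}$, killing two pendants' chances at once. Tracking $X_i$, the number of unmatched vertices of $U_i$ at the start of phase $i$, gives a Markov chain whose fluid limit as $k\to\infty$ satisfies $\E[X_{i+1}] = k\, e^{-X_i/k}$; with a concentration argument the normalized value converges to the fixed point of $x=e^{-x}$, i.e.\ $\approx 0.56714$, which is the competitive ratio. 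Without identifying this layered mechanism and carrying out the recurrence-plus-limit analysis (or an equivalent one), the proposal does not establish the stated bound.
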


\subsection{Other Related Works}

An alternative generalization of \ranking to general graphs has been considered for the problem of oblivious matching~\cite{soda/ChanCWZ14, esa/AbolhassaniCCEH16}: Pick a permutation of vertices uniformly at random; then, go over the vertices one by one according to the permutation; for each unmatched vertex, match it to the first unmatched neighbor according to the same permutation.
Chan et al.~\cite{soda/ChanCWZ14} showed that it is a $0.523$-approximation algorithm, improving the previous $(\frac{1}{2}+\frac{1}{400000})$-approximation by a greedy algorithm~\cite{rsa/Aronson1995}.
Abolhassani et al.~\cite{esa/AbolhassaniCCEH16} improved the ratio to $0.526$ and analyzed the weighted case.
We stress that the alternative generalization is an offline algorithm because it needs to consider the vertices in random order, while our generalization is online.
Nevertheless, our result for general graphs can be viewed as a $0.5211$-approximation in the oblivious matching problem.
We believe the new algorithm and analysis in this paper, in particular, the new charging mechanic of compensations, will find further applications in oblivious matching and other matching problems that consider general graphs. 

Another online matching model in the literature considers online edge arrivals, upon which the algorithm must immediately decide whether to add the edge to the matching.
McGregor~\cite{approx/McGregor05} gave a deterministic $\frac{1}{3+2\sqrt{2}}\approx 0.1715$-competitive algorithm in the edge-weighted preemptive setting. 
This ratio is later shown to be tight for deterministic algorithms \cite{icalp/Varadaraja11}. 
Epstein et al.~\cite{stacs/EpsteinLSW13} designed a $\frac{1}{5.356}\approx 0.1867$-competitive randomized algorithm and proved a hardness of $\frac{1}{1+\ln 2}\approx 0.59$.
Chiplunkar et al.~\cite{esa/ChiplunkarTV15} considered a restricted setting where the input graph is an unweighted growing tree and gave a $\frac{15}{28}$-competitive algorithm. 
Finally, Buchbinder et al.~\cite{esa/BuchbinderST17} introduced an optimal $\frac{5}{9}$-competitive algorithm for unweighted forests.

Wang and Wong~\cite{icalp/WangW15} considered a more restrictive model of online bipartite matching with both sides of vertices arriving online:
A vertex can only actively match other vertices at its arrival; if it fails to match at its arrival, it may still get matched passively by other vertices later.
They showed a $0.526$-competitive algorithm for a fractional version of the problem.
We argue that the model in this paper better captures our aforementioned motivating applications.

\section{Preliminaries} \label{sec:preliminary}

We consider the standard competitive analysis against an oblivious adversary.
The competitive ratio of an algorithm is the ratio between the expected size of the matching by the algorithm over its own random bits to the size of the maximum matching of the underlying graph in hindsight.
The adversary must in advance choose an instance, i.e., the underlying graph as well as arrivals and deadlines of vertices, without observing the random bits used by the algorithm.
Otherwise, no algorithm can get any competitive ratio better than $0.5$.

\subsection{Ranking Algorithm and Some Basic Properties}

See Algorithm~\ref{alg:ranking} for a formal definition of \ranking in our model.
Let $M(\vecy)$ denote the matching produced when \ranking is run with $\vecy$ as the ranks.

\begin{algorithm}
	\caption{The \ranking Algorithm}
	\label{alg:ranking}
	\begin{algorithmic}
		\State (1) {a vertex $v$ arrives:}
		\State \phantom{(1)} pick $y_v \in [0,1)$ uniformly at random.
		\State (2) {a vertex $v$'s deadline reaches:}
		\State \phantom{(2)} \textbf{if} $v$ is unmatched,
		\State \phantom{(2) \textbf{if}} let $N(v)$ be the set of unmatched neighbors of $v$.
		\State \phantom{(2) \textbf{if}} \textbf{if} $N(v)=\emptyset$, \textbf{then} $v$ remains unmatched;
		\State \phantom{(2) \textbf{if}} \textbf{else} match $v$ to $\arg\min_{u\in N(v)} y_u$.
	\end{algorithmic}
\end{algorithm}

Recall the following definition of active/passive vertices. 
In the one-sided online model, only online vertices can be active and only offline vertices can be passive.
In our fully online model, however, vertices can in general be of either types depending on the random ranks of the vertices.

\begin{definition}[Active, Passive]
	For any edge $(u, v)$ added to the matching by \ranking at $u$'s deadline, we say that $u$ is active and $v$ is passive.
\end{definition}

The proofs of the following lemmas are deferred to Appendix~\ref{appendix:preli}.
The first lemma is a variant of the monotonicity property in previous works, incorporating the notions of active and passive vertices in our fully online model.

\begin{lemma}[Monotonicity]\label{lemma:monotonicity}
	For any rank vector $\vecy$ and any vertex $u$, we have
	\smallskip
	\begin{compactenum}
		\item if $u$ is active/unmatched, then $M(\vecy)$ remains the same when $y_u$ increases;
		\item if $u$ is passive, then $u$ remains passive when $y_u$ decreases.
	\end{compactenum}
\end{lemma}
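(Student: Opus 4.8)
The plan is to compare, by induction on the sequence of arrival and deadline events, the run of \ranking on the given rank vector with its run on the perturbed vector that differs from it only in the coordinate of $u$. Two observations drive everything. First, arrival events merely assign ranks, so they can never make the two runs diverge. Second, a vertex ceases to be a candidate partner once its own deadline has passed, so the coordinate $y_u$ is consulted only at deadline events occurring no later than $u$'s deadline, and even then only when $u$ happens to be an unmatched neighbor of the vertex whose deadline it is; at $u$'s own deadline the decision $\arg\min_{x\in N(u)} y_x$ does not involve $y_u$ at all.

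For part (1), let $\vecy'$ agree with $\vecy$ except that $y'_u \ge y_u$. Since $u$ is active or unmatched under $\vecy$, it is not matched at any event strictly before its own deadline, so it is never the selected neighbor at such an event. I claim the two runs make identical decisions throughout. Assuming their partial matchings agree up to the deadline of a vertex $w$ occurring before $u$'s deadline: the candidate set $N(w)$ is the same in both runs; if $u\notin N(w)$ the coordinate $y_u$ plays no role; and if $u\in N(w)$ then the $\vecy$-minimizer $x^\ast\ne u$ over $N(w)$ satisfies $y_{x^\ast}<y_u\le y'_u$, so $x^\ast$ still minimizes under $\vecy'$ and $w$ repeats its decision. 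At $u$'s own deadline the decision is independent of $y_u$, and after $u$'s deadline $y_u$ is never consulted; hence $M(\vecy')=M(\vecy)$.

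For part (2), let $\vecy'$ agree with $\vecy$ except that $y'_u \le y_u$, and suppose $u$ is passive under $\vecy$, i.e.\ matched at some event strictly before its own deadline. If the two runs never diverge we are done, so let $t_0$ be the first diverging event. By the reasoning above it is the deadline of some vertex $w$ occurring before $u$'s deadline, at which $u$ is an unmatched candidate and the smaller value $y'_u$ makes $w$ prefer $u$: thus $w$ matches $u$ in the $\vecy'$-run and matches some other neighbor $a\ne u$ in the $\vecy$-run, and right after $t_0$ the symmetric difference of the two partial matchings is exactly the alternating path $u$--$w$--$a$, with $uw$ present only in the $\vecy'$-run and $a$ matched only in the $\vecy$-run. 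I would then maintain, as $t$ ranges over the remaining events, the invariant that $M_t(\vecy)\triangle M_t(\vecy')$ is at most one alternating path whose first edge (incident to an endpoint $q_0$) belongs to the $\vecy'$-run, and that $u$ is never the unique ``hole'', i.e.\ $u$ is matched in $M_t(\vecy')$ whenever it is matched in $M_t(\vecy)$. Granting this, take $t$ to be the moment just before $u$'s deadline: since $u$ is passive under $\vecy$ it is matched in $M_t(\vecy)$, hence in $M_t(\vecy')$, so $u$ is passive under $\vecy'$ as well.

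The main obstacle is verifying that this invariant survives each deadline event. One must case-split according to whether the vertex reaching its deadline lies on the current difference path or off it, whether it is currently matched in zero, one, or both runs, and whether the value $y'_u$ causes it to pick a different neighbor than in the other run; in each case one checks that the path is either unchanged, extended or shortened at its non-$q_0$ end, or ``closes up'' (the former $q_0$ endpoint becomes an interior vertex matched in both runs). The crucial structural feature is that the $q_0$ end of the path stays pinned to a vertex matched in the $\vecy'$-run but not the $\vecy$-run, and that lowering $y_u$ can only ever make $u$ a more desirable partner; together these prevent $u$ from ever becoming the hole once it is matched in the $\vecy$-run. This is a delicate but routine bookkeeping argument, more involved than the corresponding monotonicity proofs in the one-sided online model because in the fully online model a vertex may switch between matching actively and matching passively as its rank changes.
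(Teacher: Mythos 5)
Your proof of part (1) is correct and is essentially the paper's argument: since $u$ is never the selected partner at any deadline preceding its own, increasing $y_u$ changes no decision before $u$'s deadline, the decision at $u$'s deadline ignores $y_u$, and $y_u$ is irrelevant afterwards, so $M(\vecy)$ is unchanged.

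For part (2) you have taken a much longer road than necessary, and the part you defer is exactly where your write-up is incomplete. The paper disposes of (2) in two lines as the contrapositive of (1): if $u$ were active or unmatched at the lowered rank $y'_u$, then by (1) raising the rank back to $y_u$ would leave the matching (hence $u$'s status) unchanged, contradicting that $u$ is passive at $y_u$. You instead launch a direct alternating-path argument, and the invariant maintenance you describe (``delicate but routine bookkeeping'') is precisely the step you never verify; as written, that is a gap. The twist is that the gap sits inside a detour you do not need: your own identification of the first diverging event $t_0$ already finishes the proof. At $t_0$ the two runs have identical partial matchings and identical ranks except at $u$, so the only way the decision of the deadline vertex $w$ can differ is that $u$ is an unmatched candidate and the lowered rank $y'_u$ makes $w$ pick $u$; since $w$'s deadline precedes $u$'s, this means $u$ is matched \emph{passively} in the $\vecy'$-run at $t_0$, which is exactly the conclusion of part (2) --- nothing after $t_0$ matters, because the claim is only about $u$'s status, not about the rest of $M(\vecy')$, and once matched $u$ stays matched. (And if no divergence occurs before the event at which $u$ is matched in the $\vecy$-run, the two runs agree up to that event and $u$ is passive in both.) So either drop the invariant entirely and stop at $t_0$, or, better, just invoke part (1) as the paper does; tracking the full symmetric difference is only needed for statements like Lemma~\ref{lemma:alternating_path}, not here.
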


Let $\vecmv[u] \in [0,1)^{V\setminus\{u\}}$ be the ranks of all vertices but $u$, i.e., $\vecmv[u]$ is obtained by removing the $u$-th entry in $\vecy$. 
Let $M(\vecmv[u])$ denote the matching produced by \ranking on $G-\{u\}$, i.e., the subgraph with vertex $u$ removed, with $\vecmv[u]$ as the ranks.

For ease of notation, for any $y\in[0,1]$, we use $y^\text{-}$ to denote a value that is arbitrarily close to, but smaller than $y$.
For example, our arguments consider functions discontinuous at $1$ and use $f(1^\text{-})$ to denote the limit of $f(x)$ as $x$ goes to $1$ from below.
We also consider the matching w.r.t.\ ranks $(y_u = \theta^\text{-}, \vecmv[u])$ to avoid confusions in marginal cases where ranks $(y_u = \theta, \vecmv[u])$ need tie-breaking.

By Lemma~\ref{lemma:monotonicity}, we can uniquely define the following marginal rank for every vertex.

\begin{definition}[Marginal Rank]
	For any $u$ and any ranks $\vecmv[u]$ of other vertices, the {marginal rank} $\theta$ of $u$ with respect to $\vecmv[u]$ is the largest value such that $u$ is passive in $M(y_u=\theta^\text{-}, \vecmv[u])$.
\end{definition}

Note that a vertex may still match another vertex (actively) when its rank is below the marginal rank in our fully online model.
Nevertheless, it is consistent with the previous definition in the one-sided online model that concerns offline vertices, which cannot match actively.

\begin{lemma}[Unmatched Neighbor]\label{lemma:unmatched_neighbor}
	Suppose $v$ has marginal rank $\theta < 1$ with respect to $\vecmv$.
	Then, for any neighbor $u$ of $v$ that has an earlier deadline than $v$, and for any rank vector $(y_v=y, \vecmv)$ with $y \in [\theta,1)$, $u$ either is passive, or actively matches a vertex with rank at most $\theta$.
\end{lemma}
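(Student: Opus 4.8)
The plan is to fix $y \in [\theta,1)$, set $M^\ast = M(y_v = y, \vecmv)$, pick a neighbor $u$ of $v$ whose deadline precedes $v$'s, and assume $u$ is active in $M^\ast$ (otherwise there is nothing to prove); say $u$ matches $u'$ at its own deadline and write $\alpha = y_{u'}$. The whole lemma then reduces to proving $\alpha \le \theta$.

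First I would extract the structural consequence of the marginal-rank hypothesis: for $y > \theta$ the vertex $v$ is \emph{not} passive in $M^\ast$. Indeed, by the second part of the monotonicity lemma (Lemma~\ref{lemma:monotonicity}) passivity of $v$ is downward closed in $y_v$, so if $v$ were passive at rank $y>\theta$ it would also be passive at rank $y^{-}$, contradicting the maximality in the definition of the marginal rank $\theta$. Hence $v$ is active or unmatched in $M^\ast$, so $v$ is never matched strictly before its own deadline; since $u$'s deadline is earlier (and the edge $(u,v)$, being revealed before $u$'s deadline, forces $v$ to have arrived by then), $v$ is an unmatched neighbor of $u$ at $u$'s deadline. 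As $u$ is active, at its deadline it matches the unmatched neighbor of smallest rank, and $v$ is one of the candidates; therefore $\alpha = y_{u'} \le y_v = y$.

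To upgrade this from $\alpha \le y$ to $\alpha \le \theta$, I would argue by contradiction: suppose $\alpha > \theta$, and rerun \ranking on the perturbed ranks $(y_v = \alpha^{-}, \vecmv)$. Since $\alpha^{-} \in (\theta,1)$, the argument of the previous paragraph again shows $v$ is active or unmatched at this rank, and by the first part of Lemma~\ref{lemma:monotonicity} raising $y_v$ from $\alpha^{-}$ all the way up to $y$ (note $y \ge \alpha > \alpha^{-}$) changes nothing. In fact the monotonicity lemma freezes not just the output matching but the entire execution of \ranking as long as $v$ stays non-passive — this is immediate from its inductive proof, since a vertex $z$ with $v$ among its available neighbors can never pick $v$ without making $v$ passive, hence $z$ makes the identical choice at every rank in the range $[\alpha^{-},y]$. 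Consequently, in the run with $y_v = \alpha^{-}$ the vertex $u$ is still active and still matches $u'$ at its deadline, with $v$ still an unmatched neighbor of $u$ at that moment; but now $y_v = \alpha^{-} < \alpha = y_{u'}$, so the unmatched neighbor of $u$ of smallest rank is $v$ (or something ranked even lower), not $u'$. This contradicts $u$'s choice, so $\alpha \le \theta$, completing the proof for $y \in (\theta,1)$.

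The crux — and the step I expect to be the main obstacle — is precisely this ``upgrade'': the direct argument only yields $\alpha \le y$, and closing the gap to $\theta$ forces one to slide $v$'s rank down to the critical value $\alpha^{-}$ and transport conclusions between two realizations of the rank vector, which is only legitimate because monotonicity freezes the \emph{execution}, not merely the matching, while $v$ is non-passive. The remaining loose end is the boundary value $y=\theta$, where $v$ may still be passive (one checks that $v$ passive at rank $y$ forces $y=\theta$): this degenerate case, together with rank ties at the critical value, is exactly what the $(\cdot)^{-}$ convention introduced in the preliminaries is meant to absorb, so I would dispatch it with a short remark invoking that convention rather than a separate argument.
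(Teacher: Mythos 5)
Your proof is correct, and it uses exactly the two ingredients the paper uses: (i) above the marginal rank, $v$ is active or unmatched (downward-closedness of passivity from Lemma~\ref{lemma:monotonicity}(2) plus maximality of $\theta$), hence still unmatched at $u$'s earlier deadline, so an active $u$ must pick an unmatched neighbor of rank at most $y_v$; and (ii) Lemma~\ref{lemma:monotonicity}(1) to transfer conclusions between different values of $y_v$ above $\theta$. The difference is only where the argument is anchored. The paper anchors at $y_v=\theta$: there $v$ is active or unmatched (modulo the $\theta^\text{-}$ tie-breaking convention), so at $u$'s deadline the rank-$\theta$ vertex $v$ is available and an active $u$ matches a vertex of rank at most $\theta$; Lemma~\ref{lemma:monotonicity}(1) then says the matching is literally unchanged as $y_v$ increases through $[\theta,1)$, so this conclusion holds for every $y\in[\theta,1)$ simultaneously. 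You anchor at $y_v=y$, which only gives $\alpha\le y$, and close the gap by a contradiction at the auxiliary rank $\alpha^\text{-}$, transported up to $y$ via the same monotonicity together with the execution-level identity you correctly extract from its inductive proof. So the ``upgrade'' you single out as the crux is an artifact of the anchoring choice rather than an intrinsic obstacle: evaluating at the bottom of the interval makes it vanish, at the price of the same convention-level point about non-passivity at rank exactly $\theta$ that you defer to for the boundary case $y=\theta$. Both arguments are sound; the paper's is shorter and direct, yours is a correct contrapositive rearrangement of the same facts.
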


It is well known that removing a matched vertex from the graph results in an alternating path in the matching produced by \ranking.
The next lemma provides a more fine-grained characterization.

\begin{lemma}[Alternating Path]\label{lemma:alternating_path}
	If $u$ is matched in $\vecy$, then the symmetric difference between the matchings $M(\vecy)$ and $M(\vecmv[u])$ is an alternating path $(u_0, u_1,\ldots,u_l)$ with $u_0=u$ such that
	\medskip
	\begin{compactenum}
		\item for all even $i<l$, we have $(u_i,u_{i+1})\in M(\vecy)$; for all odd $i<l$, we have $(u_i,u_{i+1})\in M(\vecmv[u])$;
		\item from $M(\vecy)$ to $M(\vecmv[u])$, vertices $\{u_1,u_3,\ldots\}$ get worse, vertices $\{u_2,u_4,\ldots\}$ get better.
	\end{compactenum}
	\medskip
	Here, passive is better than active, which is in turns better than unmatched. 
	Conditioned on being passive, matching to a vertex with earlier deadline is better.
	Conditioned on being active, matching to a vertex with higher rank is better.
\end{lemma}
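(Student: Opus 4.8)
The plan is to run \ranking twice in parallel — once on $G$ with ranks $\vecy$, and once on $G - \{u\}$ with ranks $\vecmv[u]$ — and track the first time step at which the two executions diverge. Since $u$ is matched in $M(\vecy)$, say to a vertex $u_1$, the two runs behave identically up to the deadline at which the edge $(u,u_1)$ or $(u_1,u)$ enters $M(\vecy)$; in $G-\{u\}$ that decision is made differently because $u$ is absent. I would set $u_0 = u$ and $u_1$ its partner in $M(\vecy)$, and then argue inductively that the symmetric difference, restricted to the vertices whose matching status or partner has changed, forms a simple path: each time a vertex $u_i$ "loses" its $M(\vecy)$-partner (or gains one) in the alternate run, it is because exactly one later deadline re-routes, creating the next edge $(u_i, u_{i+1})$ of the path, and alternation between $M(\vecy)$ and $M(\vecmv[u])$ edges is forced by construction. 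This gives item (1) and the fact that the symmetric difference is a single alternating path.

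For item (2), I would exploit Lemma~\ref{lemma:monotonicity} (Monotonicity) together with a careful case analysis at the deadlines along the path. The odd-indexed vertices $u_1, u_3, \ldots$ are the ones whose $M(\vecy)$-partner (with an earlier deadline, namely $u_{i-1}$) is "stolen away" or never forms in the alternate run, so at their own deadline they face a weakly worse set of available neighbors; this forces them to get worse in the ordering (passive $>$ active $>$ unmatched, refined by deadline order and rank as stated). Conversely, the even-indexed vertices $u_2, u_4, \ldots$ become available to an earlier-deadline neighbor $u_{i-1}$ that was previously matched elsewhere, so they can only improve. The delicate point is to verify that the comparison respects the three-tier ordering \emph{and} its refinements — e.g.\ a vertex staying passive but switching to an earlier-deadline partner still counts as "getting worse" when that is the direction the path dictates — which is exactly where the marginal-rank machinery and Lemma~\ref{lemma:unmatched_neighbor} are invoked to rule out the contradictory configurations.

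The main obstacle I anticipate is the bookkeeping at a deadline where a vertex $u_i$ on the path is \emph{active} in one run and \emph{passive} (or has a different active partner) in the other: one must show that the ``displaced'' neighbor is precisely $u_{i+1}$ and that no second branch of the symmetric difference sprouts there, i.e.\ that the path does not fork. This is handled by the standard argument that \ranking's decision at a deadline depends only on the current unmatched-neighbor set, so a single change propagates as a single chain; but making this rigorous in the fully online model — where both endpoints of an edge can in principle act, and ``active vs.\ passive'' is rank-dependent — requires threading the Monotonicity lemma through each step and being careful about the $\theta^\text{-}$ tie-breaking convention. Once the no-forking claim is established, items (1) and (2) follow by the induction sketched above.
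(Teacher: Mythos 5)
Your plan (run \ranking in parallel on $G$ with $\vecy$ and on $G-\{u\}$ with $\vecmv[u]$, locate the first divergence, and let the change propagate) is a legitimate alternative strategy, but as written the two claims that constitute the lemma are asserted rather than proven. First, the ``no forking'' property --- that after the divergence the two executions differ, at every moment, in the status of exactly one vertex, so the symmetric difference is a single alternating path --- is precisely statement (1); you defer it to ``the standard argument,'' but in the fully online model, where either endpoint of an edge may be the one acting and active/passive status is rank-dependent, this invariant is the thing that needs proof. Second, your justification of statement (2) silently assumes every odd-indexed vertex is passive in $M(\vecy)$: you speak of $u_i$'s $M(\vecy)$-partner ``with an earlier deadline, namely $u_{i-1}$,'' but $u_i$ may be the \emph{active} endpoint (already $u_1$ can actively match $u$ at $u_1$'s own deadline). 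In that case ``getting worse'' means actively matching a vertex of rank worse than $y_u$ or becoming unmatched, and the argument is different from the ``stolen partner'' picture: one must note that at $u_1$'s deadline the pool of unmatched candidates other than $u$ is unchanged, so $u_1$ picks some $u_2$ with $y_{u_2}>y_u$ (or nothing). The passive case needs yet another argument (any later match of $u_1$ is by a vertex with deadline after $u$'s, or $u_1$ turns active/unmatched). Incidentally, invoking Lemma~\ref{lemma:unmatched_neighbor} and the marginal-rank machinery is not circular here, but it is also not what carries the proof; Lemma~\ref{lemma:monotonicity} alone does not discharge the propagation bookkeeping.

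For contrast, the paper avoids the step-by-step propagation entirely by induction on the number of vertices: let $u_1$ be $u$'s partner in $M(\vecy)$ and remove \emph{both} $u$ and $u_1$, obtaining a rank vector $\vect{y}'$ on $V\setminus\{u,u_1\}$ with $M(\vecy)=M(\vect{y}')\cup\{(u,u_1)\}$. If $u_1$ is unmatched in $M(\vecmv[u])$ the lemma is immediate; otherwise $\vect{y}'$ is obtained from $\vecmv[u]$ by deleting the matched vertex $u_1$, so the induction hypothesis applied to $M(\vecmv[u])$ versus $M(\vect{y}')$ yields the path $(u_1,\ldots,u_l)$ together with the better/worse statements for $u_2,\ldots,u_l$, and the only remaining task is the short two-case check (passive vs.\ active in $M(\vecy)$) that $u_1$ itself gets worse. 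If you wish to salvage your parallel-run approach you must formalize the single-differing-vertex invariant over time and carry out the active/passive case analysis at every step; the inductive peeling of the matched edge gives the same conclusion with far less bookkeeping.
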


\subsection{Randomized Dual Fitting}

Consider the following linear program relaxation of the matching problem and its dual.
\begin{align*}
	\max: \quad & \textstyle \sum_{(u,v)\in E} x_{uv} && \qquad\qquad & \min: \quad & \textstyle\sum_{u \in V} \alpha_u\\
	\text{s.t.} \quad & \textstyle \sum_{v:(u,v)\in E} x_{uv} \leq 1 && \forall u\in V & \text{s.t.} \quad & \alpha_u + \alpha_v \geq 1 && \forall (u,v)\in E \\
	& x_{uv} \geq 0 && \forall (u,v)\in E & & \alpha_u \geq 0 && \forall u \in V
\end{align*}

It is known that the above linear program relaxation is integral for bipartite graphs, but it has a large integrality gap for general graphs (e.g., a complete graph of $3$ vertices).
Interestingly, this relaxation is sufficient for proving our positive results, even for general graphs.

Our approach builds on the randomized primal dual technique by Devanur et al.~\cite{soda/DevanurJK13}.
We believe it is more appropriate to call our analysis (for general graphs) randomized dual fitting, however, because it relies on an extra phase of adjustments to the dual variables at the end that requires full knowledge of the instance.

\paragraph{Randomized Dual Fitting.}
We set the primal variables according to the matching by \ranking, which ensures primal feasibility, and set the dual variables such that the dual objective equals the primal objective.
The dual assignment can be viewed as splitting the gain of $1$ of every matched edge among the vertices; the dual variable $\alpha_v$ for every vertex $v$ is equal to the total share it gets from all matched edges.
Given primal feasibility and equal objectives, the usual primal dual and dual fitting techniques would further seek to show approximate dual feasibility, namely, $\alpha_u + \alpha_v \ge F$ for every edge $(u, v)$ where $F$ is the target competitive ratio.
This is where the usual techniques fail and the smart insight by Devanur et al.~\cite{soda/DevanurJK13} comes to help.
Due to the intrinsic randomness of \ranking, the above primal and dual assignments are themselves random variables.
Devanur et al.~\cite{soda/DevanurJK13} observe that it suffices to have approximate dual feasibility in expectation.
For completeness, we formulate this insight as the following lemma and include a proof in Appendix~\ref{appendix:preli}.

\begin{lemma}\label{lemma:dual_fitting}
	\ranking is $F$-competitive if we can set (non-negative) dual variables such that 
	\smallskip
	\begin{compactenum}
		\item $\sum_{(u,v)\in E} x_{uv} = \sum_{u \in V} \alpha_u$; and
		\smallskip
		\item $\E_{\vecy}[\alpha_u+\alpha_v] \geq F$ for all $(u,v)\in E$.
	\end{compactenum}
\end{lemma}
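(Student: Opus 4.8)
The plan is to combine linearity of expectation with weak LP duality, exactly in the spirit of the randomized primal--dual method. First I would note that for every fixed realization of the ranks $\vecy$, the primal vector $x = x(\vecy)$ read off from the matching $M(\vecy)$ is a feasible (integral) primal solution, so $\sum_{(u,v)\in E} x_{uv} = |M(\vecy)|$; combining this with hypothesis~(1) gives $\sum_{u\in V}\alpha_u = |M(\vecy)|$ for \emph{every} $\vecy$. Taking expectations over the randomness of \ranking yields $\E_{\vecy}\bigl[\sum_{u\in V}\alpha_u\bigr] = \E_{\vecy}\bigl[|M(\vecy)|\bigr]$, which is precisely the quantity the competitive ratio is measured against.

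The second and main idea is that one must not apply weak duality to $\alpha(\vecy)$ for a single $\vecy$ --- for a fixed realization the dual constraints $\alpha_u+\alpha_v\ge 1$ may well fail --- but rather to the \emph{expected} dual. So I would define $\bar\alpha_u \eqdef \tfrac{1}{F}\,\E_{\vecy}[\alpha_u]$ for each $u\in V$. Non-negativity of the $\alpha_u$ gives $\bar\alpha_u\ge 0$, and hypothesis~(2) gives $\bar\alpha_u+\bar\alpha_v = \tfrac{1}{F}\,\E_{\vecy}[\alpha_u+\alpha_v]\ge 1$ for every edge $(u,v)\in E$, so $\bar\alpha$ is feasible for the dual LP.

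Then I would invoke weak duality: since the dual is a minimization program, any feasible dual value is at least the primal LP optimum, which in turn is at least $|M^\ast|$, the size of a maximum matching of $G$ (the indicator vector of $M^\ast$ being a feasible integral primal solution). Hence $\sum_{u\in V}\bar\alpha_u\ge |M^\ast|$, i.e., $\sum_{u\in V}\E_{\vecy}[\alpha_u]\ge F\cdot|M^\ast|$. Chaining this with the identity from the first step, $\E_{\vecy}[|M(\vecy)|] = \sum_{u\in V}\E_{\vecy}[\alpha_u] \ge F\cdot|M^\ast|$, which is exactly the statement that \ranking is $F$-competitive.

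I do not expect a genuine obstacle in this argument; the one point that must be handled carefully is the order of operations --- the expectation over $\vecy$ has to be taken \emph{before} appealing to LP duality, because for an individual realization of the ranks the dual assignment need not be feasible. That subtlety is exactly the observation of Devanur et al.\ that makes the randomized dual-fitting scheme work, so a correct writeup mainly consists of recording these steps in this order.
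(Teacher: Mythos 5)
Your proposal is correct and follows essentially the same route as the paper's own proof: both define the scaled expected duals $\tilde\alpha_u = \expect{\vecy}{\alpha_u}/F$, verify dual feasibility from hypothesis (2), and invoke weak LP duality together with hypothesis (1) to relate the expected matching size to the maximum matching. The only difference is that you spell out a few intermediate steps (primal feasibility of $x(\vecy)$ and the need to take expectations before applying duality) that the paper leaves implicit.
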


\section{Bipartite Graphs: A Warm-up}
\label{sec:bipartite}

\paragraph{Dual Assignment.}
We adopt the dual assignment by Devanur et al.~\cite{soda/DevanurJK13} and share the gain of each matched edge between its two endpoints as follows:
\begin{itemize}
	\item \emph{Gain Sharing:~} Whenever an edge $(u, v)$ is added to the matching with $u$ active and $v$ passive, let $\alpha_u = 1 - g(y_v)$ and $\alpha_v = g(y_v)$. 
	Here, $g : [0,1] \rightarrow [0,1]$ is non-decreasing with $g(1) = 1$.
\end{itemize}


\paragraph{Randomized Primal Dual Analysis.}
The previous analysis of \ranking for \obmp relies on a structural property that for any edge $(u, v)$ and any ranks $\vecmv[v]$, $u$ matches a vertex with rank no larger than $v$'s marginal rank regardless of $v$'s rank (e.g. Lemma 2.3 of~\cite{soda/DevanurJK13}).
However, in our fully online setting, the same property holds only when $u$ is active.
By introducing the notions of passive and active vertices, we show the following weaker version of the property.
It complements the basic property when $y_u$ is larger than the marginal rank  (Lemma~\ref{lemma:unmatched_neighbor}).

\begin{lemma}\label{lemma:friendly_partner_bipartite}
	Suppose $v$ has marginal rank $\theta < 1$ with respect to $\vecmv$.
	Then, for any neighbor $u$ of $v$ that has an earlier deadline than $v$, and for any rank vector $(y_v = y, \vecmv)$ with $y \in [0, \theta)$, $u$ either is passive, or matches actively to a vertex with rank at most $\theta$. 
\end{lemma}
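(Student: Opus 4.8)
The plan is to fix $\vecmv$ and $y\in[0,\theta)$, run \ranking to get $M:=M(y_v=y,\vecmv)$, and analyse the status of $u$ at the instant its deadline is reached. Since $y<\theta$, part~(2) of Lemma~\ref{lemma:monotonicity} together with the definition of marginal rank gives that $v$ is passive in $M$; write $u^\star$ for the active vertex matched to $v$, so $u^\star$'s deadline precedes $v$'s. If $u$ is already matched when its deadline is reached, it was matched by a neighbor at that neighbor's deadline and so is passive; done. If $u$ is unmatched at its deadline and $v$ is also unmatched at that moment, then $v$ (which has arrived, since all edges at $u$ are revealed before $u$'s deadline) lies in $u$'s set of unmatched neighbors, so $u$ matches actively to $\arg\min_w y_w$ over that set, a vertex of rank at most $y_v=y<\theta$; done.

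The remaining case is that $u$ is unmatched at its deadline while $v$ is already matched at that moment — necessarily to $u^\star$ (a passive vertex keeps its partner), and $u^\star$'s deadline precedes $u$'s. Here I raise $v$'s rank into the regime of Lemma~\ref{lemma:unmatched_neighbor}: fix small $\epsilon>0$ and compare $M$ with $M':=M(y_v=\theta+\epsilon,\vecmv)$. First, the two runs agree through every deadline strictly before $u^\star$'s: in $M$, $v$ is unmatched up to $u^\star$'s deadline and is never the minimum-rank candidate at an earlier deadline (else it would be matched before $u^\star$ acts), so raising $y_v$ changes nothing there. Second, at $u^\star$'s deadline the run $M'$ cannot match $u^\star$ to $v$, for that would make $v$ passive whereas $\theta+\epsilon$ is strictly above $v$'s marginal rank; hence $u^\star$ matches in $M'$ some $v''\neq v$ of rank strictly below $\theta+\epsilon$ (and above $y$, as $M$ preferred $v$). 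Finally, Lemma~\ref{lemma:unmatched_neighbor} applied to $M'$ (legitimate, since $\theta+\epsilon\in[\theta,1)$) shows $u$ is passive or actively matches a vertex of rank at most $\theta$ in $M'$.

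It then remains to transfer this conclusion from $M'$ back to $M$, and I expect this to be the technical heart of the proof. From $u^\star$'s deadline onwards, $M$ differs from $M'$ only in that $v$ has become unavailable and $v''$ available for the remaining competition (and $u^\star$'s own partner has improved); I would track this single swap with Lemma~\ref{lemma:alternating_path} applied to the two runs, showing that along the induced alternating path the status of $u$ in $M$ is no worse than in $M'$ and, crucially, that if $u$ ends up active in $M$ then its partner has rank at most $\max\{\theta,y_{v''}\}$, after which $\epsilon\to0$ finishes the argument. The delicate point is ruling out that $u$ is pushed onto an active partner of rank exceeding $\theta$; this needs both that $v''$ itself has rank at most $\theta$ (in the limit) and that in a bipartite graph the rotation moves $u$'s partner only downward in rank. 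An essentially equivalent route, which I would try alongside, is to compare $M$ directly with $M(\vecmv)$ — \ranking run on $G$ with $v$ deleted — again via Lemma~\ref{lemma:alternating_path}; the bookkeeping and the same obstacle recur. Throughout one must handle the degenerate situations where a relevant rank equals $\theta$ using the $\theta^{\text{-}}$/$y^{\text{-}}$ convention of Section~\ref{sec:preliminary}.
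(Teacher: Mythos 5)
Your first two cases are fine, but the third case --- $u$ unmatched at its deadline while $v$ is already matched to $u^\star$ --- is exactly where the lemma's content lies, and there your argument stops at a sketch: you reduce to ``transfer the conclusion from $M'=M(y_v=\theta+\epsilon,\vecmv)$ back to $M=M(y_v=y,\vecmv)$,'' flag the possibility that $u$ could be pushed onto an active partner of rank exceeding $\theta$, and leave that unresolved. That is a genuine gap, not bookkeeping: ruling out that bad outcome is the whole point of the lemma, and your route also has a structural snag --- Lemma~\ref{lemma:alternating_path} as stated compares $M(\vecy)$ with the matching after \emph{deleting} a matched vertex, not two runs with different values of $y_v$, so it cannot be applied directly to the pair $(M,M')$, and the $\epsilon$-perturbation plus tracking of $v''$ adds moving parts without closing the case.

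The missing idea is short and is how the paper closes it. First, $u$'s partner in $M(\vecmv)$ (graph with $v$ deleted) is the same as in $M(y_v=\theta,\vecmv)$: at rank $\theta$, $v$ is not passive, so no earlier-deadline vertex ever picks $v$, and deleting $v$ cannot change any decision made before $u$'s (earlier) deadline; hence Lemma~\ref{lemma:unmatched_neighbor} already pins down $u$'s status in $M(\vecmv)$, with no $\epsilon$ needed. Second, since $y<\theta$, $v$ is passive (matched) in $M(\vecy)$, so Lemma~\ref{lemma:alternating_path} applies to $M(\vecy)$ versus $M(\vecmv)$ and gives an alternating path starting at $v$; because the graph is bipartite, if $u$ lies on this path it does so at odd distance from $v$, and item~2 of that lemma says the odd-position vertices get \emph{worse} when $v$ is removed, i.e.\ $u$ is at least as well off in $M(\vecy)$ as in $M(\vecmv)$. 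Under the lemma's ordering (passive $\succ$ active-to-higher-rank $\succ$ active-to-lower-rank $\succ$ unmatched), ``$u$ is passive or actively matches a vertex of rank at most $\theta$'' is upward closed, so the property transfers from $M(\vecmv)$ to $M(\vecy)$; if $u$ is not on the path its partner is unchanged and the claim is immediate. This parity-plus-monotone-improvement step is precisely the ``delicate point'' you identified but did not prove, so as it stands the proposal does not constitute a proof.
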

\begin{proof}
	We consider the matchings in 3 sets of ranks $\vecy = (y_v = y, \vecmv)$, $\vecmv[v]$ and $\vec{y}_{\theta} = (y_v=\theta, \vecmv)$. First, we show that $u$ matches the same neighbor in $M(\vecy_{\theta})$ and $M(\vecmv)$. Since $v$ is unmatched or active in $M(\vecy_{\theta})$, removing $v$ cannot affect vertices with earlier deadlines. In particular, $u$ would match the same neighbor.
	
	Consider the alternating path from $M(\vecy)$ to $M(\vecmv)$. 
	If $u$ is not in the alternating path, then $u$ matches the same neighbor in all $M(\vecy)$, $M(\vecmv[v])$ and $M(\vecy_{\theta})$. 
	Otherwise, $u$ appears in the alternating path with an odd distance from $v$ since the graph is bipartite.
	Hence, by Lemma~\ref{lemma:alternating_path}, $u$ is better in $M(\vecy)$ than in $M(\vecmv)$ and, thus, is better than in $M(\vecy_{\theta})$.
	In both cases, $u$ is passive or actively matches a vertex with rank $\le \theta$ in $M(\vecy)$, since this holds for $u$ in $M(\vecy_{\theta})$ (by Lemma~\ref{lemma:unmatched_neighbor}).
\end{proof}

Recall that for any edge $(u, v)$ we will consider the expected gain of $\alpha_u$ and $\alpha_v$ combined over the randomness of the ranks of both $u$ and $v$.
First, let us fix the rank of $u$, the vertex with an earlier deadline, and consider the expected gain over the randomness of $v$'s rank alone.

\begin{lemma}\label{lemma:bipartite_main}
	For any neighbor $u$ of $v$ that has an earlier deadline than $v$, and for any $\vecmv$, we have
	\[
	\textstyle
	\expect{y_v}{\alpha_u + \alpha_v} \ge f(y_u) \eqdef \min_{\theta\in[0,1]} \left\{ \int_0^\theta g(y_v)dy_v + \min\{1-g(\theta),g(y_u)\} \right\}.
	\]
\end{lemma}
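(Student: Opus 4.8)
The plan is to fix the rank $y_u$ of $u$ (the vertex with the earlier deadline) and $\vecmv$, let $\theta$ be the marginal rank of $v$ with respect to $\vecmv$, and analyze the gain $\alpha_u + \alpha_v$ as a function of $v$'s rank $y_v$, splitting into the two regimes $y_v < \theta$ and $y_v \ge \theta$. The guiding principle is that both $\alpha_u$ and $\alpha_v$ only help us, so it suffices to lower bound each piece by something we are willing to discard. For the regime $y_v \ge \theta$: here $v$ is unmatched or active in $M(\vecy)$ by the definition of the marginal rank, so $\alpha_v \ge 0$; and by Lemma~\ref{lemma:unmatched_neighbor}, $u$ is either passive (so $\alpha_u = g(y_w)$ for its passive partner $w$, which is $\ge 0$ — not obviously useful) or $u$ actively matches a vertex of rank at most $\theta$, giving $\alpha_u = 1 - g(\text{that rank}) \ge 1 - g(\theta)$ by monotonicity of $g$. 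So in this regime I want to argue $\alpha_u + \alpha_v \ge 1 - g(\theta)$; the passive case needs a separate word, handled below.

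For the regime $y_v < \theta$: by Lemma~\ref{lemma:friendly_partner_bipartite}, $u$ is either passive or actively matches a vertex of rank at most $\theta$. The key sub-claim is that in this regime $v$ itself is passive, and its passive partner has rank exactly $y_u$ in the worst case — more precisely, I want that $\alpha_v \ge g(y_v)$ whenever... hmm, actually the cleaner route used in the Devanur–Jain–Kale style argument is: when $y_v < \theta$, $v$ is matched (passively, to some active neighbor $u'$ with $u'$ having an earlier deadline), so $\alpha_v = g(y_v)$. Integrating over $y_v \in [0,\theta)$ contributes $\int_0^\theta g(y_v)\,dy_v$. The remaining mass $y_v \in [\theta,1)$ contributes, per the previous paragraph, at least $(1-\theta)\cdot(1 - g(\theta))$ from $\alpha_u$ when $u$ is active. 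Combining, $\expect{y_v}{\alpha_u+\alpha_v} \ge \int_0^\theta g(y_v)\,dy_v + (1-\theta)(1-g(\theta))$, and since this must hold for the actual $\theta$, taking the min over $\theta \in [0,1]$ gives a bound of the stated shape — except the stated $f(y_u)$ has $\min\{1 - g(\theta), g(y_u)\}$, not $(1-\theta)(1-g(\theta))$, so I have the accounting slightly off and need to insert $y_u$.

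Here is where $y_u$ enters and where the passive case of $u$ gets absorbed. The honest bookkeeping: for $y_v \in [\theta, 1)$, if $u$ is active it contributes $\alpha_u \ge 1 - g(\theta)$; if $u$ is passive, its partner is some active $w$, and I claim one can always guarantee $\alpha_u \ge g(y_u)$ in that case — intuitively because $u$'s passive partner has rank at most $y_u$ (otherwise \ranking at $w$'s or $u$'s deadline would have preferred... ), no — the right statement, matching the one-sided analysis, is that when $u$ is passive its share is $g(\text{partner's rank})$ and monotonicity forces partner's rank $\le y_u$ only via a dedicated argument; alternatively, one simply notes $u$ being passive is a rank-independent-in-$y_v$ event over part of $[\theta,1)$ and splits the integral accordingly, so that over the whole interval $[\theta,1)$ one gets at least $\min\{1 - g(\theta), g(y_u)\}$ rather than the naive product. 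Thus the $\min$ with $g(y_u)$ is exactly the term that tracks the ``$u$ is passive'' contingency. The main obstacle I anticipate is precisely this: cleanly showing that whenever $u$ is passive (for $y_v \ge \theta$) we recover a gain of at least $g(y_u)$ — this is the place the fully online model genuinely departs from \cite{soda/DevanurJK13}, since $u$ is no longer a fixed-side offline vertex, and it is the reason the bound degrades from $1 - \tfrac1e$ toward $\tfrac12$. Everything else (the integral over $y_v < \theta$, monotonicity of $g$, the final minimization over $\theta$) is routine once that step is in place.
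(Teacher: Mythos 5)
There is a genuine gap, and it comes from two related misreadings. First, your self-identified ``main obstacle'' --- showing that when $u$ is matched passively one can still credit $\alpha_u \ge g(y_u)$ --- is not an obstacle at all: under the paper's gain-sharing rule the \emph{passive} endpoint of a matched edge receives $g$ of \emph{its own} rank (the active endpoint gets $1-g(\text{passive partner's rank})$), so $u$ being passive immediately gives $\alpha_u = g(y_u)$ with no dedicated argument about the partner's rank. Your worry arises from attributing $g(\text{partner's rank})$ to the passive vertex, which is not the dual assignment used here.

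Second, your accounting of $\alpha_u$ is too lossy and, as you note yourself, does not produce the claimed bound. You credit $\alpha_u$ only on the event $y_v \in [\theta,1)$ and only when $u$ is active, arriving at $\int_0^\theta g(y_v)\,dy_v + (1-\theta)(1-g(\theta))$, and the proposed repair (``split the integral accordingly'') is never carried out. The point of Lemma~\ref{lemma:friendly_partner_bipartite} is precisely that the dichotomy ``$u$ is passive, or $u$ actively matches a vertex of rank at most $\theta$'' holds also for $y_v < \theta$, while Lemma~\ref{lemma:unmatched_neighbor} gives it for $y_v \ge \theta$; together they give the pointwise bound $\alpha_u \ge \min\{1-g(\theta),\, g(y_u)\}$ for \emph{every} realization of $y_v$ (using $\alpha_u = g(y_u)$ in the passive case and $\alpha_u \ge 1-g(\theta)$ by monotonicity of $g$ in the active case). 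Hence $\expect{y_v}{\alpha_u} \ge \min\{1-g(\theta), g(y_u)\}$ with coefficient $1$, not $(1-\theta)$, and adding the passive contribution $\expect{y_v}{\alpha_v \cdot \idr{y_v<\theta}} = \int_0^\theta g(y_v)\,dy_v$ (which you do have right, via monotonicity and the definition of the marginal rank) and minimizing over $\theta$ yields exactly $f(y_u)$. So the skeleton of your argument is the right one, but the step that makes the $\min\{1-g(\theta),g(y_u)\}$ term appear with full weight --- applying both structural lemmas to bound $\alpha_u$ for all $y_v$, and reading the passive share correctly --- is missing.
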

\begin{proof}
	Let $\theta$ be the marginal rank of $v$ with respect to ranks $\vecmv$. By definition, $v$ is passive and gets $g(y_v)$ when $y_v < \theta$, i.e. $\expect{y_v}{\alpha_v \cdot \idr{y_v < \theta}} = \int_{0}^{\theta} g(y_v) dy_v$. By Lemma~\ref{lemma:unmatched_neighbor} and~\ref{lemma:friendly_partner_bipartite}, $\expect{y_v}{\alpha_u} \ge \min\{1-g(\theta), g(y_u)\}$. Adding them together and taking the minimum over all possible $\theta$'s concludes the statement.
\end{proof}

It is worthwhile to make a comparison with a similar claim in the previous analysis  by Devanur et al.~\cite{soda/DevanurJK13} for \obmp:
\[
\textstyle
\expect{y_v}{\alpha_u + \alpha_v} \ge \min_{\theta\in[0,1]} \left\{ \int_0^\theta g(y_v)dy_v + 1-g(\theta) \right\} ~,
\]
where $u$ is an online vertex and $v$ is an offline vertex.
As we have discussed in the introduction, for every edge in our model, the endpoint with an earlier deadline plays a similar role as the online vertex in the previous one-sided online model since the edge can only be added to the matching as a result of this endpoint's matching decision.
In this sense, the bounds are indeed very similar, except for the last term, where the previous bound simply has $1 - g(\theta)$ while our bound has the smaller of $1 - g(\theta)$ and $g(y_u)$.

We interpret Lemma~\ref{lemma:bipartite_main} as follows.
It recovers the previous bound when the rank of $u$ is large ($1 - g(\theta) \le g(y_u)$), which roughly corresponds to the case when $u$ is active (or unmatched) and the previous structural property holds.
When the rank of $u$ is small ($1 - g(\theta) > g(y_u)$), which roughly corresponds to the case when $u$ is passive, it still provides some weaker lower bound on the expected gains of the two endpoints.
The weaker bound, however, is at most $0.5$ in the worst case:
the RHS becomes $\min \{1 - g(0), g(0) \} \le 0.5$ for $\theta = y_u = 0$.
Hence, it is crucial that we take expectation over the randomness of $u$'s rank as well, effectively amortizing the cases when $u$ is active and when it is passive.
This idea carries over to general graphs.

\bigskip

\begin{proofof}{Theorem~\ref{th:bipartite}}
	Let $g(x) = e^{x-1}$, where $x\in[0,1]$. For all $(u,v) \in E$, by Lemma~\ref{lemma:bipartite_main},
	\[
	\textstyle
	\expect{\vecy}{\alpha_u+\alpha_v} = \expect{\vecmv}{\expect{y_v}{\alpha_u+\alpha_v}} \ge \expect{\vecmv}{f(y_u)} = \int_{0}^{1} f(y_u) dy_u.
	\]
	
	Observe that 
	\[
	\textstyle
	\int_0^\theta g(z)dz + 1-g(\theta) = e^{\theta-1}-\frac{1}{e}+1-e^{\theta-1}=1-\frac{1}{e}
	\]
	and $\int_0^\theta g(z)dz + g(x) \geq g(x)$ for all $\theta\in[0,1]$.
	We have $f(x)\geq \min\{g(x),1-\frac{1}{e}\}$, which implies (let $\theta = \ln(e-1)$ s.t. $g(\theta) = 1-\frac{1}{e}$)
	\[
	\textstyle
	\int_{0}^{1} f(x) dx \geq \int_{0}^{\theta}e^{x-1}dx + (1-\theta)(1-\frac{1}{e}) = \frac{e-2}{e}+(1-\ln(e-1))(1-\frac{1}{e}) \approx 0.55418.
	\]
	By Lemma~\ref{lemma:dual_fitting}, we conclude that \ranking is at least $0.5541$-competitive.
\end{proofof}

We are aware of a different function $g(y) = \min\{1,e^{y-1}+0.0128\}$ that gives a (very slightly) better competitive ratio $0.5547$.
For convenience of presentation we only fix a simple form here.

\section{General Graphs: An Overview} \label{sec:general}

\paragraph{Dual Assignment.}
Moving from bipartite graphs to general graphs, even the weaker version of the structural property, i.e., Lemma~\ref{lemma:friendly_partner_bipartite}, ceases to hold.
Consider an edge $(u, v)$ with $u$'s deadline being earlier.
It is possible that decreasing $y_v$ leads to a change of $u$'s status from matched to unmatched in a non-bipartite graph.
As a result, the simple gain sharing rule in the previous analysis on the bipartite case no longer gives any bound strictly better than $0.5$.

To handle general graphs, we design a novel charging mechanic on top of the gain sharing rule between the endpoints of matched edges.
First, we introduce the following notion of \emph{victim}.

\begin{definition}[Victim]
	For any ranks $\vecy$ and any active vertex $w$, $v$ is $w$'s victim if
	\smallskip
	\begin{compactitem}
		\item $v$ is an unmatched neighbor of $w$;
		\item $v$ is matched in $M(\vecmv[w])$.
	\end{compactitem}
\end{definition}

Observe that removing $w$ results in an alternating path (Lemma~\ref{lemma:alternating_path}) and, thus, at most one vertex changes from unmatched to matched. 
Hence, each active vertex has at most one victim.

Consider the following two-step approach for computing a dual assignment:
\medskip
\begin{compactitem}
	\item \emph{Gain Sharing:~} Whenever an edge $(u, v)$ is added to the matching with $u$ active and $v$ passive, let $\alpha_u = 1 - g(y_v)$ and $\alpha_v = g(y_v)$. 
	Here, $g : [0,1] \rightarrow [0,1]$ is non-decreasing with $g(1) = 1$.
	\smallskip
	\item \emph{Compensation:~} For every active vertex $u$ that has a victim $z$, suppose $u$ is matched to $v$.
	Decrease $\alpha_u$ and increase $\alpha_z$ by the same amount $h(y_v)$, where $h: [0,1] \rightarrow [0,1]$ is non-decreasing in $[0,1)$, $h(y) / y$ is non-increasing, $h(1) = 0$ and $1 - g(y) - h(y) \ge 0$ for all $y$.
	%
\end{compactitem}
\medskip
Note that the second step, in particular, identifying the victims of active vertices, can only be done after the entire instance has been revealed.


Each matched vertex will gain only from its incident matched edge.
If it is further active and has a victim, it needs to send a compensation to the victim.
Further, the active vertex can always afford the compensation from its gain since $1-g(y)-h(y)\ge 0$ for all $y\in[0,1)$.
The monotonicity of $h(y) / y$ is for technical reasons in the analysis.
Finally, note that an unmatched vertex may receive compensations from any number of active vertices.

\paragraph{Randomized Dual Fitting Analysis.}
The main technical lemma is to establish a lower bound for $\expect{y_v}{\alpha_u + \alpha_v}$, as we have done in Lemma~\ref{lemma:bipartite_main} for bipartite graphs.
Due to space constraint, we present the analysis for a special case with following assumptions ($\theta$ is the marginal rank of $v$):
%
\medskip
\begin{compactitem}
	\item $v$ is unmatched in $M(y_v=y, \vecmv)$ for all $y \ge \theta$;  
	\item $u$ actively matches the same vertex $z$ with rank $y_z=\tau>\theta$ in $M(y_v=y, \vecmv)$ for all $y <\theta$. 
\end{compactitem}
\medskip
%
In other words, any rank of $v$ higher than its marginal rank leads to the same (worse) situation for $u$, i.e. matching a vertex with rank $\tau \in (\theta,1)$.
See Figure~\ref{fig:general-illustration} for an illustrative example.

\begin{figure}[t]
	\centering
	\subfigure[when $y_v = \theta^\text{-}$]{\centering \includegraphics[width = 0.21\textwidth]{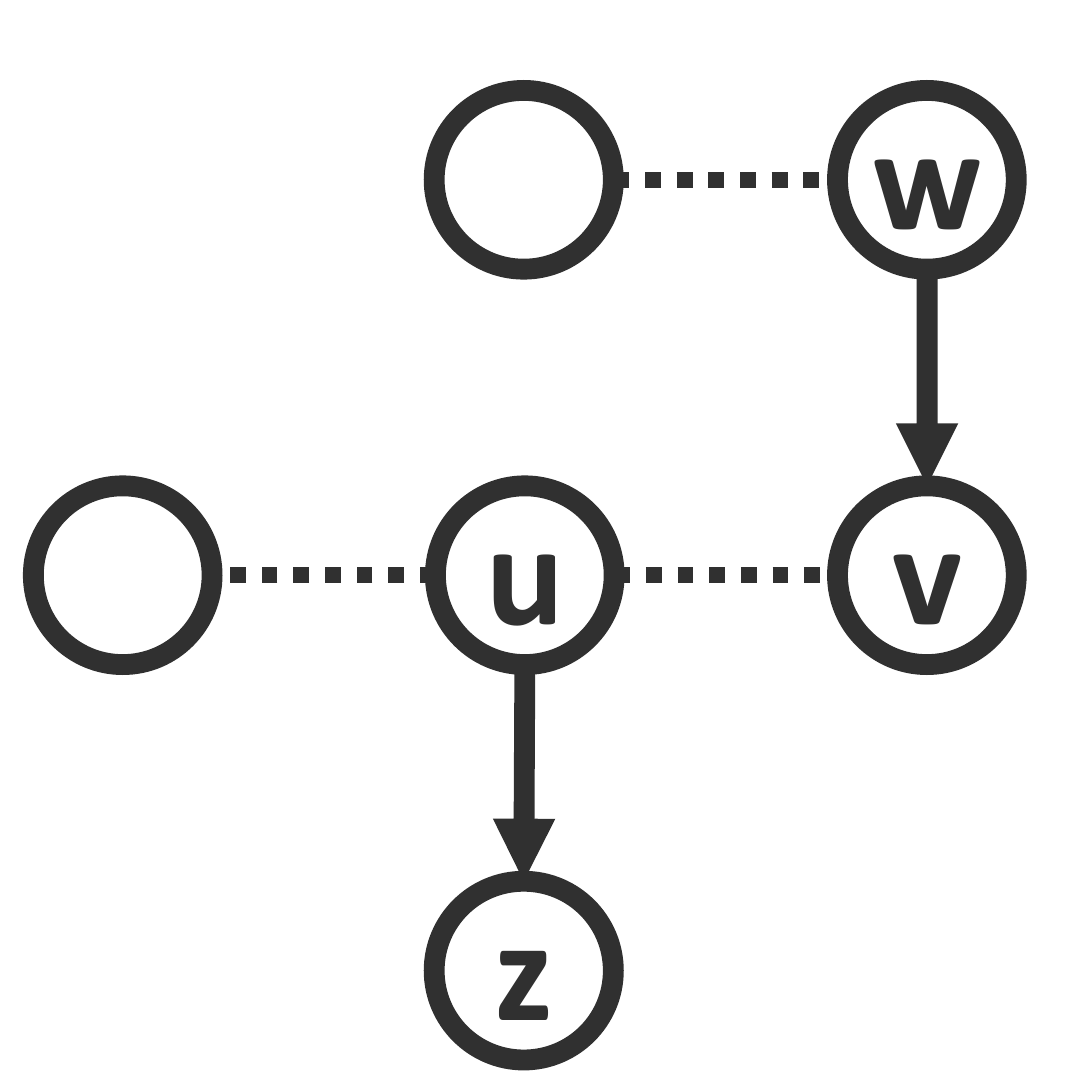}}
	\quad
	\subfigure[when $y_v \in [\theta,\tau)$]{\centering\includegraphics[width = 0.25\textwidth]{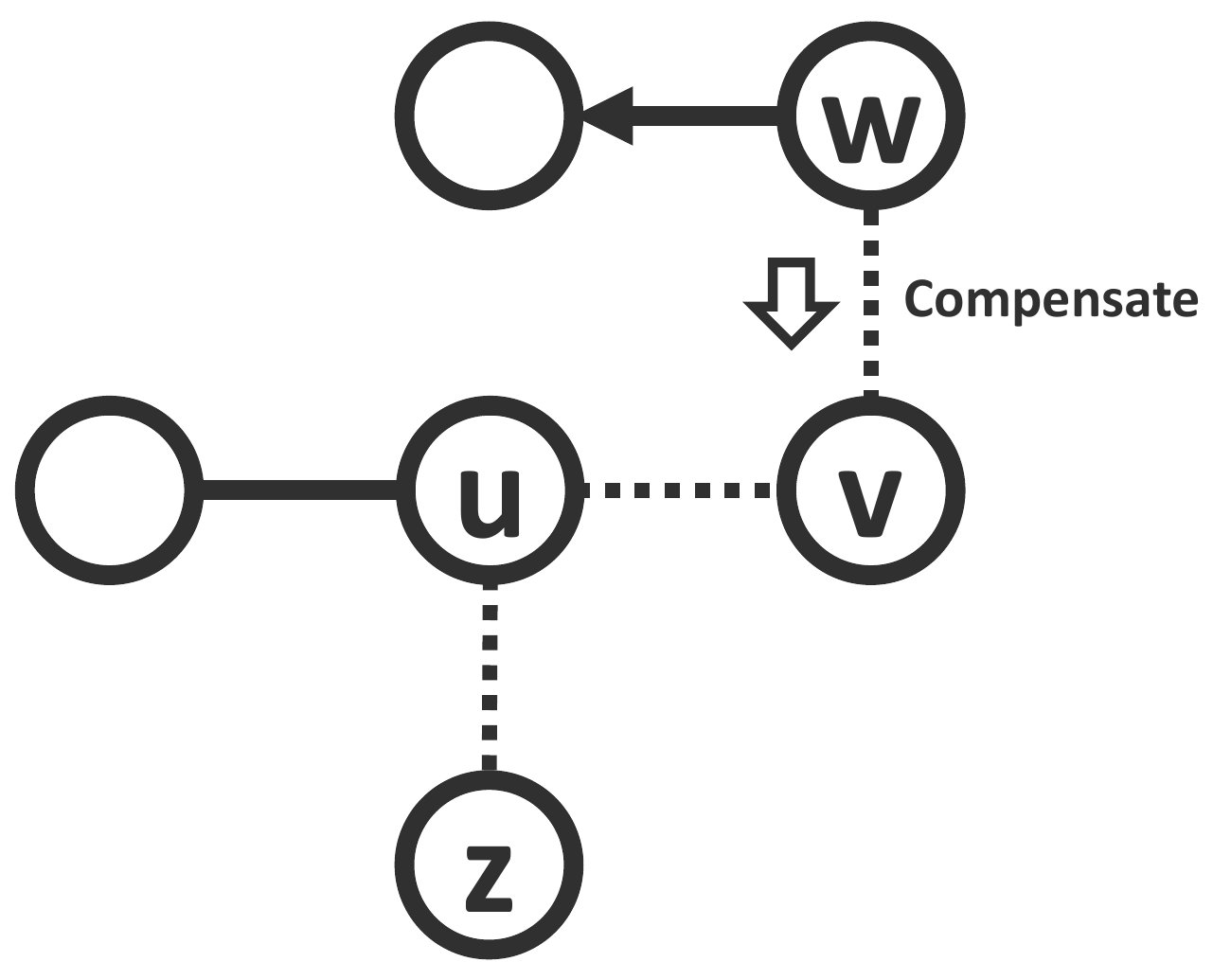}}
	\quad
	\subfigure[the symmetric difference]{\centering\includegraphics[width = 0.34\textwidth]{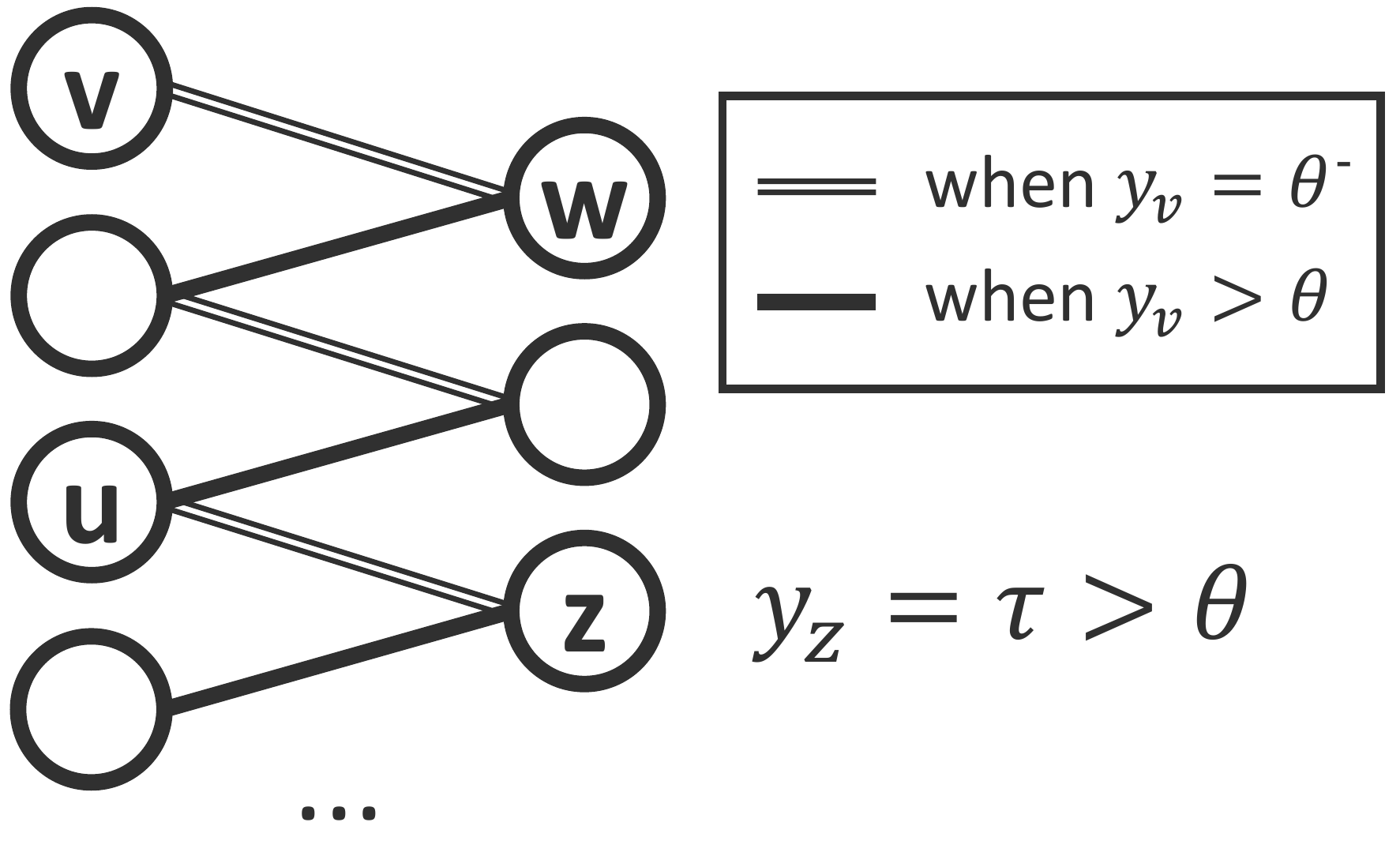}}
	\caption{We use solid line to represent an edge in the matching, where the direction (if any) is from active vertex to passive vertex.
		(a) when $v$ is higher than its marginal rank, $u$ matches a vertex with rank $y_z = \tau > \theta$;
		(b) $v$ is unmatched and compensated by $w$ when it is lower than its marginal rank; $u$ either is passive, or matches actively some vertex with rank higher than $\tau$ since $v$ is unmatched;
		(c) the symmetric difference between the two matchings: an alternating path triggered when $y_v$ increases to be larger than its marginal rank.}
	\label{fig:general-illustration}
\end{figure}

In general, we need to also consider the case that $v$ is active when its rank is lower than the marginal $\theta$, and the possibility that $u$'s matching status may change multiple times as the rank of $v$ gets higher.
See Appendix~\ref{subsec:general_main} for the analysis without the simplifying assumptions.

Subject to the above simplifying assumptions, we show the following:

\begin{lemma} \label{lemma:general_main_sketch}
	For any neighbor $u$ of $v$ that has an earlier deadline than $v$, and for any $\vecmv$, we have
	\begin{align*}
		\expect{y_v}{\alpha_v + \alpha_u} \ge & \textstyle \int_{0}^{\theta} g(y_v) dy_v + (\tau - \theta) \cdot h(\theta) + \theta \cdot (1-g(\tau)-h(\tau)) \\
		& + (1-\theta) \cdot \min \big\{ g(y_u) ,1-g(\theta)-h(\theta) \big\} ~.
	\end{align*}
\end{lemma}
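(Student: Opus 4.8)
The plan is to mimic the structure of the proof of Lemma~\ref{lemma:bipartite_main}, splitting the expectation over $y_v \in [0,1)$ into the ``below marginal'' regime $y_v \in [0,\theta)$ and the ``above marginal'' regime $y_v \in [\theta,1)$, but now carefully tracking the compensation term $h$ in addition to the gain-sharing term $g$. First I would handle the contribution of $\alpha_v$. For $y_v < \theta$, the vertex $v$ is passive (by the definition of marginal rank, and using the first simplifying assumption that $v$ is unmatched for $y_v \ge \theta$ so nothing subtler happens below $\theta$), hence $\alpha_v = g(y_v)$ before compensation; moreover $v$ is matched so it receives no compensation, contributing $\int_0^\theta g(y_v)\,dy_v$. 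For $y_v \in [\theta,1)$, $v$ is unmatched, so its only hope of a positive dual value is a compensation received as a victim. Here is where I would invoke the key structural insight alluded to in Figure~\ref{fig:general-illustration}: when $y_v$ rises above $\theta$, an alternating path is triggered, and one can identify the neighbor $w$ of $v$ (an active vertex whose victim is exactly $v$) that sends $v$ a compensation of $h(y_w{\text{'s partner's rank}})$; the monotonicity of $h(y)/y$ together with the fact that the relevant rank is at most $\tau$ is used to lower-bound this compensation. Under the simplifying assumptions the relevant rank is pinned at $\tau$ (via $z$ with $y_z = \tau$), and I expect the compensation received by $v$ over $y_v \in [\theta,\tau)$ to total $(\tau-\theta)\cdot h(\theta)$, matching the second term in the claimed bound — the $h(\theta)$ (rather than $h(y_v)$) coming from a worst-case monotonicity estimate.

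Next I would bound the contribution of $\alpha_u$, treating the two regimes separately again. For $y_v \in [\theta,1)$, $v$ is unmatched, so by Lemma~\ref{lemma:unmatched_neighbor} $u$ is either passive or actively matches a vertex of rank at most $\theta$; in either case $u$'s gain-sharing value is at least $1-g(\theta)$, but I must subtract the compensation $u$ itself might owe if $u$ is active with a victim. Since the partner of $u$ in this regime has rank at most $\theta$, the compensation owed is at most $h(\theta)$, giving a net $\ge 1-g(\theta)-h(\theta)$; but I also need the alternative bound $g(y_u)$ that shows up when $u$ is passive with a high-ranked-partner, exactly as in the bipartite case (Lemma~\ref{lemma:friendly_partner_bipartite} generalized). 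So over $y_v \in [\theta,1)$, $\alpha_u \ge \min\{g(y_u),\,1-g(\theta)-h(\theta)\}$, contributing $(1-\theta)\cdot\min\{g(y_u),1-g(\theta)-h(\theta)\}$ — the last term of the bound. For $y_v \in [0,\theta)$, the second simplifying assumption says $u$ actively matches $z$ with $y_z = \tau$, so $u$'s gain-sharing value is $1-g(\tau)$, from which I subtract the compensation $u$ may owe. Since $u$'s victim (if any) has partner $z$ of rank $\tau$, the compensation is at most $h(\tau)$, so $\alpha_u \ge 1-g(\tau)-h(\tau)$ on this interval of length $\theta$, contributing $\theta\cdot(1-g(\tau)-h(\tau))$ — the third term. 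Summing the four pieces gives exactly the claimed inequality.

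The main obstacle I anticipate is the careful bookkeeping of compensations in the above-marginal regime $y_v\in[\theta,\tau)$: I must argue that $v$, being unmatched, is indeed the victim of a well-defined neighbor $w$, that the rank of $w$'s matched partner is at most $\tau$ (so that $h$ applied to it is at least $h(\theta)$-ish after the monotonicity-of-$h(y)/y$ scaling), and crucially that this compensation is \emph{not} double-counted against the compensation $u$ owes or the gain $u$ keeps — i.e., the charging is consistent. This requires a precise reading of the alternating path in Lemma~\ref{lemma:alternating_path} and matching up the vertex roles in Figure~\ref{fig:general-illustration}(c): $u$ gets worse along the path (odd position from $v$), $v$ goes from matched to unmatched, and $w$ is the active vertex adjacent to $v$ on the path. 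A secondary technical point is justifying the replacement of $h(\text{actual partner rank})$ by $h(\theta)$ uniformly over the interval; I expect this to follow from $h$ being non-decreasing on $[0,1)$ and the actual partner rank being at least $\theta$, combined with wherever the $h(y)/y$ monotonicity is genuinely needed (likely to handle the case $v$ is active below $\theta$, which is excluded here but drives the choice of hypotheses on $h$). Once the charging consistency is nailed down, the rest is the same ``take the worst $\theta$ and $\tau$'' routine as in the bipartite warm-up.
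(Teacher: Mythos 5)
Your decomposition is the same as the paper's (split $y_v$ at the marginal rank $\theta$, bound $\alpha_v$ and $\alpha_u$ separately, and credit $v$ a compensation on $[\theta,\tau)$), but the one step that carries all the weight is exactly the step you defer as ``the main obstacle I anticipate'' and never actually argue: that for \emph{every} $y_v\in[\theta,\tau)$ the vertex $w$ (the vertex matching $v$ actively at $y_v=\theta^{\text{-}}$) is still active and has $v$ as its victim, i.e.\ that removing $w$ makes $v$ matched. Asserting that ``an alternating path is triggered'' does not give this; the paper proves it (Lemma~\ref{lem:compensate_special}) by two concrete observations you would need to supply: (a) $w$ has an earlier deadline than $v$ and none of the decisions before $w$'s deadline involve $v$, so raising $y_v$ from $\theta^{\text{-}}$ to $y$ leaves $w$ active; and (b) if deleting $w$ did not match $v$, the resulting alternating path would reach $u$ with $u$ still matching $z$ of rank $\tau>y$ while its neighbor $v$ of rank $y<\tau$ is unmatched --- contradicting how \ranking chooses at $u$'s deadline. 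Without (b) --- which is precisely where the simplifying assumption $y_z=\tau>\theta$ is used --- the $(\tau-\theta)h(\theta)$ term has no justification.

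A second, related error: you locate the size of the compensation incorrectly. The compensation $v$ receives from $w$ is $h$ evaluated at the rank of $w$'s matched partner, not anything involving $z$ or $\tau$, and your bound ``$w$'s partner has rank at most $\tau$, so $h$ of it is at least $h(\theta)$-ish by the $h(y)/y$ monotonicity'' goes the wrong way --- an upper bound on the partner's rank only upper-bounds $h$ of it. The correct fact (used in the paper's Lemma~\ref{lem:alphav_special}) is that $w$ picks $v$ when $y_v=\theta^{\text{-}}$ but not when $y_v=\theta$, which forces $w$'s partner in $M(y_v\ge\theta,\vecmv)$ to have rank exactly $\theta$, so the compensation is exactly $h(\theta)$; the monotonicity of $h(y)/y$ plays no role in this special case (it is only needed in the general argument of Appendix~\ref{subsec:general_main}). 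Your treatment of $\alpha_u$ (namely $1-g(\tau)-h(\tau)$ on $[0,\theta)$ and $\min\{g(y_u),1-g(\theta)-h(\theta)\}$ on $[\theta,1)$ via Lemma~\ref{lemma:unmatched_neighbor}) matches the paper and is fine, and the double-counting worry you raise is harmless since a compensation from $u$ to $v$ cancels inside the sum $\alpha_u+\alpha_v$; but as written the proof of the victim claim and of the $h(\theta)$ lower bound is missing, so the argument is incomplete.
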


Suppose $w$ matches actively to $v$ in $M(y_v=\theta^{\text{-}},\vecmv)$ (refer to Figure~\ref{fig:general-illustration}(a)), that is, it is the first vertex after $v$ in the alternating path when $v$'s rank moves below its marginal rank (refer to Figure~\ref{fig:general-illustration}(c)). 
We show in the following lemma that $v$ receives a compensation from $w$ whenever its rank $y_v$ is between $\theta$ and $\tau$ (refer to Figure~\ref{fig:general-illustration}(b)).

\begin{lemma} \label{lem:compensate_special}
	For any $y \in [\theta, \tau)$, $v$ is the victim of $w$ in $M(y_v = y, \vecmv)$. 
\end{lemma}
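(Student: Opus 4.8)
The plan is to verify the two defining conditions of ``$v$ is the victim of $w$'' in $M_y:=M(y_v=y,\vecmv)$ — that $v$ is an unmatched neighbor of $w$, and that $v$ is matched in $M\big((y_v=y,\vecmv)_{\text{-}w}\big)$ — together with the prerequisite that $w$ be active. The ``neighbor'' part is immediate, since $(v,w)$ is a matched edge of $M(y_v=\theta^\text{-},\vecmv)$ and hence $(v,w)\in E$; the ``unmatched'' part is precisely the first simplifying assumption, as $y\ge\theta$. Moreover, $v$ being unmatched in $M_y$ forces $M_y=M(\vecmv)$ by Lemma~\ref{lemma:monotonicity}, so the matching — and hence the whole claim — does not depend on the particular $y\in[\theta,1)$.

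Next I would show $w$ is active in $M_y$. Apply Lemma~\ref{lemma:alternating_path} with $v$ as the removed vertex to $M(y_v=\theta^\text{-},\vecmv)$, in which $v$ is passive and matched to $w$: the symmetric difference $M(y_v=\theta^\text{-},\vecmv)\,\triangle\,M_y$ is an alternating path beginning $(v,w,\dots)$, and $w$ ``gets worse'' along it. Since $w$ is active in $M(y_v=\theta^\text{-},\vecmv)$ and passive is strictly better than active, $w$ can only stay active or become unmatched in $M_y$. If $w$ became unmatched, the path would be just $(v,w)$, so $M_y=M(y_v=\theta^\text{-},\vecmv)\setminus\{(v,w)\}$ and $u$ would still match $z$ with $y_z=\tau>\theta$; but Lemma~\ref{lemma:unmatched_neighbor}, applied to $v$ (marginal rank $\theta$, unmatched at $y_v=y\ge\theta$), forces $u$ to be passive or to match a vertex of rank at most $\theta$ — a contradiction. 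Hence $w$ is active in $M_y$.

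It remains to show $v$ is matched in $M':=M\big((y_v=y,\vecmv)_{\text{-}w}\big)$. I first record a deadline comparison: writing $d_x$ for $x$'s deadline, in the run producing $M(y_v=\theta^\text{-},\vecmv)$ the vertex $u$ matches $z$ actively, so at time $d_u$ the neighbor $v$ of $u$, whose rank $\theta^\text{-}$ is below $\tau=y_z$, must already be unavailable; as $v$'s (active) match there is $w$ at time $d_w$, this gives $d_w<d_u$. Now suppose for contradiction that $v$ is unmatched throughout the run $B$ producing $M'$, and compare $B$ with the run $A$ producing $M(y_v=\theta^\text{-},\vecmv)$. Runs $A$ and $B$ agree up to $d_w$: before $d_w$, $w$ acts in neither and is matched in neither (it is active in $A$), while $v$ is picked in neither, so the value $y_v$ is immaterial there. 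At $d_w$, run $A$ matches $w$ to its minimum available neighbor, which is $v$; run $B$ has $w$ removed. I would then argue by induction on the remaining decision times that, so long as $v$ stays unmatched in $B$, runs $A$ and $B$ keep the same partial matching except that $v$ is unmatched in $B$ and matched to $w$ in $A$: at each later deciding vertex $x$, its available neighbors in $B$ are exactly those in $A$ plus possibly $v$, so if $x$'s (unique, minimum-rank) pick differed it would be $v$, contradicting $v$ staying unmatched. Pushing this invariant to time $d_u>d_w$: in $A$ vertex $u$ picks $z$ of rank $\tau$, so all of $u$'s neighbors available at $d_u$ in $A$ have rank $\ge\tau$, whereas in $B$ the additionally-available $v$ has rank $y<\tau$, so $u$ — still unmatched — picks $v$, contradicting the hypothesis. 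Therefore $v$ is matched in $M'$, and all conditions for ``$v$ is the victim of $w$ in $M(y_v=y,\vecmv)$'' are met.

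The main obstacle is this last step. The slogan ``removing $w$ only makes more vertices available'' is false globally, so the coupling of runs $A$ and $B$ must be confined to the window $[d_w,d_u]$, where the only discrepancy is the single freed vertex $v$ — propagated, as in Lemma~\ref{lemma:alternating_path}, from one deciding vertex to the next — and one uses both the deadline order $d_w<d_u$ and the rank inequality $y<\tau$ to close the argument once $u$ acts. Dropping the two simplifying assumptions in the full proof will require iterating this propagation across the several $y_v$-thresholds at which $u$'s status changes and handling the case that $v$ is active (rather than unmatched) below its marginal rank.
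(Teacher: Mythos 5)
Correct, and essentially the paper's own argument (compare the general version, Lemma~\ref{lem:compensate} in the appendix): you verify the victim conditions, show $w$ remains active, and derive the same contradiction --- if $v$ stayed unmatched after deleting $w$, the run would coincide with $M(y_v=\theta^{\text{-}},\vecmv)$ minus the edge $(w,v)$, so $u$ would actively match $z$ of rank $\tau$ while its unmatched neighbor $v$ has rank $y<\tau$. Your explicit run-coupling merely spells out the identity the paper asserts, and your alternating-path route to $w$'s activeness is a minor variant of the paper's direct observation that decisions before $w$'s deadline are unaffected by $v$'s rank.
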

\begin{proof}
	Let $\vecy_1 = (y_v = y, \vecmv)$, where $y \in [\theta, \tau)$.
	By our assumption, $v$ is unmatched and, thus, is an unmatched neighbor of $w$ in $M(\vecy_1)$. 
	To prove that $v$ is the victim of $w$, we need to show that (1) $w$ is active in $M(\vecy_1)$ and (2) $v$ becomes matched when we remove $w$ from the graph.
	
	Consider $\vecy_2 = (y_v = \theta^{\text{-}}, \vecmv)$.
	By our assumptions, $v$ is passively matched to $w$ and $u$ actively matches $z$ with $y_z = \tau$ in $M(\vecy_2)$.
	For this to happen, $w$ must have an earlier deadline than $v$ and none of matching decisions before $w$'s deadline pick $w$ or $v$.
	Then, lowering $v$'s rank would not affect these decisions before $w$'s deadline and, thus, $w$ must also be active in $M(\vecy_1)$. 
	
	Finally, consider what happens when $w$ is removed from the graph.
	It triggers a portion of the alternating path (i.e., Figure~\ref{fig:general-illustration}(c)), the symmetric difference between $M(\vecy_1)$ and $M(\vecy_2)$.
	The portion starts from $w$ (exclusive) and ends the first time when $v$ becomes relevant, i.e., a vertex in the alternating path decides to pick $v$ instead of the next vertex in the path.
	Further, we know for sure that $v$ will be relevant at some point because otherwise $u$ is in the path and the next vertex $z$ has rank $\tau > y$.
	Therefore, $v$ must be matched when $w$ is removed from the graph.
\end{proof}


The next two lemmas give lower bounds on the expected gain of $\alpha_v$ and $\alpha_u$, respectively, over the randomness of $v$'s rank alone.

\begin{lemma} 	\label{lem:alphav_special}
	$\expect{y_v}{\alpha_v} \ge \int_{0}^{\theta} g(y_v) dy_v + (\tau - \theta) \cdot h(\theta) $.
\end{lemma}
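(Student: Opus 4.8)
The plan is to bound $\expect{y_v}{\alpha_v}$ by splitting the range of $y_v$ into the three regimes $[0,\theta)$, $[\theta,\tau)$, and $[\tau,1)$ cut out by the marginal rank $\theta$ of $v$ and the rank $\tau$ of the vertex $z$ that $u$ matches, lower bounding the gain $\alpha_v$ pointwise on each piece, and then integrating.

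On $[0,\theta)$ I would invoke the definition of the marginal rank together with Lemma~\ref{lemma:monotonicity}: $v$ is passive in $M(y_v=\theta^\text{-},\vecmv)$, and passivity is preserved when $y_v$ decreases, so $v$ is passive throughout this range; by the gain-sharing rule $\alpha_v=g(y_v)$, which contributes $\int_0^\theta g(y_v)\,dy_v$. On $[\tau,1)$ the first simplifying assumption makes $v$ unmatched, so I simply use $\alpha_v\ge 0$ and get nothing. The crux is the middle regime $[\theta,\tau)$: here $v$ is unmatched (again by the first assumption) but Lemma~\ref{lem:compensate_special} guarantees that $v$ is the victim of $w$, hence receives a compensation of $h(y_p)$, where $p$ is the vertex that $w$ is matched to in $M(y_v=y,\vecmv)$. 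I would then argue that $y_p\ge\theta$, so that $\alpha_v\ge h(y_p)\ge h(\theta)$ by monotonicity of $h$ on $[0,1)$; this contributes $(\tau-\theta)\cdot h(\theta)$. Adding the three pieces yields the claimed bound.

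To show $y_p\ge\theta$ I would reuse the structure already established inside the proof of Lemma~\ref{lem:compensate_special}: the matching decisions made before $w$'s deadline are identical in $M(y_v=\theta^\text{-},\vecmv)$ and $M(y_v=y,\vecmv)$, so $w$ faces the same set of unmatched neighbors at its deadline in both runs. Since in the first run $w$ matches $v$, whose rank is $\theta^\text{-}$, every other unmatched neighbor of $w$ at that moment has rank at least $\theta^\text{-}$, and letting $\theta^\text{-}\to\theta$ this is at least $\theta$; and since in the second run $v$ is unmatched, $w$ must instead match one of those other neighbors, forcing $y_p\ge\theta$. I expect this rank comparison to be the only real obstacle: one must be careful that raising $y_v$ from $\theta^\text{-}$ into $[\theta,\tau)$ truly does not perturb the state at $w$'s deadline (the standard argument that the first differing decision must involve $v$ and that a higher-ranked $v$ cannot be selected any earlier, which Lemma~\ref{lem:compensate_special} has essentially done), and that the $\theta^\text{-}$ tie-breaking convention loses nothing in the limit. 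The remaining ingredients — non-negativity of $\alpha_v$, the gain-sharing identity on $[0,\theta)$, and the final one-line integration — are routine.
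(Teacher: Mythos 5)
Your proposal is correct and follows essentially the same route as the paper: split $y_v$ into $[0,\theta)$, $[\theta,\tau)$, $[\tau,1)$, use passivity and the gain-sharing rule on the first piece, and Lemma~\ref{lem:compensate_special} plus the compensation rule on the middle piece. The only cosmetic difference is that the paper pins down the rank of $w$'s partner as exactly $\theta$ (since $w$ takes $v$ at $y_v=\theta^\text{-}$ but not at $y_v=\theta$), whereas you show it is at least $\theta$ and invoke monotonicity of $h$; both arguments are sound.
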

\begin{proof}
	By definition, when $y_v < \theta$, $v$ is passive and hence $\alpha_v = g(y_v)$.
	
	Since $w$ matches $v$ actively in $M(y_v=\theta^{\text{-}}, \vecmv)$ but not in $M(y_v = \theta, \vecmv)$, we know that $w$ must match a vertex with rank $\theta$ in $M(y_v \ge \theta, \vecmv)$. For $y_v \in [\theta, \tau)$, Lemma~\ref{lem:compensate_special} implies that $v$ is the victim of $w$ and, thus, $v$ receives a compensation $h(\theta)$ from $w$.
	To sum up, we have
	\begin{equation*}
		\textstyle
		\expect{y_v}{\alpha_v} \ge \expect{y_v}{\alpha_v \cdot \idr{y_v < \theta}} + \expect{y_v}{\alpha_v \cdot\idr{y_v \in [\theta, \tau)}} \ge \int_{0}^{\theta} g(y_v) dy_v + (\tau - \theta) h(\theta),
	\end{equation*}
	as claimed.
\end{proof}

\begin{lemma}
	\label{lem:alphau_special}
	$\expect{y_v}{\alpha_u}	\ge \theta \cdot (1-g(\tau)-h(\tau)) + (1-\theta) \cdot \min \{g(y_u) ,1-g(\theta)-h(\theta)\}$.
\end{lemma}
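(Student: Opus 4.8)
The plan is to analyze $\alpha_u$ as a function of $y_v$ in two regimes, separated by the marginal rank $\theta$ of $v$, and bound the expectation by integrating. The claimed bound has weight $\theta$ on the term $1-g(\tau)-h(\tau)$ and weight $1-\theta$ on $\min\{g(y_u),1-g(\theta)-h(\theta)\}$, which matches exactly this split: the first regime is $y_v<\theta$ (measure $\theta$), the second is $y_v\in[\theta,1)$ (measure $1-\theta$). So the whole task reduces to establishing a pointwise lower bound on $\alpha_u$ in each regime.

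\textbf{Regime $y_v<\theta$.} Here we are in the situation depicted in Figure~\ref{fig:general-illustration}(a): by the simplifying assumption, $u$ actively matches $z$ with $y_z=\tau$. By the Gain Sharing rule, $u$ receives $1-g(\tau)$ from its matched edge $(u,z)$. However, $u$ is active and may itself have a victim, in which case the Compensation step subtracts $h(\tau)$ from $\alpha_u$. Whether or not $u$ has a victim, we always have $\alpha_u \ge 1-g(\tau)-h(\tau)$, since $h\ge 0$ on $[0,1)$ and in the worst case $u$ pays the full compensation $h(\tau)$. This gives the first term after integrating over $y_v\in[0,\theta)$.

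\textbf{Regime $y_v\in[\theta,1)$.} Here $v$ is unmatched (by the simplifying assumption). Since $u$ has an earlier deadline than $v$ and $v$ is an unmatched neighbor of $u$, I would invoke Lemma~\ref{lemma:unmatched_neighbor}: $u$ is either passive or actively matches a vertex of rank at most $\theta$. In the passive case, $u$ is the $v'$-endpoint of some edge $(u',u)$ with $u'$ active, so $\alpha_u = g(y_u)$; moreover a passive vertex never pays a compensation, so in fact $\alpha_u = g(y_u)$ on the nose. In the active case, $u$ matches some $z'$ with $y_{z'}\le\theta$, contributing $1-g(y_{z'})\ge 1-g(\theta)$ before compensation; subtracting the at-most-$h(\cdot)$ compensation and using monotonicity of $g$ and the constraint $1-g(y)-h(y)\ge 0$, one gets $\alpha_u \ge 1-g(\theta)-h(\theta)$ — here I would need the subtlety that the compensation $u$ pays is $h(y_{z'})$ with $y_{z'}\le\theta$, so $1-g(y_{z'})-h(y_{z'})\ge 1-g(\theta)-h(\theta)$ by the fact that $1-g-h$ is the share an active vertex keeps and $g$ is non-decreasing while... actually this needs care, since $h$ is non-decreasing too. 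The clean way is: $\alpha_u = 1-g(y_{z'})-h(y_{z'})\cdot\idr{u\text{ has a victim}} \ge 1-g(\theta)-h(\theta)$ because $g$ non-decreasing gives $g(y_{z'})\le g(\theta)$ but $h$ non-decreasing gives only $h(y_{z'})\le h(\theta)$ — wait, that's the wrong direction. Let me instead use that $y\mapsto 1-g(y)-h(y)$ restricted to the relevant range, together with $y_{z'}\le\theta$, is handled by noting $1-g(y_{z'})-h(y_{z'})\ge 1-g(\theta)-h(\theta)$ would follow if $1-g-h$ were non-increasing; if that is not guaranteed, then I would drop $h$ in the active case and argue $\alpha_u\ge 1-g(\theta)-h(\theta)$ via $\alpha_u\ge 1-g(y_{z'})-h(y_{z'})\ge\cdots$, which is exactly the place the paper's actual proof will pin down. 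In either subcase, $\alpha_u\ge\min\{g(y_u),1-g(\theta)-h(\theta)\}$, and integrating over the measure-$(1-\theta)$ interval yields the second term.

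\textbf{The main obstacle.} The delicate point is the active subcase of the second regime: establishing $\alpha_u\ge 1-g(\theta)-h(\theta)$ when $u$ actively matches a vertex of rank $\le\theta$ and may simultaneously owe a compensation of size $h$ of that rank. One needs the right monotonicity comparison — presumably leaning on the hypothesis that $h(y)/y$ is non-increasing, or on the structure of how Lemma~\ref{lemma:unmatched_neighbor} constrains the matched partner — to conclude that $u$'s \emph{net} gain, after paying any compensation, is still at least $1-g(\theta)-h(\theta)$. The first regime and the passive subcase are routine once Lemma~\ref{lemma:unmatched_neighbor} and the simplifying assumptions are in hand.
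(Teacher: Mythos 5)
Your decomposition is exactly the paper's: split on whether $y_v<\theta$ or $y_v\ge\theta$, use the simplifying assumption in the first regime to get $\alpha_u\ge 1-g(\tau)-h(\tau)$ pointwise, and use Lemma~\ref{lemma:unmatched_neighbor} in the second regime to split into the passive case ($\alpha_u=g(y_u)$, since passive vertices pay no compensation) and the active case. All of that is right and matches the paper's proof.

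The one step you leave open --- the active subcase of the regime $y_v\ge\theta$ --- is not actually an obstacle, and your first instinct there was correct before you second-guessed the sign. If $u$ actively matches some $z'$ with $y_{z'}\le\theta$, then by the gain-sharing rule $u$ receives $1-g(y_{z'})$, and if $u$ has a victim it pays the compensation $h(y_{z'})$: the compensation is $h$ evaluated at the rank of $u$'s \emph{passive partner}, not at the victim's rank. Hence $\alpha_u\ge 1-g(y_{z'})-h(y_{z'})$. Since $h$ enters with a minus sign (it is a payment), the monotonicity of $h$ works in your favor: $y_{z'}\le\theta<1$ and $g,h$ non-decreasing on $[0,1)$ give $g(y_{z'})+h(y_{z'})\le g(\theta)+h(\theta)$, so $\alpha_u\ge 1-g(\theta)-h(\theta)$. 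That is the entirety of the paper's justification (``by the monotonicity of $g,h$''); in particular, neither the non-increasingness of $h(y)/y$ nor any property of $1-g-h$ is needed here (the former is used only later, in the amortization over multiple thresholds in the general version, Lemma~\ref{lemma:technical_lower_bound}). With that correction your argument is complete and identical in substance to the paper's.
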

\begin{proof}
	By assumption, $u$ actively matches vertex $z$ with rank $y_z = \tau$ when $y_v < \theta$.
	Thus, $u$ gains $1-g(\tau)$ during the gain sharing phase and gives away $h(\tau)$ to its victim (if any).
	Integrating $y_v$ from $0$ to $\theta$ gives the first term on the RHS.
	
	For $y_v\ge \theta$, $u$ either is passive, or actively matches a vertex with rank at most $\theta$.
	In the first case, we have $\alpha_u = g(y_u)$.
	In the second case, we have $\alpha_u \ge 1-g(\theta)-h(\theta)$, by the monotonicity of $g,h$.
	Integrating $y_v$ from $\theta$ to $1$ gives the second term on the RHS.
\end{proof}

Summing up the inequalities in Lemma~\ref{lem:alphav_special} and \ref{lem:alphau_special} proves Lemma~\ref{lemma:general_main_sketch}.

Comparing the bounds of Lemma~\ref{lem:alphav_special} and Lemma~\ref{lem:alphau_special}, the parameter $\tau$ presents a trade-off between the expected gains $\alpha_u$ and $\alpha_v$ of the two vertices.
The larger $\tau$ is, the less $u$ gets when $v$ is above its marginal rank, e.g., $1-g(\tau)-h(\tau)$, and the more $v$ gets as compensations when it is below its marginal rank, e.g., $(\tau-\theta) \cdot h(\tau)$; and vice versa.

\paragraph{Charging Functions.}
If Lemma~\ref{lemma:general_main_sketch} holds unconditionally, it remains to show that there exists functions $g,h$ (with desired properties) and constant $F > 0.5$ such that
\begin{align*}
	\int_0^1 \min_{0\leq \theta < \tau < 1} \bigg\{ & \int_{0}^{\theta} g(y_v) dy_v + (\tau - \theta) h(\theta) + \theta \cdot (1-g(\tau)-h(\tau)) \\
	& + (1-\theta) \cdot \min \big\{ g(y_u) ,1-g(\theta)-h(\theta) \big\} \bigg\} dy_u \geq F ~,
\end{align*}
and to apply Lemma~\ref{lemma:dual_fitting} to conclude that \ranking is $F$-competitive.
The unconditional version of Lemma~\ref{lemma:general_main_sketch} turns out to give a more complicated bound due to the considerations of other cases.
Nevertheless, we can use a linear program to optimize the ratio over fine-grained discretized versions of $g$ and $h$.
To give a rigorous proof, which is deferred to Appendix~\ref{subsec:ratio}, we approximate the solutions of the linear program with piecewise-linear $g$ and $h$ (with two segments).
We conclude with our choice of $g$ and $h$ that \ranking is $0.5211$-competitive.

\newpage

{
	\bibliography{matching}

\newcommand{\etalchar}[1]{$^{#1}$}
\begin{thebibliography}{AGKM11}

\bibitem[ACC{\etalchar{+}}16]{esa/AbolhassaniCCEH16}
Melika Abolhassani, T.{-}H.~Hubert Chan, Fei Chen, Hossein Esfandiari,
  MohammadTaghi Hajiaghayi, Hamid Mahini, and Xiaowei Wu.
\newblock Beating ratio 0.5 for weighted oblivious matching problems.
\newblock In {\em {ESA}}, volume~57 of {\em LIPIcs}, pages 3:1--3:18. Schloss
  Dagstuhl - Leibniz-Zentrum fuer Informatik, 2016.

\bibitem[ADFS95]{rsa/Aronson1995}
Jonathan Aronson, Martin Dyer, Alan Frieze, and Stephen Suen.
\newblock Randomized greedy matching. ii.
\newblock {\em Random Struct. Algorithms}, 6(1):55--73, January 1995.

\bibitem[AGKM11]{soda/AggarwalGKM11}
Gagan Aggarwal, Gagan Goel, Chinmay Karande, and Aranyak Mehta.
\newblock Online vertex-weighted bipartite matching and single-bid budgeted
  allocations.
\newblock In {\em SODA}, pages 1253--1264, 2011.

\bibitem[BJN07]{esa/BuchbinderJN07}
Niv Buchbinder, Kamal Jain, and Joseph Naor.
\newblock Online primal-dual algorithms for maximizing ad-auctions revenue.
\newblock In {\em {ESA}}, volume 4698 of {\em Lecture Notes in Computer
  Science}, pages 253--264. Springer, 2007.

\bibitem[BM08]{sigact/BenjaminC08}
Benjamin Birnbaum and Claire Mathieu.
\newblock On-line bipartite matching made simple.
\newblock {\em ACM SIGACT News}, 39(1):80--87, 2008.

\bibitem[BST17]{esa/BuchbinderST17}
Niv Buchbinder, Danny Segev, and Yevgeny Tkach.
\newblock Online algorithms for maximum cardinality matching with edge
  arrivals.
\newblock In {\em {ESA}}, volume~87 of {\em LIPIcs}, pages 22:1--22:14. Schloss
  Dagstuhl - Leibniz-Zentrum fuer Informatik, 2017.

\bibitem[CCWZ14]{soda/ChanCWZ14}
T.-H.~Hubert Chan, Fei Chen, Xiaowei Wu, and Zhichao Zhao.
\newblock Ranking on arbitrary graphs: Rematch via continuous lp with monotone
  and boundary condition constraints.
\newblock In {\em SODA}, pages 1112--1122, 2014.

\bibitem[CTV15]{esa/ChiplunkarTV15}
Ashish Chiplunkar, Sumedh Tirodkar, and Sundar Vishwanathan.
\newblock On randomized algorithms for matching in the online preemptive model.
\newblock In {\em {ESA}}, volume 9294 of {\em Lecture Notes in Computer
  Science}, pages 325--336. Springer, 2015.

\bibitem[DJ12]{stoc/DevanurJ12}
Nikhil~R. Devanur and Kamal Jain.
\newblock Online matching with concave returns.
\newblock In {\em {STOC}}, pages 137--144. {ACM}, 2012.

\bibitem[DJK13]{soda/DevanurJK13}
Nikhil~R. Devanur, Kamal Jain, and Robert~D. Kleinberg.
\newblock Randomized primal-dual analysis of {RANKING} for online bipartite
  matching.
\newblock In {\em {SODA}}, pages 101--107. {SIAM}, 2013.

\bibitem[ELSW13]{stacs/EpsteinLSW13}
Leah Epstein, Asaf Levin, Danny Segev, and Oren Weimann.
\newblock Improved bounds for online preemptive matching.
\newblock In {\em {STACS}}, volume~20 of {\em LIPIcs}, pages 389--399. Schloss
  Dagstuhl - Leibniz-Zentrum fuer Informatik, 2013.

\bibitem[GM08]{soda/GoelM08}
Gagan Goel and Aranyak Mehta.
\newblock Online budgeted matching in random input models with applications to
  adwords.
\newblock In {\em SODA}, pages 982--991, 2008.

\bibitem[KMT11]{stoc/KarandeMT11}
Chinmay Karande, Aranyak Mehta, and Pushkar Tripathi.
\newblock Online bipartite matching with unknown distributions.
\newblock In {\em STOC}, pages 587--596, 2011.

\bibitem[KVV90]{stoc/KarpVV90}
Richard~M. Karp, Umesh~V. Vazirani, and Vijay~V. Vazirani.
\newblock An optimal algorithm for on-line bipartite matching.
\newblock In {\em STOC}, pages 352--358, 1990.

\bibitem[McG05]{approx/McGregor05}
Andrew McGregor.
\newblock Finding graph matchings in data streams.
\newblock In {\em {APPROX-RANDOM}}, volume 3624 of {\em Lecture Notes in
  Computer Science}, pages 170--181. Springer, 2005.

\bibitem[MSVV07]{jacm/MehtaSVV07}
Aranyak Mehta, Amin Saberi, Umesh~V. Vazirani, and Vijay~V. Vazirani.
\newblock Adwords and generalized online matching.
\newblock {\em J. {ACM}}, 54(5):22, 2007.

\bibitem[MY11]{stoc/MahdianY11}
Mohammad Mahdian and Qiqi Yan.
\newblock Online bipartite matching with random arrivals: an approach based on
  strongly factor-revealing {LP}s.
\newblock In {\em STOC}, pages 597--606, 2011.

\bibitem[Var11]{icalp/Varadaraja11}
Ashwinkumar~Badanidiyuru Varadaraja.
\newblock Buyback problem - approximate matroid intersection with cancellation
  costs.
\newblock In {\em {ICALP} {(1)}}, volume 6755 of {\em Lecture Notes in Computer
  Science}, pages 379--390. Springer, 2011.

\bibitem[WW15]{icalp/WangW15}
Yajun Wang and Sam~Chiu{-}wai Wong.
\newblock Two-sided online bipartite matching and vertex cover: Beating the
  greedy algorithm.
\newblock In {\em Automata, Languages, and Programming - 42nd International
  Colloquium, {ICALP} 2015, Kyoto, Japan, July 6-10, 2015, Proceedings, Part
  {I}}, pages 1070--1081, 2015.

\end{thebibliography}
	\bibliographystyle{alpha}
}

\newpage

\appendix
\section{Missing Proofs in Section~\ref{sec:preliminary}}\label{appendix:preli}

\begin{proofof}{Lemma~\ref{lemma:monotonicity}}
	For the first statement, since $u$ is active or unmatched, we know that	for each neighbor $v$ of $u$ with an earlier deadline than $u$,
	$v$ does not match $u$ in $M(\vecy)$ at their deadlines. 
	Hence when $y_u$ increases, they would make the same decision.
	In other words, when $u$'s deadline reaches, the partial matching produced is the same as before.
	As a consequence, the eventual matching produced would be identical, as $u$ will actively match the same vertex as in $M(\vecy)$.
	
	The second statement is implied by the first statement.
	Suppose otherwise, e.g., $u$ is active or unmatched when $y_u$ is decreased from $y$ to some $y' < y$.
	Then we know that by increasing $y_u$ from $y'$ to $y$, $u$ becomes passive, which violates the first statement.
\end{proofof}

\begin{proofof}{Lemma~\ref{lemma:unmatched_neighbor}}
	Consider the matching $M(y_v=\theta, \vecmv)$. By definition, $v$ is either active or unmatched.
	Hence, at $u$'s deadline, which is earlier than $v$'s deadline, $v$ is unmatched.
	Consequently, $u$ either is passive or matches actively to some vertex $w$ with $y_w \le y_v = \theta$.
	By Lemma~\ref{lemma:monotonicity}, there is no change in the matching when we increases $y_u$, which concludes the proof.
\end{proofof}

\begin{proofof}{Lemma~\ref{lemma:alternating_path}}
	We prove the lemma by mathematical induction on $n$, the total number of vertices.
	For the base case when $n = 2$, the symmetric difference is a single edge $(u,u_1)$ and the second statement holds since $u_1$ is matched in $M(\vecy)$ and unmatched in $M(\vecmv[u])$.
	
	Suppose the lemma holds for $1,2,\ldots,n-1$ and we consider the case when $|V| = n$.
	
	Let $u_1$ be matched to $u$ in $M(\vecy)$.
	Observe that if we remove both $u$ and $u_1$ from $G$ (let $\vec{y'} \in [0,1]^{V\setminus\{u,u_1\}}$ be the resulting vector), then we have $M(\vecy) = M(\vec{y'})\cup \{ (u,u_1) \}$.
	
	If $u_1$ is unmatched in $M(\vecmv[u])$, then we have $M(\vecmv[u]) = M(\vec{y'})$ and the lemma holds by induction hypothesis.
	Now suppose $u_1$ is matched in $M(\vecmv[u])$.
	
	By definition $\vect{y'}$ is obtained by removing $u_1$ (which is matched in $\vecmv[u]$) from $\vecmv[u]$.
	By induction hypothesis, the symmetric difference between $M(\vecmv[u])$ and $M(\vect{y'})$ is an alternating path $(u_1,\ldots,u_l)$ such that
	(a) for all odd $i<l$, we have $(u_i,u_{i+1})\in M(\vecmv[u])$; for all even $i<l$, we have $(u_i,u_{i+1})\in M(\vect{y'})$;
	(b) from $M(\vecmv[u])$ to $M(\vect{y'})$, vertices $\{u_2,u_4,\ldots\}$ get worse, vertices $\{u_3,u_5,\ldots\}$ get better.
	
	Hence the symmetric difference between $M(\vecy)$ and $M(\vecmv[u])$ is the alternating path $(u,u_1,\ldots,u_l)$ (recall that $M(\vecy) = M(\vec{y'})\cup \{ (u,u_1) \}$).
	It is easy to see that statement (a) holds, and statement (b) holds for vertices $\{u_2,\ldots,u_l\}$.
	
	Now consider vertex $u_1$, which is matched to $u$ in $M(\vecy)$, and matched to $u_2$ in $M(\vecmv[u])$.
	
	If $u_1$ is passively matched (by $u$) in $M(\vecy)$, then we know that $u$ has an earlier deadline than $u_1$.
	Hence in $M(\vecmv[u])$, either $u_1$ is active, or passively matched by some $u_2$ with a deadline later than $u$.
	In other words, $u_1$ gets worse from $M(\vecy)$ to $M(\vecmv[u])$.
	
	If $u_1$ matches $u$ actively in $M(\vecy)$, then we know that $u_1$ has an earlier deadline than $u$.
	Hence when $u_1$ is considered in $\vecmv[u]$, the set of unmatched vertices (except for $u$) is identical as in $M(\vecy)$.
	Consequently, $u_1$ actively matches some vertex $u_2$ with $y_{u_2} > y_u$ (otherwise $u_1$ will not match $u$ in $M(\vecy)$).
	In other words, $u_1$ gets worse from $M(\vecy)$ to $M(\vecmv[u])$.	
\end{proofof} 

\begin{proofof}{Lemma~\ref{lemma:dual_fitting}}
	Let $\tilde{\alpha}_u := \expect{\vecy}{\alpha_u}/F$ for all $u\in V$. By the first assumption, 
	\[
	\sum_{u \in V} \tilde{\alpha}_u = \sum_{u \in V} \frac{\expect{\vecy}{{\alpha_u}}}{F} = \frac{1}{F}\expect{\vecy}{\sum_{u \in V} {\alpha_u}} =  \frac{1}{F}\expect{\vecy}{\sum_{(u,v)\in E} x_{uv}}.
	\]
	
	Moreover, $\tilde{\alpha}$ is a feasible dual solution: by the second assumption, $\tilde{\alpha}_u + \tilde{\alpha}_v = \expect{\vecy}{\alpha_u + \alpha_v}/F \ge 1$ for all $(u,v) \in E$.
	By duality, we conclude that
	\[
	\frac{1}{F}\expect{\vecy}{\sum_{(u,v)\in E} x_{uv}} = \sum_{u \in V} \tilde{\alpha}_u \ge \textsf{OPT},
	\]
	where \textsf{OPT} is the optimal primal solution, which is at least the size of a maximum matching.
\end{proofof}
\section{Missing Proofs in Section~\ref{sec:general}} \label{appendix:general}

\subsection{General Version of Lemma~\ref{lemma:general_main_sketch}} \label{subsec:general_main}

In this section we prove the following lemma, which is a general version of Lemma~\ref{lemma:general_main_sketch} (without the simplifying assumptions on $u,v$ we made in Section~\ref{sec:general}).

\begin{lemma} \label{lemma:general_main}
	For any neighbor $u$ of $v$ that has an earlier deadline than $v$, and for any $\vecmv$, we have
	\begin{align*}
		\expect{y_v}{\alpha_u + \alpha_v} \ge f(y_u) \eqdef \min_{\theta} \biggr\{  \min_{\tau \in [\theta,1)} \big\{ \int_{0}^{\theta}g(y_v)dy_v + \min \left\{(1-\theta)(1-g(1^\text{-})-h(1^\text{-})), (\tau - \theta) h(\theta) \right\} \\
		+ (1-\theta) \min \left\{ g(y_u), 1-g(\theta)-h(\theta) \right\} + \theta \min \left\{g(y_u), 1-g(\tau)-h(\tau)\right\} \big\}, \\
		\int_{0}^{\theta}g(y_v)dy_v + (1-\theta)\min \{1-g(1^\text{-})-h(1^\text{-}), h(\theta) \} + (1-\theta) \min \left\{ g(y_u), 1-g(\theta) \right\}  \biggr\}.
	\end{align*}
\end{lemma}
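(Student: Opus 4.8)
The plan is to follow the same skeleton as the special-case analysis (Lemmas~\ref{lem:alphav_special} and~\ref{lem:alphau_special}), but now track two possible "bad events" as the rank $y_v$ of $v$ sweeps upward past the marginal rank $\theta$, rather than the single event in the simplified setting. First I would fix $\vecmv$, let $\theta$ be the marginal rank of $v$, and split the range of $y_v$ into $[0,\theta)$ (where $v$ is passive, contributing $\alpha_v = g(y_v)$, so $\expect{y_v}{\alpha_v \cdot \idr{y_v<\theta}} = \int_0^\theta g(y_v)\,dy_v$) and $[\theta,1)$ (where $v$ is active or unmatched). On $[\theta,1)$, by Lemma~\ref{lemma:unmatched_neighbor}, $u$ is passive or actively matches a vertex of rank $\le\theta$, giving $\alpha_u \ge \min\{g(y_u),\,1-g(\theta)-h(\theta)\}$ on that whole sub-interval — this is the source of the $(1-\theta)\min\{g(y_u),1-g(\theta)-h(\theta)\}$ term common to both branches of the outer minimum. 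So the real content is lower-bounding the remaining mass: $\alpha_u$ on $[0,\theta)$ and the compensation to $v$ on $[\theta,1)$.

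The next step is the case analysis behind the outer $\min$ over the two expressions. The first (more complex) branch corresponds to the situation in the special case: when $y_v = \theta^-$, the vertex $w$ that matches $v$ actively is the start of the alternating path, and there is a threshold $\tau \in [\theta,1)$ such that $v$ remains the victim of $w$ — and hence collects compensation $h(\theta)$ — for all $y_v \in [\theta,\tau)$, while for $y_v < \theta$ the vertex $u$ actively matches some vertex of rank $\le \tau$, so $\alpha_u \ge 1 - g(\tau) - h(\tau)$ there. Running the analog of Lemmas~\ref{lem:alphav_special}--\ref{lem:alphau_special} with this $\tau$, and accounting for the subtlety that the compensation received may "run out" if $v$ becomes matched again before $y_v$ reaches $\tau$ (which is why the first term carries $\min\{(1-\theta)(1-g(1^-)-h(1^-)),(\tau-\theta)h(\theta)\}$ — the first argument is the worst-case residual $\alpha_u$ bound when $v$ gets matched, via the $1-g(1^-)-h(1^-)$ floor for an active $u$ matching a top-rank vertex), produces the first expression after minimizing over $\tau$. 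The second branch covers the degenerate case where $v$ is \emph{unmatched} at $y_v = \theta^-$ (so $\theta$ is $0$ or there is no such $w$), in which case we only get the weaker $h(\theta)$-versus-$(1-g(1^-)-h(1^-))$ comparison and the plain $1-g(\theta)$ (no $h$ subtracted, since $u$ need not be active) on $[\theta,1)$.

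The key lemma to prove carefully — and the main obstacle — is the generalization of Lemma~\ref{lem:compensate_special}: showing that $v$ is genuinely the victim of a well-chosen neighbor $w$ for the appropriate range of $y_v$, even when $u$'s status changes several times as $y_v$ increases and when $v$ may itself be active below $\theta$. This requires carefully re-running the alternating-path argument of Lemma~\ref{lemma:alternating_path}: identifying $w$ as the first vertex after $v$ on the symmetric-difference path $M(y_v=\theta^-,\vecmv) \triangle M(y_v=y,\vecmv)$, verifying $w$ has an earlier deadline than $v$ and is active (via Lemma~\ref{lemma:monotonicity}, since decisions before $w$'s deadline are unaffected by lowering $y_v$), and arguing the path from $w$ "lands on" $v$ before escaping to infinity — which may fail precisely when $v$ gets rematched, forcing the $\min$ with the residual term. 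I would handle the multiple-status-change issue for $u$ by a monotonicity/telescoping argument over the breakpoints of $y_v \mapsto (\text{rank }u\text{ matches})$, using that $h(y)/y$ is non-increasing to control how the compensation $v$ collects compares against what $u$ loses, and that $g$ and $h$ are non-decreasing so each sub-interval's $\alpha_u$ bound dominates the value at the right endpoint. Once Lemma~\ref{lemma:general_main} is in hand, plugging in and integrating over $y_u$, then invoking Lemma~\ref{lemma:dual_fitting}, is routine (modulo the LP-based choice of piecewise-linear $g,h$ deferred to Appendix~\ref{subsec:ratio}).
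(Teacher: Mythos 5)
Your overall skeleton matches the paper's: split at the marginal rank $\theta$, generalize the victim lemma so that it is conditional on $v$ being unmatched (the paper's Lemma~\ref{lem:compensate}), and amortize over the breakpoints of $u$'s status using the monotonicity of $h(y)/y$ (the paper's Lemma~\ref{lemma:technical_lower_bound}). However, there is a genuine gap in how you identify and justify the two branches of the outer $\min$. In the paper, the case split is \emph{not} "whether $v$ is unmatched at $y_v=\theta^\text{-}$ / whether $w$ exists" (for $\theta>0$, $v$ is by definition passive at $\theta^\text{-}$, so $w$ always exists); it is whether $u$ ever becomes \emph{unmatched} as $y_v$ decreases below $\theta$, i.e.\ whether the last threshold satisfies $\tau_m=1$. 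The entire content of the second branch is that in this case one may replace $1-g(\theta)-h(\theta)$ by the stronger $1-g(\theta)$ on $[\theta,1)$, and this requires a real argument (the "moreover" part of the paper's Lemma~\ref{lemma:lower_bound_a_u_below}): one shows via an alternating-path argument on the partial matchings at $u$'s deadline that, when $\tau_m=1$ and $y_v\ge\theta$, $u$ has no unmatched neighbor other than $v$, hence no victim, hence pays no compensation. Your stated reason, "no $h$ subtracted, since $u$ need not be active," does not work: whenever $u$ \emph{is} active (matching a rank-$\le\theta$ vertex) it could a priori have a victim and give away $h(\theta)$, which would only yield $1-g(\theta)-h(\theta)$. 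Relatedly, you assert early on that the term $(1-\theta)\min\{g(y_u),1-g(\theta)-h(\theta)\}$ is "common to both branches," which contradicts both the statement of the lemma and your own later description of the second branch.

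A secondary confusion: you attribute the floor $1-g(1^\text{-})-h(1^\text{-})$ inside $\min\{(1-\theta)(1-g(1^\text{-})-h(1^\text{-})),(\tau-\theta)h(\theta)\}$ to a "residual $\alpha_u$ bound" for an active $u$. In the paper this floor belongs to $\alpha_v$: for $y_v\ge\theta$ the vertex $v$ is either unmatched (and then collects compensation, by the conditional victim lemma) or \emph{active}, in which case it keeps $1-g(y_p)-h(y_p)\ge 1-g(1^\text{-})-h(1^\text{-})$ from its own matched edge; the $\min$ amortizes these two possibilities within the bound on $\expect{y_v}{\alpha_v\cdot\idr{y_v\ge\theta}}$ (Lemma~\ref{lemma:lower_bound_a_v}). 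Charging it to $\alpha_u$, which you have already bounded separately on $[\theta,1)$, would double-count. So while the first branch of your plan could likely be repaired along the paper's lines, the derivation of the second branch is missing its key lemma, and the bookkeeping of which dual variable absorbs the $\phi(1^\text{-})$ floor needs to be fixed before the stated bound follows.
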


Fix any neighbor $u$ of $v$ with an earlier deadline than $v$, and any $\vecmv$.
Let $\theta$ be the marginal rank of $v$, i.e. $v$ is passive only when $y_v<\theta$.
By Lemma~\ref{lemma:unmatched_neighbor}, we know that when $y_v \geq \theta$, $u$ either is passive or actively matches some vertex with rank at most $\theta$. 

We define in the following two lists of thresholds $\{\theta_i\}_{i=0}^{m+1}$ and $\{\tau_i\}_{i=0}^{m+1}$ that captures the matching statuses of $u$ when $y_v$ is smaller than the marginal rank $\theta$.

Imagine that we decrease $y_v$ continuously starting from $y_v = \theta$. Let $\theta_0 = \theta$ and $\tau_0 = \theta$.
We define $y_v = \theta_{i+1}^\text{-}$ to be the first moment after $\theta_i$ when $u$ actively matches some vertex $z_{i+1}$ with $y_{z_{i+1}} > \tau_i$.
For convenience of description, we say that $u$ actively matches a vertex with rank $1$ if $u$ is unmatched (by definition, the gain of $\alpha_u$ is $0$ in both descriptions since $1-g(1)=0$).
Define $\tau_{i+1} := y_{z_{i+1}}$.
Let $\theta_m$ be the last non-zero threshold.
For convenience, we define $\theta_{m+1} = 0$ and $\tau_{m+1} = 1$.
By definition we have the following fact.

\begin{fact} \label{fact:thresholds}
	There exists a sequence of non-increasing thresholds $\{\theta_i\}_{i=0}^{m+1}$ and a sequence of non-decreasing thresholds $\{\tau_i\}_{i=0}^{m+1}$ such that
	\begin{enumerate}
		\item for all $0\le i\le m$ and $y \in [\theta_{i+1}, \theta_i)$, $u$ is passive or actively matches some vertex with rank at most $\tau_i$ in $M(y_v=y,\vecmv[v])$;
		\item for all $1 \le i\le m$, $u$ actively matches a vertex $z_i$ with rank $\tau_i$ in $M(y_v = \theta_i^{\text{-}} , \vecmv[v])$.
	\end{enumerate}
\end{fact}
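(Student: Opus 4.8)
The plan is to derive Fact~\ref{fact:thresholds} from a single structural observation together with routine bookkeeping. The observation is that, for fixed $\vecmv$, the matching $M(y_v = y, \vecmv)$ is a piecewise constant function of $y \in [0,1)$ with only finitely many pieces: the interval $[0,1)$ splits into finitely many maximal \emph{regimes}, each of the form $[a,b)$ with $a,b$ in the finite set $\{0\} \cup \{\, y_w : w \in V \setminus \{v\}\,\} \cup \{1\}$, on each of which $M(y_v = y, \vecmv)$ is one fixed matching. I would prove this by induction along the (fixed) timeline of arrival and deadline events of \ranking: the value $y_v$ influences a decision of \ranking only at $v$'s own deadline and at the deadlines of $v$'s neighbors, and in each case only through comparisons of $y_v$ against other, already-fixed, ranks; hence the partial matching right after each event is constant in $y$ as long as all such comparisons have fixed outcomes, which is precisely the situation on the intervals between consecutive elements of the finite set above. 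The paper's tie-breaking convention, which treats $y_v = y_w$ as $y_v = y_w^{\text{-}}$, is what makes each regime half-open of the form $[a,b)$.

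Granting this, on each regime the status of $u$ is determined: either $u$ is passive, or $u$ is unmatched (which, following the convention stated before the Fact, we record as ``$u$ actively matches a vertex of rank $1$''), or $u$ actively matches one fixed vertex of one fixed rank. So $y \mapsto (\text{rank of the vertex } u \text{ actively matches})$ is, on $[0,\theta)$, piecewise constant with finitely many values in $\{\, y_w : w \in V \setminus \{v\}\,\} \cup \{1\}$. I would then run exactly the procedure described before the Fact: set $\theta_0 = \tau_0 = \theta$; given $\theta_i,\tau_i$, let $\theta_{i+1}$ be the top endpoint $b$ of the highest regime $[a,b)$ contained in $[0,\theta_i)$ on which $u$ actively matches some vertex $z_{i+1}$ of rank $y_{z_{i+1}} > \tau_i$, and set $\tau_{i+1} = y_{z_{i+1}}$; if no such regime exists, stop and set $m = i$, $\theta_{m+1} = 0$, $\tau_{m+1} = 1$. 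Since $\tau_{i+1} > \tau_i$ and every $\tau_i$ with $i \ge 1$ lies in the finite set $\{\, y_w : w \in V\setminus\{v\}\,\} \cup \{1\}$, the process terminates after finitely many steps; that $\{\theta_i\}$ is non-increasing and $\{\tau_i\}$ non-decreasing, including the final entries $\theta_{m+1}=0$ and $\tau_{m+1}=1$, is immediate from the construction.

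It then remains to verify the two properties. Property 2 is immediate: $y_v = \theta_i^{\text{-}}$ lies in the regime whose top endpoint is $\theta_i$, which by construction is a regime on which $u$ actively matches $z_i$ of rank $\tau_i$. For property 1, fix $i \le m$ and $y \in [\theta_{i+1},\theta_i)$; by construction $\theta_{i+1}$ is the supremum of the set of $y' \in [0,\theta_i)$ on which $u$ actively matches a vertex of rank $> \tau_i$, and since that set is a union of half-open regimes $[a,b)$ its supremum $\theta_{i+1}$ is itself not in the set, so no regime meeting $[\theta_{i+1},\theta_i)$ is one of these ``bad'' regimes; hence throughout $[\theta_{i+1},\theta_i)$ the vertex $u$ is passive or actively matches a vertex of rank at most $\tau_i$. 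The degenerate cases are harmless: if $\theta = 0$ then $[0,\theta)$ is empty and everything is vacuous, and if the procedure never produces a non-zero threshold then $m = 0$ and property 1 for $i=0$ is exactly the then-valid statement that $u$ is passive or matches rank at most $\tau_0 = \theta$ throughout $[0,\theta)$. I expect the only genuine work to be the structural observation --- establishing that the number of pieces is finite and tracking the $y^{\text{-}}$ tie-breaking carefully enough that the regimes come out half-open, which is what forces the half-open interval $[\theta_{i+1},\theta_i)$ in property 1 to have exactly the stated closed/open endpoints; the remainder is bookkeeping. One could also invoke Lemma~\ref{lemma:alternating_path} to describe how $M(y_v = y, \vecmv)$ changes across a regime boundary, but for this statement only the piecewise-constancy is needed.
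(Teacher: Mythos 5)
Your construction is exactly the paper's: the paper defines the thresholds by decreasing $y_v$ continuously from $\theta$ and records the Fact as holding ``by definition,'' and your piecewise-constancy observation simply makes explicit the well-definedness (finitely many constancy regimes, termination of the procedure, monotonicity of the two sequences) that the paper leaves implicit. The only caveat --- how ties at exact breakpoints are broken, which determines whether the endpoint $y=\theta_{i+1}$ itself satisfies property~1 --- is a measure-zero technicality that the paper also sidesteps via its $y^{\text{-}}$ convention, so your proof is correct and essentially the same argument.
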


For all $i\in[m]$, let $w_i$ be be vertex that actively matches $v$ in $M(y_v=\theta_i^\text{-}, \vecmv)$.

Observe that all $w_i$'s must be different, e.g. the deadline of $w_{i+1}$ must be earlier than $w_i$, in order for $w_{i+1}$ to match $v$ in $M(y_v = \theta_{i+1}^\text{-},\vecmv[v])$.
Moreover, we know that in $M(y_v = \theta,\vecmv)$, each $w_i$ matches a vertex with rank $\theta_i$, since $w_i$ chooses $v$ in $M(y_v = \theta_i^{\text{-}},\vecmv)$ but not in $M(y_v = \theta_i,\vecmv)$.

\begin{lemma} \label{lem:compensate}
	For all $y \in [\theta, \tau_i)$, if $v$ is unmatched in $M(y_v = y, \vecmv[v])$, then $v$ is the victim of $w_i$.
\end{lemma}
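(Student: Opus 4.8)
\textbf{Proof plan for Lemma~\ref{lem:compensate}.}
The plan is to mimic the proof of Lemma~\ref{lem:compensate_special}, but now track the fact that the ``bad'' rank $\tau$ of $u$ is replaced by the threshold $\tau_i$ coming from Fact~\ref{fact:thresholds}, and that the relevant compensating neighbor is $w_i$, the vertex that actively matches $v$ at the threshold $\theta_i$. Fix $y \in [\theta, \tau_i)$ and suppose $v$ is unmatched in $M(\vecy_1)$ where $\vecy_1 = (y_v = y, \vecmv)$. To show $v$ is the victim of $w_i$, we must verify three things: (i) $v$ is an unmatched neighbor of $w_i$ in $M(\vecy_1)$ — which is immediate from the assumption that $v$ is unmatched, since $w_i$ is a neighbor of $v$ by its definition; (ii) $w_i$ is active in $M(\vecy_1)$; and (iii) $v$ becomes matched in $M(\vecmv[w_i])$, i.e. when $w_i$ is removed from the graph.

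For (ii), I would compare with the reference matching $\vecy_2 = (y_v = \theta_i^{\text{-}}, \vecmv)$, in which $w_i$ actively matches $v$ by definition. For $w_i$ to actively match $v$ at its deadline in $M(\vecy_2)$, the deadline of $w_i$ must be earlier than that of $v$, and no decision made strictly before $w_i$'s deadline selects $w_i$ or $v$. Lowering $y_v$ from $\theta_i^{\text{-}}$ down to $y$ cannot change any of those earlier decisions (they do not involve $v$), so in $M(\vecy_1)$ the partial matching up to $w_i$'s deadline is unchanged, $w_i$ is still unmatched at its deadline, and since $v$ is unmatched in $M(\vecy_1)$ as well, the neighbor $v$ is still available to $w_i$; hence $w_i$ is active in $M(\vecy_1)$ (it matches $v$ or possibly some even better neighbor, but in any case actively).

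For (iii), I would run the now-standard alternating-path argument: removing $w_i$ from the graph triggers (a prefix of) the alternating path that is the symmetric difference between $M(\vecy_1)$ and $M(\vecy_2)$ — exactly the path that would be traversed as $y_v$ rises past $\theta_i$. Starting from $w_i$ (exclusive), we follow the path of displaced matches; this prefix terminates either when it reaches $v$ (some vertex on the path chooses $v$ rather than continuing), in which case $v$ is matched in $M(\vecmv[w_i])$ and we are done, or when it reaches $u$. But if it reaches $u$, then by Fact~\ref{fact:thresholds}(2) the next vertex $u$ would pick along the path has rank $\tau_i$, while $v$ has rank $y < \tau_i$, so $u$ would prefer $v$ and the path would in fact have stopped at $v$ before getting past $u$ — contradiction. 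Hence the prefix must end at $v$, giving (iii).

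The main obstacle I anticipate is making the ``prefix of the alternating path ends at $v$'' step fully rigorous: one has to argue carefully that, among all the ways the alternating path could unfold when $w_i$ is deleted, the comparison of ranks $y < \tau_i$ against the rank $\tau_i$ that $u$ sees in $M(y_v = \theta_i^{\text{-}}, \vecmv)$ genuinely forces $v$ to be grabbed no later than the moment $u$ is reached, rather than $v$ being bypassed by an earlier vertex on the path. This is precisely where the monotone structure from Lemma~\ref{lemma:alternating_path} and the definition of $\theta_i$ (as the first moment $u$'s active partner jumps above $\tau_{i-1}$, hence is exactly $\tau_i$ at $\theta_i^{\text{-}}$) must be invoked together; everything else is a routine re-run of the special-case proof.
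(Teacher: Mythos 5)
Your decomposition into (i)--(iii) and your argument for (ii) coincide with the paper's proof: the paper likewise dismisses (i) as trivial and argues that if $w_i$ were not active in $M(\vecy_1)$ it would already have been passive in $M(\vecy_2)$, since $v$ affects no decision before $w_i$'s deadline. (One small slip: going from $\vecy_2=(y_v=\theta_i^{\text{-}},\vecmv)$ to $\vecy_1=(y_v=y,\vecmv)$ \emph{raises} $y_v$, because $y\ge\theta\ge\theta_i>\theta_i^{\text{-}}$; the argument survives since no early decision picks $v$ at the lower rank and hence none does at the higher rank.)

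The genuine gap is in (iii). Without the simplifying assumptions of Lemma~\ref{lem:compensate_special}, your claim that deleting $w_i$ ``triggers a prefix of the symmetric difference between $M(\vecy_1)$ and $M(\vecy_2)$, which ends at $v$ or else reaches $u$ whose next choice has rank $\tau_i$'' is precisely the statement that needs proof, and you flag it as an obstacle without supplying the idea that closes it: a priori the cascade caused by removing $w_i$ (run with $y_v=y$, not $\theta_i^{\text{-}}$) could leave that path, or terminate, without ever offering $v$ or reaching $u$. The paper's actual proof of Lemma~\ref{lem:compensate} avoids all path bookkeeping by arguing by contradiction: let $\vecy_3$ be $\vecy_1$ with the $w_i$-entry removed and suppose $v$ is unmatched in $M(\vecy_3)$; then $v$ influences no decision in that run, and since $(w_i,v)$ is a matched pair in $M(\vecy_2)$, removing both changes nothing else, so $M(\vecy_3)=M(\vecy_2)\setminus\{(w_i,v)\}$. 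Consequently $u$ actively matches $z_i$ with rank $\tau_i$ in $M(\vecy_3)$ while its neighbor $v$, of rank $y<\tau_i$, is unmatched --- contradicting the greedy choice at $u$'s deadline. Your path-chasing plan can be completed, but essentially only by routing through this same identity (conditioned on $v$ remaining unmatched, the cascade is forced to be the entire path, after which $u$'s preference for $z_i$ over $v$ is absurd); as written, that decisive step is missing.
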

\begin{proof}
	Let $\vecy_1 = (y_v = y, \vecmv)$, where $y \in [\theta, \tau_i)$.
	Trivially, $v$ is an unmatched neighbor of $w_i$ in $M(\vecy_1)$. 
	To prove that $v$ is the victim of $w_i$, it suffices to show that $w_i$ is active in $M(\vecy_1)$ and $v$ becomes matched when we remove $w_i$ from the graph.
	
	Let $\vecy_2 = (y_v = \theta_i^{\text{-}}, \vecmv)$.
	By definition, $v$ is passively matched by $w_i$ and $u$ actively matches $z_i$ with $y_{z_i} = \tau_i$ in $M(\vecy_2)$.
	It is easy to see that $w_i$ is also active in $M(\vecy_1)$, as otherwise, $w_i$ should still be passive in $M(\vecy_2)$ given that $v$ does not affect any decisions before $w_i$'s deadline.
	
	Let $\vecy_3$ be the ranks by removing the $w_i$-th entry from $\vecy_1$.
	Assume for contrary that $v$ is unmatched in $M(\vecy_3)$.
	Then we have $M(\vecy_3) = M(\vecy_2) \setminus \{ (w_i,v) \}$.
	This implies that $u$ actively matches $z_i$ in $M(\vecy_3)$ while $v$ (with rank $y_v < y_{z_i} = \tau_i$) is unmatched, which is a contradiction.
\end{proof}

Equipped with Lemma~\ref{lem:compensate}, we first give a lower bound on the expected gain of $\alpha_v$.
For notational convenience, we define a new function $\phi: [0,1]\rightarrow [0,1]$ such that $\phi(y) := 1-g(y)-h(y)$.
Recall by definition of $g$ and $h$, $\phi$ is a non-increasing function with $\phi(1) = 0$.

\begin{lemma} \label{lemma:lower_bound_a_v}
	$\expect{y_v}{\alpha_v} \ge \int_{0}^{\theta}g(y_v)dy_v + \min \left\{(1-\theta)\phi(1^\text{-}), \sum_{i=0}^{m} (\tau_i - \theta) h(\theta_i) \right\}$.
\end{lemma}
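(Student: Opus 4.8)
The plan is to bound $\expect{y_v}{\alpha_v}$ by partitioning the interval $[0,1)$ of possible ranks of $v$ according to the thresholds $\theta = \theta_0 \geq \theta_1 \geq \cdots \geq \theta_{m+1} = 0$, and to account separately for the gain $v$ receives as a passive vertex (when $y_v < \theta$) and the compensations $v$ receives from the vertices $w_i$ when $y_v \in [\theta, \tau_i)$. First I would observe that when $y_v < \theta$, by definition of the marginal rank, $v$ is passive and hence $\alpha_v = g(y_v)$, contributing $\int_0^\theta g(y_v)\,dy_v$; this is unconditional. The remaining work is to lower bound the expected compensation $v$ collects when $y_v \geq \theta$.

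For the compensation term, the key tool is Lemma~\ref{lem:compensate}: for each $i \in [m]$ and each $y \in [\theta, \tau_i)$, if $v$ is unmatched in $M(y_v = y, \vecmv)$ then $v$ is the victim of $w_i$, and since $w_i$ matches a vertex of rank $\theta_i$ in that matching, $v$ receives compensation exactly $h(\theta_i)$ from $w_i$. The subtlety is that $v$ might be matched (not unmatched) for some ranks $y \geq \theta$ — and in that case $v$ gets at least $g(y_v) \ge g(\theta) \ge \phi(1^\text{-})$ arguably, but more importantly this is precisely why the bound is a minimum of two quantities. I would split into two cases. Case (a): $v$ is unmatched in $M(y_v = y, \vecmv)$ for \emph{all} $y \in [\theta, 1)$. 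Then for each fixed $y \in [\theta,1)$, summing the compensations from all $w_i$ with $\tau_i > y$ gives $\sum_{i : \tau_i > y} h(\theta_i)$; integrating over $y \in [\theta,1)$ and swapping the order of summation yields $\sum_{i=1}^{m} (\min\{\tau_i,1\} - \theta) h(\theta_i) = \sum_{i=0}^{m}(\tau_i - \theta)h(\theta_i)$ once we fold in the $i=0$ term which vanishes since $\tau_0 = \theta$ (and recalling $\tau_{m+1}=1$ is not an index here). Hence in this case $\expect{y_v}{\alpha_v} \ge \int_0^\theta g(y_v)\,dy_v + \sum_{i=0}^{m}(\tau_i-\theta)h(\theta_i)$. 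Case (b): $v$ is matched for some $y_0 \in [\theta,1)$; then by Lemma~\ref{lemma:monotonicity} (monotonicity — once $v$ is passive it stays passive as its rank decreases, and one should check the analogous "matched stays matched" direction via the alternating-path structure) $v$ is matched, and in fact passive since $y_0 \ge \theta$ forces it cannot be active... more carefully: for $y_v \ge \theta$, $v$ being matched means $v$ is passive, so $\alpha_v = g(y_v) \ge g(\theta)$; combined with the observation that the compensations already cover part of $[\theta,1)$, the total is at least $\int_0^\theta g(y_v)\,dy_v + (1-\theta)\phi(1^\text{-})$ — here $\phi(1^\text{-}) = 1 - g(1^\text{-}) - h(1^\text{-})$ is a clean lower bound on $g(\theta)$ type contributions one obtains from the structure, and the factor $(1-\theta)$ is the length of the integration interval. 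Taking the minimum over the two cases gives exactly the claimed bound.

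I expect the main obstacle to be Case (b): arguing rigorously that $(1-\theta)\phi(1^\text{-})$ is a valid lower bound on $\expect{y_v}{\alpha_v \cdot \idr{y_v \ge \theta}}$ when $v$ is sometimes matched. The point is that if $v$ becomes matched at some rank $y_0 \ge \theta$, then by monotonicity of Ranking the matching partner and status of $v$ are controlled as $y_v$ varies, and one must show that for all $y_v \ge \theta$, $v$ either stays matched (earning $g(y_v) \ge g(\theta) \ge \phi(1^\text{-})$, using $g + h \le 1$ so $g \ge 1 - g - h$ when... actually $g(\theta) \ge 1 - g(1^\text{-}) - h(1^\text{-})$ needs $g$ nondecreasing and $1 - g(1^\text{-}) - h(1^\text{-}) = \phi(1^\text{-})$, and $g(\theta) \ge g(0) \ge$ this only if $\phi(1^\text{-})$ is small enough — so one actually uses the compensation structure here too) or is unmatched and receives compensation bounded below appropriately. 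The cleanest route is probably to show that in Case (b), for \emph{every} $y \in [\theta,1)$ either $v$ is passively matched with $\alpha_v = g(y) \ge \phi(1^\text{-})$ (using a direct comparison since $\phi(1^\text{-})$ is a constant that will be chosen $\le g(0)$ in the final optimization, or bounding via the alternating path that the partner rank is at most $1^\text{-}$), or $v$ is the victim of the relevant $w_i$ and gets $h(\theta_i) \ge \phi(1^\text{-})$ by a monotonicity-driven comparison — but I suspect the paper instead exploits that once $v$ is matched at \emph{one} point $y_0 \ge \theta$, monotonicity forces a specific alternating-path behavior that makes the per-point bound $\phi(1^\text{-})$ transparent. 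I would carefully state and use an auxiliary claim to this effect, then integrate over $[\theta,1)$.
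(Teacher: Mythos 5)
Your treatment of the case ``$v$ is unmatched for all $y_v\in[\theta,1)$'' is the same as the paper's and is fine, but your Case (b) contains a genuine error and misses the idea that actually produces the $(1-\theta)\phi(1^\text{-})$ term. By the definition of the marginal rank, $v$ is passive only for $y_v<\theta$; for $y_v\ge\theta$ a matched $v$ is necessarily \emph{active}, not passive, so your claim ``for $y_v\ge\theta$, $v$ being matched means $v$ is passive, so $\alpha_v=g(y_v)$'' is exactly backwards, and the subsequent attempt to compare $g(\theta)$ with $\phi(1^\text{-})$ is chasing the wrong quantity. The correct accounting is: if $v$ is active and passively matches some vertex $p$, then $v$ gains $1-g(y_p)$ from gain sharing and may pay $h(y_p)$ as compensation to its own victim, so $\alpha_v\ge 1-g(y_p)-h(y_p)=\phi(y_p)\ge\phi(1^\text{-})$. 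That is where $\phi(1^\text{-})$ comes from; it is not a bound on any $g(\cdot)$ contribution.

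A second gap is that your Case (b) (``$v$ is matched for \emph{some} $y_0\ge\theta$'') implicitly requires handling ranks in $[\theta,1)$ at which $v$ might be unmatched, and your fallback there --- that the compensation $h(\theta_i)$ received at such ranks is at least $\phi(1^\text{-})$ --- is false in general (with the paper's functions $h(\theta_i)$ can be arbitrarily close to $h(0)=0$ while $\phi(1^\text{-})>0$). What rescues the argument is Lemma~\ref{lemma:monotonicity}: since $v$ is active or unmatched at $y_v=\theta$, increasing $y_v$ leaves the matching unchanged, so $v$'s status (and its partner, if active) is the \emph{same} for every $y_v\in[\theta,1)$; mixed behavior cannot occur. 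Hence the dichotomy is ``active for all $y_v\ge\theta$'' versus ``unmatched for all $y_v\ge\theta$'', and the two bounds $(1-\theta)\phi(1^\text{-})$ and $\sum_{i=0}^{m}(\tau_i-\theta)h(\theta_i)$ each apply to one whole case, which is why their minimum is a valid lower bound (a pointwise minimum integrated over $[\theta,1)$ would not, by itself, dominate the minimum of the two integrals). Your proposal never invokes this monotonicity step, so as written Case (b) does not go through.
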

\begin{proof}
	By definition, $v$ is passive when $y_v<\theta$. Hence we have $\expect{y_v}{\alpha_v \cdot \idr{y_v < \theta}} = \int_{0}^{\theta}g(y_v)dy_v$, which corresponds to the first term on the RHS.
	
	For all $y_v \geq \theta$, $v$ is either active or unmatched.
	In the first case, let $p$ be matched passively by $v$ in $M(y_v,\vecmv[v])$.
	We know that $v$ gains $1-g(y_p)$ during the gain sharing phase and gives away $h(y_p)$ to its victim (if any),
	which implies $\alpha_v \ge 1-g(y_z)-h(y_z)  = \phi(y_z) \ge \phi(1^{\text{-}})$.
	Hence we have $\expect{y_v}{\alpha_v \cdot \idr{y_v \ge \theta}} \ge (1-\theta)\phi(1^\text{-})$.
	
	In the second case, by Lemma~\ref{lem:compensate}, $v$ is the victim of $w_i$ when $y_v \in [\theta,\tau_i)$.
	Hence $v$ gains $h(\theta_i)$ from $w_i$ in the compensation phase (recall that $w_i$ matches a vertex with rank $\theta_i$ when $y_v \in [\theta, \tau_i)$).
	Putting all compensation (from $w_1,\ldots,w_m$) together, we get $\expect{y_v}{\alpha_v \cdot \idr{y_v \ge \theta}} \ge \sum_{i=0}^{m} (\tau_i - \theta) h(\theta_i)$, which corresponds to the second the term on the RHS.
\end{proof}

\begin{lemma} \label{lemma:lower_bound_a_u_above}
	$\expect{y_v}{\alpha_u \cdot \idr{y_v < \theta}} \ge \sum_{i=0}^{m} (\theta_i - \theta_{i+1}) \min \left\{g(y_u),\phi(\tau_i)\right\}$.
\end{lemma}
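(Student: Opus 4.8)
The plan is to split the integral defining $\expect{y_v}{\alpha_u \cdot \idr{y_v < \theta}}$ along the thresholds $\{\theta_i\}$ and bound $\alpha_u$ separately on each subinterval. Since $\theta_0 = \theta$ and $\theta_{m+1} = 0$, the interval $[0,\theta)$ decomposes as $\bigcup_{i=0}^{m}[\theta_{i+1},\theta_i)$, so
\[
\expect{y_v}{\alpha_u \cdot \idr{y_v < \theta}} = \sum_{i=0}^{m} \int_{\theta_{i+1}}^{\theta_i} \alpha_u \, dy_v ,
\]
and it suffices to prove the pointwise bound $\alpha_u \ge \min\{g(y_u),\phi(\tau_i)\}$ for every $y_v \in [\theta_{i+1},\theta_i)$; integrating that bound over each piece and summing would then give exactly the claimed inequality.

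To establish the pointwise bound, fix $i$ and $y_v \in [\theta_{i+1},\theta_i)$ and invoke part~1 of Fact~\ref{fact:thresholds}: in $M(y_v,\vecmv)$ the vertex $u$ is either passive or actively matches some vertex $z'$ with $y_{z'} \le \tau_i$, and these are the only possibilities. If $u$ is passive, the gain-sharing rule sets $\alpha_u = g(y_u)$; $u$ is then not active, so it sends no compensation and can only receive some, whence $\alpha_u \ge g(y_u)$. If $u$ actively matches $z'$, gain sharing gives $u$ an amount $1 - g(y_{z'})$ and the compensation phase takes away at most $h(y_{z'})$ (exactly $h(y_{z'})$ if $u$ has a victim, and nothing otherwise), so $\alpha_u \ge 1 - g(y_{z'}) - h(y_{z'}) = \phi(y_{z'}) \ge \phi(\tau_i)$, where the last step uses $y_{z'}\le \tau_i$ together with the fact (recorded just before Lemma~\ref{lemma:lower_bound_a_v}) that $\phi$ is non-increasing. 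In either case $\alpha_u \ge \min\{g(y_u),\phi(\tau_i)\}$.

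I do not expect a genuine obstacle here: Fact~\ref{fact:thresholds} already pins down the matching status of $u$ on each interval, and the compensation term is under control because $1 - g - h = \phi \ge 0$ by the assumptions on $h$. The only points that need a little care are (i) verifying that a passive $u$ loses nothing in the compensation phase --- which holds because a victim is by definition an unmatched vertex, so a passive $u$ is never a compensator; and (ii) the boundary convention that an unmatched $u$ is treated as ``actively matching a rank-$1$ vertex'', so that when $\tau_i = 1$ we get $\alpha_u = 0 = \phi(1) = \phi(\tau_i)$ and the formula stays uniform.
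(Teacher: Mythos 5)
Your proof is correct and takes essentially the same route as the paper's: partition $[0,\theta)$ into the segments $[\theta_{i+1},\theta_i)$, use Fact~\ref{fact:thresholds} to conclude that on each segment $u$ is either passive (keeping $\alpha_u = g(y_u)$) or actively matched to a vertex of rank at most $\tau_i$ (keeping $\alpha_u \ge 1-g(y_z)-h(y_z) = \phi(y_z) \ge \phi(\tau_i)$ after any compensation), then integrate and sum. One cosmetic remark: a passive $u$ loses nothing in the compensation phase simply because the compensation rule applies only to active vertices, not because victims are unmatched (that fact instead shows a passive $u$ receives nothing); this does not affect the validity of your argument.
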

\begin{proof}
	We partition the interval $[0,\theta)$ into $m+1$ segments: $[\theta_{i+1},\theta_{i})$, for $0\le i \le m$.
	Fix any $i$, and consider $y_v \in [\theta_{i+1},\theta_{i})$.
	If $u$ is passive in $M(y_v,\vecmv[v])$, then we have $\alpha_u \geq g(y_u)$.
	Otherwise, we know that $u$ actively matches a vertex $z$ with $y_z \leq \tau_i$ (by Fact~\ref{fact:thresholds}).
	Hence $u$ gains $1-g(y_z)$ during the gain sharing phase and gives away $h(y_z)$ to its victim (if any), i.e., we have $\alpha_u \ge \phi(y_z) \ge \phi(\tau_i)$.
	Summing up the gain from the $m+1$ segments concludes the proof.
\end{proof}

Observe that for lower bounding $\expect{y_v}{\alpha_u + \alpha_v}$,  we shall consider the total gain of $\alpha_u + \alpha_v$. We may omit the compensation from $u$ to $v$, since it does not change the summation. For analysis convenience, we assume $v$ is never a victim of $u$.

\begin{lemma} \label{lemma:lower_bound_a_u_below}
	$\expect{y_v}{\alpha_u \cdot \idr{y_v \ge \theta}} \ge (1-\theta) \min\{g(y_u), \phi(\theta)\}$. Moreover, if $\tau_m=1$, then we have
	$\expect{y_v}{\alpha_u \cdot \idr{y_v \ge \theta}} \ge (1-\theta) \min\{g(y_u), 1-g(\theta)\}.$
\end{lemma}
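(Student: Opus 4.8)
The plan is to analyze $\alpha_u$ on the event $y_v \ge \theta$ by invoking Lemma~\ref{lemma:unmatched_neighbor}: since $\theta$ is the marginal rank of $v$, for every $y_v \in [\theta, 1)$ the vertex $u$ is either passive or actively matches some vertex of rank at most $\theta$. I would split into these two cases. If $u$ is passive for a given $y_v \ge \theta$, then $\alpha_u = g(y_u)$ by the gain-sharing rule (a passive vertex keeps its full share $g(\cdot)$ of the rank of its active partner, and here $u$ is the one whose rank governs; more carefully, a passive $u$ gains $g(y_u)$ from its incident matched edge and, being passive, sends no compensation). If instead $u$ actively matches a vertex $z$ with $y_z \le \theta$, then $u$ gains $1 - g(y_z)$ in the gain-sharing phase and gives away at most $h(y_z)$ as compensation, so $\alpha_u \ge 1 - g(y_z) - h(y_z) = \phi(y_z) \ge \phi(\theta)$ by monotonicity of $\phi$ (recall $\phi = 1 - g - h$ is non-increasing). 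Hence pointwise $\alpha_u \ge \min\{g(y_u), \phi(\theta)\}$ on $y_v \ge \theta$, and integrating $y_v$ over $[\theta, 1)$ gives the first bound $(1-\theta)\min\{g(y_u), \phi(\theta)\}$.

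For the improved bound under the extra hypothesis $\tau_m = 1$: recall $\tau_m$ is the largest threshold rank that $u$ ever actively matches as $y_v$ decreases below $\theta$; the condition $\tau_m = 1$ means that at some point (namely for $y_v$ slightly below $\theta_m$) the vertex $u$ becomes unmatched. I want to use this to upgrade $\phi(\theta)$ to $1 - g(\theta)$ in the active case, i.e., to argue that when $u$ actively matches some $z$ with $y_z \le \theta$ and $y_v \ge \theta$, the compensation term $h(y_z)$ can in fact be dropped — giving $\alpha_u \ge 1 - g(y_z) \ge 1 - g(\theta)$. The key observation is that the compensation $u \to (\text{its victim})$ need not be subtracted: whenever $\tau_m = 1$, the victim of $u$ (in the regime $y_v \ge \theta$, if $u$ is active there) is exactly $v$ itself, and by the convention adopted just before Lemma~\ref{lemma:lower_bound_a_u_below} ("we assume $v$ is never a victim of $u$"), that compensation is not charged to $\alpha_u$ when we are lower-bounding the sum $\alpha_u + \alpha_v$. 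Concretely, when $u$ is active and matches $z$ with $y_z \le \theta$ while $v$ is unmatched, removing $u$ from the graph triggers an alternating path; since $u$ had been actively matched to a rank $\le \theta$ vertex $z$, and since the structure recorded by the thresholds tells us $u$ could instead have been unmatched (the $\tau_m = 1$ event), the path reaches $v$ and $v$ becomes matched upon removing $u$ — so $v$ is $u$'s victim. Therefore in this case $\alpha_u \ge 1 - g(y_z) \ge 1 - g(\theta)$, and combined with the passive case $\alpha_u = g(y_u)$ we get $\alpha_u \ge \min\{g(y_u), 1 - g(\theta)\}$ pointwise on $y_v \ge \theta$; integrating over $[\theta, 1)$ yields the claimed second bound.

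The main obstacle I anticipate is making the victim argument in the $\tau_m = 1$ case fully rigorous: I need to show that for $y_v \ge \theta$ with $u$ active, the victim of $u$ is precisely $v$ (and not some other unmatched vertex), so that the "omit compensation from $u$" convention legitimately applies. This requires tracking the alternating path obtained by removing $u$ and arguing it terminates at $v$ — the argument should parallel the proof of Lemma~\ref{lem:compensate}, but with the roles of the regimes $y_v < \theta$ and $y_v \ge \theta$ essentially reversed: here we use that $u$ was, at threshold $\theta_m$ with $\tau_m = 1$, on the verge of becoming unmatched, which forces the path from $u$ to propagate and land on $v$ rather than dissipating elsewhere. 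A secondary subtlety is that this improved bound need only hold under $\tau_m = 1$ precisely because otherwise the path from removing $u$ might terminate at a vertex other than $v$ (so the compensation genuinely leaves $u$ and must be subtracted), which is why the first, weaker bound is stated unconditionally and the second only conditionally.
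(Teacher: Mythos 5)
Your proof of the first bound is fine and is the same as the paper's: by Lemma~\ref{lemma:unmatched_neighbor}, for $y_v \ge \theta$ the vertex $u$ is passive (gaining $g(y_u)$) or actively matches some $z$ with $y_z \le \theta$ (gaining at least $\phi(y_z) \ge \phi(\theta)$ after compensation), and integrating over $[\theta,1)$ gives $(1-\theta)\min\{g(y_u),\phi(\theta)\}$.

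The second bound is where you have a genuine gap, and you have partly flagged it yourself. Two problems. First, your case analysis only treats the situation where $v$ is unmatched in $M(y_v,\vecmv)$; but for $y_v \ge \theta$ the vertex $v$ may also be \emph{active}, in which case $v$ is not an unmatched neighbor of $u$ and hence cannot be $u$'s victim, so your claim ``the victim of $u$ is exactly $v$'' cannot hold and you have given no reason why $u$'s compensation to a possibly different victim can be dropped. Second, even in the $v$-unmatched case, the crucial step --- that the alternating path created by removing $u$ terminates at $v$ --- is asserted, not proven; the thresholds $\theta_m,\tau_m$ describe what happens at the rank $y_v=\theta_m^{\text{-}}<\theta$, and transferring that information to the regime $y_v\ge\theta$ requires an explicit comparison between the two rank vectors, which your sketch does not carry out. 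The paper's argument avoids both issues by proving the stronger structural fact that for every $y_v\ge\theta$ the vertex $u$ has \emph{no unmatched neighbor other than $v$} in $M(y_v,\vecmv)$ (hence no victim at all, given the convention excluding $v$): it compares the \emph{partial} matchings at $u$'s deadline under $(y_v,\vecmv)$ and under $(y_v=\theta_m^{\text{-}},\vecmv)$, notes that their symmetric difference is an alternating path whose endpoints must be $u$ (matched in the former, unmatched in the latter since $\tau_m=1$) and $v$ (unmatched in the former since it is not passive, matched in the latter since otherwise $u$ would have grabbed it), and concludes that any unmatched neighbor $z\ne v$ of $u$ would lie off the path, remain unmatched at $u$'s deadline under $(y_v=\theta_m^{\text{-}},\vecmv)$, and force $u$ to match it --- contradicting $u$ being unmatched there. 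To repair your proof you would need essentially this argument; your weaker goal (``any victim of $u$ must be $v$'') is exactly what it delivers, and it covers the $v$-active case for free.
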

\begin{proof}
	For all $y_v \ge \theta$, $u$ is either passive or actively matches a vertex with rank at most $\theta$. Therefore, $\alpha_u \ge \min \{g(y_u), \phi(\theta)\}$. Integrating $y_v$ from $\theta$ to $1$ gives the first statement.
	
	When $\tau_m=1$, we know that $u$ is unmatched when $y_v = \theta_m^{\text{-}}$.
	Fix any $\vecy = (y_v,\vecmv[v])$, where $y_v \ge \theta$.
	We show that $u$ does not have any unmatched neighbor other than $v$ in $M(\vecy)$, which implies that $u$ does not have a victim and hence $\alpha_u \ge \min \{g(y_u), 1-g(\theta)\}$.
	
	Suppose otherwise, let $z \neq v$ be the unmatched neighbor of $u$ in $M(\vecy)$.
	
	Let $\vecy_1 = (y_v = \theta_m^{\text{-}}, \vecmv)$.
	We know that $u$ is matched in $M(\vecy)$ and unmatched in $M(\vecy_1)$.
	Consider the partial matchings produced right after $u$'s deadline when \ranking is run with $\vecy$ and $\vecy_1$, respectively.
	We denote the matchings by $M_u(\vecy)$ and $M_u(\vecy_1)$, respectively.
	It is easy to see that the symmetric difference between $M_u(\vecy)$ and $M_u(\vecy_1)$ is an alternating path, with $u$ being one endpoint.
	Observe that $v$ is matched in $M_u(\vecy_1)$ (as $u$ is unmatched), and is unmatched in $M_u(\vecy)$ (as it is not passive in $M(\vecy)$).
	Hence $v$ is the other end point of the alternating path.
	Consequently, we know that $z$ is unmatched in $M_u(\vecy_1)$ (as it is unmatched in $M_u(\vecy)$), which is a contradiction as its neighbor $u$ is also unmatched in $M_u(\vecy_1)$.
\end{proof}

The next technical lemma shows that the worst case is achieved when there exists only one threshold $\theta_m = \theta$, i.e. $u$ matches some vertex with rank $\tau > \theta$ in $M(y_v=\theta^{\text{-}}, \vecmv)$, and matches a vertex with rank at most $\tau$ for all $y_v < \theta$.

\begin{lemma} \label{lemma:technical_lower_bound}
	Given that $h(y)/y$ is a non-increasing function, we have
	\begin{align*}
		& \min \bigg\{(1-\theta)\phi(1^{\text{-}}), \sum_{i=0}^{m} (\tau_i - \theta) h(\theta_i) \bigg\} + \sum_{i=0}^{m} (\theta_i-\theta_{i+1}) \min \{g(y_u),\phi(\tau_i)\} \\
		\ge & \min_{i} \bigg\{ \min \left\{(1-\theta)\phi(1^{\text{-}}), (\tau_i - \theta) h(\theta) \right\} + \theta \min \left\{g(y_u),\phi(\tau_i)\right\} \bigg\}.
	\end{align*}
\end{lemma}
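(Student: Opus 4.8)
The plan is to exploit that the right-hand side is a minimum over $i$: it suffices to produce, for any fixed thresholds, one index $i^\star$ with $\min\{C,S\}+\sum_i(\theta_i-\theta_{i+1})a_i\ge\min\{C,b_{i^\star}h(\theta)\}+\theta a_{i^\star}$, where I abbreviate $C=(1-\theta)\phi(1^\text{-})$, $S=\sum_i(\tau_i-\theta)h(\theta_i)$, $b_i=\tau_i-\theta$ (non-decreasing, $b_0=0$) and $a_i=\min\{g(y_u),\phi(\tau_i)\}$ (non-increasing, since $\phi$ is non-increasing and the $\tau_i$ non-decreasing), and I record that $\sum_i(\theta_i-\theta_{i+1})=\theta_0-\theta_{m+1}=\theta$ and $\theta_k=\sum_{l\ge k}p_l$ with $p_l\eqdef\theta_l-\theta_{l+1}\ge0$.

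First I would dispatch the case $S\ge C$. There $\min\{C,S\}=C$, and $\sum_i(\theta_i-\theta_{i+1})a_i\ge\big(\sum_i(\theta_i-\theta_{i+1})\big)a_m=\theta a_m$, so taking $i^\star=m$ gives LHS $\ge C+\theta a_m\ge\min\{C,b_mh(\theta)\}+\theta a_m$, as wanted. So assume $S<C$; then the LHS equals $S+T_2$ with $T_2\eqdef\sum_i(\theta_i-\theta_{i+1})a_i$, and since $\min\{C,b_ih(\theta)\}\le b_ih(\theta)$ it is enough to prove the $C$-free strengthening $S+T_2\ge\min_{0\le i\le m}\big(b_ih(\theta)+\theta a_i\big)$. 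Let $j^\star$ attain this minimum.

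The estimates then fall out almost mechanically. Using $T_2=\theta a_0-\sum_{k=1}^m\theta_k(a_{k-1}-a_k)$ and Abel summation, $T_2-\theta a_{j^\star}=\sum_{0\le k<j^\star}p_k(a_k-a_{j^\star})-\sum_{k>j^\star}p_k(a_{j^\star}-a_k)$. Optimality of $j^\star$ — i.e.\ $b_{j^\star}h(\theta)+\theta a_{j^\star}\le b_ih(\theta)+\theta a_i$ for all $i$ — gives $a_k-a_{j^\star}\ge(b_{j^\star}-b_k)h(\theta)/\theta$ when $k<j^\star$ and $a_{j^\star}-a_k\le(b_k-b_{j^\star})h(\theta)/\theta$ when $k>j^\star$. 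The hypothesis that $h(y)/y$ is non-increasing gives $h(\theta_k)\ge(\theta_k/\theta)h(\theta)$ (valid since $\theta_k\le\theta_0=\theta$), hence $S\ge\tfrac{h(\theta)}{\theta}\sum_{k=1}^m b_k\theta_k=\tfrac{h(\theta)}{\theta}\sum_{k=1}^m p_k\sum_{l=1}^k b_l$. Substituting all three bounds into $S+T_2-\big(b_{j^\star}h(\theta)+\theta a_{j^\star}\big)$ factors out $h(\theta)/\theta$ and leaves the purely combinatorial quantity $\sum_{k=1}^m p_k\sum_{l=1}^k b_l+\sum_{0\le k<j^\star}p_k(b_{j^\star}-b_k)-\sum_{k>j^\star}p_k(b_k-b_{j^\star})-\big(\sum_{k=0}^m p_k\big)b_{j^\star}$, which collapses, after routine rearrangement (the $b_{j^\star}$-coefficient is $-p_{j^\star}$), to $\sum_{k=1}^m p_k\big(\sum_{l=1}^k b_l-b_k\big)=\sum_{k=1}^m p_k\sum_{l=1}^{k-1}b_l\ge0$. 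That closes the argument.

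I expect the only genuinely non-mechanical step to be choosing $j^\star$ to be the minimizer and then using its optimality \emph{in both directions} — upper-bounding the $a$-gaps above $j^\star$ and lower-bounding them below $j^\star$, each time by a $b$-gap scaled by exactly $h(\theta)/\theta$. It is this common scaling that makes the bounds compatible with $h(\theta_k)\ge(\theta_k/\theta)h(\theta)$, so that the three families of inequalities telescope into a single manifestly non-negative sum rather than working against each other; the reduction to one index, the easy case $S\ge C$, the removal of the outer $\min\{C,\cdot\}$, and the final combinatorial identity are all bookkeeping.
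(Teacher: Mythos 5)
Your proof is correct, and it shares the paper's skeleton: the same case split according to which term attains $\min\bigl\{(1-\theta)\phi(1^\text{-}),\sum_i(\tau_i-\theta)h(\theta_i)\bigr\}$, an identical treatment of the first case (bound every $\min\{g(y_u),\phi(\tau_i)\}$ by the last one and use $\sum_i(\theta_i-\theta_{i+1})=\theta$, then take index $m$ on the right), and, in the main case, the same two key ingredients, namely $h(\theta_i)\ge(\theta_i/\theta)\,h(\theta)$ from the monotonicity of $h(y)/y$ and $\sum_i(\theta_i-\theta_{i+1})=\theta$. Where you genuinely differ is in how the final weighted inequality is verified. The paper keeps each compensation term $(\tau_i-\theta)h(\theta_i)$ attached to its own segment, rewrites the left side as $\sum_i(\theta_i-\theta_{i+1})\bigl(\tfrac{\tau_i-\theta}{\theta_i-\theta_{i+1}}h(\theta_i)+\min\{g(y_u),\phi(\tau_i)\}\bigr)$, lower-bounds each bracket by $\tfrac{\tau_i-\theta}{\theta}h(\theta)+\min\{g(y_u),\phi(\tau_i)\}$ and hence by the minimum over $j$, and sums the weights--a three-line domination/convex-combination argument. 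You instead redistribute $S$ over segments via $\theta_k=\sum_{l\ge k}p_l$, fix the minimizing index $j^\star$, and check the inequality against it by Abel summation plus the two-sided optimality of $j^\star$, collapsing everything to the manifestly nonnegative surplus $\sum_k p_k\sum_{l<k}b_l$ (in your notation). Your route is longer, but it buys two small things: it never divides by $\theta_i-\theta_{i+1}$, so it is indifferent to degenerate (coinciding) thresholds, and the discarded slack $\sum_k p_k\sum_{l<k}b_l$ makes explicit how much the bound gives away; the paper's version is shorter and needs no choice of minimizer. Both arguments, like the paper's, tacitly assume $\theta>0$ (division by $\theta$), which is harmless since for $\theta=0$ the claim is trivial: the index $i=0$ has $\tau_0=\theta$, so the right-hand side is $0$.
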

\begin{proof}
	Consider the first term of LHS.
	If $(1-\theta)\phi(1^{\text{-}}) < \sum_{i=0}^{m} (\tau_i - \theta) h(\theta_i)$, we have
	\begin{align*}
		\text{LHS} \ge & (1-\theta)\phi(1^{\text{-}}) +  \sum_{i=0}^{m} (\theta_i - \theta_{i+1}) \min \left\{g(y_u),\phi(\tau_m)\right\} \\
		= & (1-\theta)\phi(1^{\text{-}}) +  \theta \min \left\{g(y_u),\phi(\tau_m)\right\} \ge \text{RHS}.
	\end{align*}
	
	If $(1-\theta)\phi(1^{\text{-}}) \ge \sum_{i=0}^{m} (\tau_i - \theta) h(\theta_i)$, we have
	\begin{align*}
		\text{LHS} = & \sum_{i=0}^{m} (\tau_i - \theta) h(\theta_i) + \sum_{i=0}^{m} (\theta_i - \theta_{i+1}) \min \{g(y_u),\phi(\tau_i)\} \\
		= & \sum_{i=0}^{m} (\theta_i - \theta_{i+1}) \left( \frac{\tau_i - \theta}{\theta_{i}-\theta_{i+1}} h(\theta_i) + \min \{g(y_u),\phi(\tau_i)\} \right) \\
		\ge & \sum_{i=0}^{m} (\theta_i - \theta_{i+1}) \left( \frac{\tau_i - \theta}{\theta} h(\theta) + \min \{g(y_u),\phi(\tau_i)\} \right) \\
		\ge & \sum_{i=0}^{m} (\theta_i - \theta_{i+1}) \min_{j} \left\{ \frac{\tau_j - \theta}{\theta} h(\theta) + \min \{g(y_u),\phi(\tau_j)\}  \right\} \\
		= & \min_{j} \bigg\{ (\tau_j - \theta) h(\theta)  + \theta \min \{g(y_u),\phi(\tau_i)\} \bigg\} \ge \text{RHS},
	\end{align*}
	where the first inequality follows from $\frac{\tau_i - \theta}{\theta_{i}-\theta_{i+1}} h(\theta_i) \ge (\tau_i - \theta)\frac{h(\theta_i)}{\theta_{i}} \ge (\tau_i - \theta)\frac{h(\theta)}{\theta}$.
\end{proof}

\begin{proofof}{Lemma~\ref{lemma:general_main}}
	By Lemma~\ref{lemma:lower_bound_a_v} and \ref{lemma:lower_bound_a_u_above}, we have
	\begin{align*}
		& \expect{y_v}{\alpha_v} + \expect{y_v}{\alpha_u \cdot \idr{y_v < \theta}}\\
		\ge & \int_{0}^{\theta}g(y_v)dy_v + \min \left\{(1-\theta)\phi(1^{\text{-}}), \sum_{i=0}^{m} (\tau_i - \theta) h(\theta_i) \right\}
		+ \sum_{i=0}^{m} (\theta_i - \theta_{i+1}) \min \{g(y_u),\phi(\tau_i)\} \\
		\ge & \int_{0}^{\theta}g(y_v)dy_v + \min_{i} \bigg\{ \min \{(1-\theta)\phi(1^{\text{-}}), (\tau_i - \theta) h(\theta) \}
		+ \theta \min \{g(y_u),\phi(\tau_i)\} \bigg\},
	\end{align*}
	where the last inequality comes from Lemma~\ref{lemma:technical_lower_bound}.
	
	Combining with Lemma~\ref{lemma:lower_bound_a_u_below} (which gives different lower bounds for $\expect{y_v}{\alpha_u \cdot \idr{y_v \ge \theta}}$ depending on whether $\tau_m = 1$), we prove Lemma~\ref{lemma:general_main} for two cases, depending on whether $\tau_m = 1$.
	
	If $\tau_m < 1$, we have (recall that $\phi(\theta) = 1 - g(\theta) - h(\theta) \leq 1-g(\theta)$)
	\begin{align*}
		&\expect{y_v}{\alpha_u + \alpha_v} = \expect{y_v}{\alpha_v} + \expect{y_v}{\alpha_u \cdot \idr{y_v < \theta}} + \expect{y_v}{\alpha_u \cdot \idr{y_v \ge \theta}} \\
		\ge & \int_{0}^{\theta}g(y_v)dy_v + \min_{\tau \in [\theta,1)} \big\{ \min \{(1-\theta)\phi(1^{\text{-}}), (\tau - \theta) h(\theta) \}
		+  \theta \min \{g(y_u),\phi(\tau)\} \big\} \\
		& + (1-\theta) \min\{g(y_u), \phi(\theta)\},
	\end{align*}
	which corresponds to the first term of the outer most $\min$ in the expression of Lemma~\ref{lemma:general_main}.
	
	If $\tau_m = 1$, we have 
	\begin{align*}
		\expect{y_v}{\alpha_u + \alpha_v}
		\ge & \int_{0}^{\theta}g(y_v)dy_v + \min_{\tau \in [\theta,1]} \big\{ \min \{(1-\theta)\phi(1^{\text{-}}), (\tau - \theta) h(\theta) \}
		+ \theta \min \{g(y_u),\phi(\tau)\} \big\} \\
		& + (1-\theta) \min\{g(y_u), 1-g(\theta)\} \\
		\ge & \min\Bigg\{ \min_{\tau \in [\theta,1)} \bigg\{ \int_{0}^{\theta}g(y_v)dy_v + \min \{(1-\theta)\phi(1^{\text{-}}), (\tau - \theta) h(\theta) \}
		+ \theta \min \{g(y_u),\phi(\tau)\}  \\
		& \qquad\qquad\qquad + (1-\theta) \min\{g(y_u), \phi(\theta)\} \bigg\} ,\\
		& \qquad\quad \int_{0}^{\theta}g(y_v)dy_v + (1-\theta)\min \{\phi(1^{\text{-}}), h(\theta)\}
		+ (1-\theta) \min\{g(y_u), 1-g(\theta)\} \Bigg\}.
	\end{align*}
	
	Taking the minimum over all possible $\theta$'s concludes the proof.
\end{proofof}

\subsection{Lower Bound of the Competitive Ratio} \label{subsec:ratio}

Recall that
\begin{align*}
	f(y_u) \eqdef \min_{\theta} \biggr\{ \min_{\tau \in [\theta, 1)} \big\{ \int_{0}^{\theta}g(y_v)dy_v + \min \left\{(1-\theta)(1-g(1^\text{-})-h(1^\text{-})), (\tau - \theta) h(\theta) \right\} \\
	+ (1-\theta) \min \left\{ g(y_u), 1-g(\theta)-h(\theta) \right\} + \theta \min \left\{g(y_u), 1-g(\tau)-h(\tau)\right\} \big\}, \\
	\int_{0}^{\theta}g(y_v)dy_v + (1-\theta)\min \{1-g(1^\text{-})-h(1^\text{-}), h(\theta) \} + (1-\theta) \min \left\{ g(y_u), 1-g(\theta) \right\} \biggr\}.
\end{align*}

We construct the functions $g,h$ explicitly as follows (refer to Figure~\ref{fig:gxhx}).
\begin{align*}
	g(x) = \begin{cases}
		k_g^1x+b, &x \in [0, t]\\
		k_g^2(x-t) + k_g^1 t+b, &x \in (t,1)\\
		1 & x = 1
	\end{cases}
	,\quad
	h(x) = \begin{cases}
		k_h^1x, &x \in [0,t]\\
		k_h^2(x-t) + k_h^1 t, & x \in (t,1)\\
		0 & x = 1
	\end{cases},
\end{align*}
where $t = 0.3,k_g^1 = 0.21,k_g^2=0.1, b=0.46, k_h^1=0.26,k_h^2=0.17$.
It is easy to see that both $g,h$ are non-decreasing in $[0,1)$ and $h(x)/x$ is non-increasing.

\begin{figure}[H]
	\centering
	\centering\includegraphics[width = 0.6\textwidth]{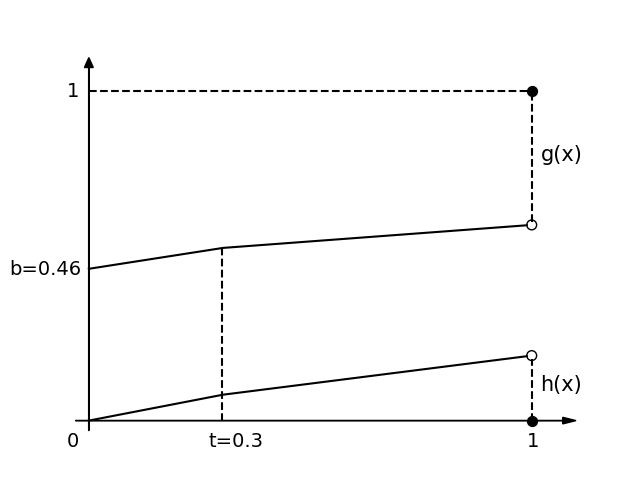}
	\caption{$g(x) \text{ and } h(x)$}
	\label{fig:gxhx}
\end{figure}

We first simplify the expression of $f(y_u)$.
Recall that we define $\phi(y) = 1-g(y)-h(y)$, which is a decreasing function with $\phi(1) = 0$.
By definition of $g,h$ stated above, we have $\phi(1^\text{-}) = 0.21$.

Observe that $(1-\theta)\phi(1^\text{-})  > (1-\theta)\cdot 0.197 = (1-\theta)h(1^\text{-}) \geq (\tau-\theta)h(\theta)$ for all $\tau, \theta$. Hence,
\[
f(y_u) = \min\left\{\min_{\theta \le \tau < 1}\psi_1(y_u,\theta,\tau),\min_{\theta\leq 1}\psi_2(y_u,\theta)\right\},
\]
where
\begin{align*}
	\psi_1 (y_u,\theta,\tau) \eqdef& \int_{0}^{\theta}g(y_v)dy_v + (\tau - \theta) h(\theta) + (1-\theta) \min \left\{ g(y_u), \phi(\theta) \right\} 
	+ \theta \min \left\{g(y_u), \phi(\tau)\right\}, \\
	\psi_2 (y_u,\theta) \eqdef& \int_{0}^{\theta}g(y_v)dy_v + (1 - \theta) h(\theta)  + (1-\theta) \min \left\{ g(y_u), 1-g(\theta) \right\}.
\end{align*}

The following lemma implies that \ranking is $0.5211$-competitive.

\begin{lemma}
	$\int_0^1 f(y_u)dy_u > 0.5211$.
\end{lemma}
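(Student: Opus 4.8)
The plan is to compute $\int_0^1 f(y_u)\,dy_u$ by first reducing the two-dimensional (and three-dimensional) minimizations inside $f$ to explicit piecewise-linear expressions in $y_u$, using the concrete numerical choices $t=0.3$, $k_g^1=0.21$, $k_g^2=0.1$, $b=0.46$, $k_h^1=0.26$, $k_h^2=0.17$, and then integrating the resulting piecewise polynomial exactly. The starting point is the simplification already stated: $f(y_u)=\min\{\min_{\theta\le\tau<1}\psi_1(y_u,\theta,\tau),\ \min_{\theta\le 1}\psi_2(y_u,\theta)\}$.

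For $\psi_1$: fix $y_u$ and $\theta$, and minimize over $\tau\in[\theta,1)$. Since $(\tau-\theta)h(\theta)$ is increasing in $\tau$ while $\theta\min\{g(y_u),\phi(\tau)\}$ is non-increasing in $\tau$ (because $\phi$ is decreasing), the minimum over $\tau$ is at one of: $\tau=\theta$; $\tau=1^-$; or the breakpoint $\tau=t$; or the value of $\tau$ where $\phi(\tau)=g(y_u)$ (when such a $\tau$ exists in $[\theta,1)$). One checks each of these candidates, using $\phi(1^-)=0.21$ and the piecewise formulas $\phi(y)=1-g(y)-h(y)$, which is itself piecewise-linear with slopes $-(k_g^1+k_h^1)=-0.47$ on $[0,t]$ and $-(k_g^2+k_h^2)=-0.27$ on $(t,1)$. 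After substituting the minimizing $\tau$, one is left with a function of $(y_u,\theta)$; the term $\min\{g(y_u),\phi(\theta)\}$ again splits into the two regimes $g(y_u)\le\phi(\theta)$ and $g(y_u)>\phi(\theta)$, and within each regime the expression is a quadratic in $\theta$ (the $\int_0^\theta g$ term contributes a $\theta^2$) that can be minimized over $\theta\in[0,\tau]$ in closed form. The analogous but simpler analysis handles $\psi_2$: with $\tau$ absent, for each $y_u$ one minimizes over $\theta$ a piecewise-quadratic, again splitting on the sign of $g(y_u)-(1-g(\theta))$.

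Having obtained $f(y_u)$ as an explicit piecewise function of $y_u$ — concretely, I expect a lower bound of the form $f(y_u)\ge\min\{g(y_u),\,c\}$ for a suitable constant $c$ plus corrections on a sub-interval, paralleling the bipartite proof where $f(x)\ge\min\{g(x),1-\tfrac1e\}$ — I would integrate. If the clean bound $f(y_u)\ge\min\{g(y_u),c\}$ holds, then $\int_0^1 f\ge\int_0^{g^{-1}(c)}g(y_u)\,dy_u+(1-g^{-1}(c))\,c$, and plugging in the piecewise-linear $g$ (so $\int g$ is piecewise-quadratic) gives a number that I would verify exceeds $0.5211$ by direct arithmetic. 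Since the inequality is strict and the target $0.5211$ is below the true value, there is slack to absorb the finitely many case checks.

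The main obstacle is purely bookkeeping: there are genuinely many cases — two breakpoints $t$ in each of $g,h,\phi$, the internal $\min$'s each contributing a case split, and the nested $\min_\tau$ inside $\min_\theta$ inside the outer $\min$ — so the risk is not conceptual difficulty but an arithmetic slip in identifying which candidate $(\theta,\tau)$ attains each minimum on each sub-interval of $y_u$. I would manage this by (i) first pinning down the one or two regions of $y_u$ that are actually binding (guided by the bipartite analogue and by the observation that $\psi_1,\psi_2$ only matter where $g(y_u)$ is small), (ii) proving a uniform lower bound valid for all cases rather than an exact formula, and (iii) leaving the fully exhaustive case verification to a short appended calculation, much as the paper itself defers the linear-program-guided choice of $g,h$. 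The monotonicity hypotheses ($g,h$ non-decreasing, $h(y)/y$ non-increasing) are exactly what make each one-variable sub-minimization reduce to endpoints or breakpoints, so no region is missed.
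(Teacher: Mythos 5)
Your overall route is the same as the paper's: exploit the piecewise-linear structure of $g,h,\phi$ to collapse the nested minimizations, obtain a bound of the form $f(y_u)\ge\min\{g(y_u),c\}$, and integrate $\int_0^{g^{-1}(c)}g+(1-g^{-1}(c))c$. That is exactly what the paper does, with $c=0.5349$. The problem is that, as written, your argument stops short of the part that actually constitutes the proof. The entire content of the paper's argument is the quantitative verification that every candidate minimizer of $\psi_1$ and $\psi_2$ yields a value above $0.5349$; you never compute $c$, never identify which cases are binding, and explicitly defer ``the fully exhaustive case verification.'' A proof of this lemma cannot defer that step, because the lemma is nothing but that numerical fact.

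There is also a concrete flaw in the reduction you describe. Your closing claim that the monotonicity hypotheses ``make each one-variable sub-minimization reduce to endpoints or breakpoints'' is correct only for the $\tau$-minimization (where $\psi_1$ is piecewise linear in $\tau$; the paper in fact shows $\partial\psi_1/\partial\tau\le 0$ using $h(\theta)\le k_h^1\theta\le(g'(\tau)+h'(\tau))\theta$, so $\tau=1^-$ always suffices). It is false for the $\theta$-minimization: in the binding regimes $\psi_2$ and $\psi_1(\cdot,\cdot,1^-)$ are piecewise quadratic in $\theta$ with \emph{positive} second derivative on $[0,t)$ (e.g.\ $\partial^2\psi_2/\partial\theta^2=3g'-2h'=0.11$ there), so the worst cases are interior stationary points, namely $\theta^*\approx 0.273$ giving $\psi_2\approx 0.5359$ and $\theta^*\approx 0.127$ giving $\psi_1\approx 0.5349$ --- the latter is precisely where the constant $0.5349$, and hence the final $0.5211$, comes from. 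An enumeration restricted to $\theta\in\{0,t,1^-\}$ would miss these points and be unsound, even though in this instance the true interior minima happen to still exceed the target. Your earlier remark that the quadratic can be minimized ``in closed form'' would fix this, but you must actually carry it out and check the resulting values; without that, the lemma is asserted rather than proved.
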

\begin{proof}
	As we will show in Lemma~\ref{lemma:psi2} and Lemma~\ref{lemma:psi1} below, we have
	\[
	f(y_u) = \min\left\{\min_{\theta \le \tau < 1}\psi_1(y_u,\theta,\tau),\min_{\theta\leq 1}\psi_2(y_u,\theta)\right\} \ge \min\{g(y_u),0.5349\}.
	\]
	
	Let $y_u^*$ be such that $g(y_u^*)=0.5349$, we have
	\begin{align*}
		\int_{0}^{1} f(y_u) dy_u = & \int_{0}^{y_u^*} f(y_u) dy_u + \int_{y_u^*}^{1} f(y_u) dy_u
		=  \int_{0}^{y_u^*} g(y_u) dy_u + (1-y_u^*) \cdot 0.5349 > 0.5211.
	\end{align*}
\end{proof}

We first consider the easier one, $\psi_2$.

\begin{lemma}\label{lemma:psi2}
	For any $\theta\leq 1$, we have $\psi_2 (y_u,\theta)>\min\{g(y_u),0.5349\}$.
\end{lemma}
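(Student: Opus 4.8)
The plan is to lower-bound $\psi_2(y_u,\theta)$ separately in two regimes according to which branch of the inner minimum $\min\{g(y_u),1-g(\theta)\}$ is active, and in each regime reduce the claim to an elementary one-variable optimization over $\theta$ with the explicit piecewise-linear $g$ and $h$. First I would write out $\int_0^\theta g(y_v)\,dy_v$ explicitly as a (piecewise) quadratic in $\theta$ using the given slopes $k_g^1,k_g^2$, intercept $b$, and breakpoint $t=0.3$; similarly $(1-\theta)h(\theta)$ becomes a piecewise quadratic. Since all pieces are quadratics in $\theta$ on the two subintervals $[0,t]$ and $[t,1]$, the minimum over $\theta$ is attained either at an endpoint $\{0,t,1\}$ or at an interior stationary point, a finite check in each case.

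The two regimes are: (i) when $g(y_u)\le 1-g(\theta)$, we have $\psi_2(y_u,\theta)=\int_0^\theta g + (1-\theta)h(\theta) + (1-\theta)g(y_u)$, and I would show this is $>g(y_u)$ by proving $\int_0^\theta g(y_v)\,dy_v + (1-\theta)h(\theta) \ge \theta\, g(y_u)$ — a bound that holds since $\int_0^\theta g \ge \theta\, g(0) = 0.46\theta$ while on the relevant range $g(y_u)\le 1-g(\theta)\le 1-b=0.54$, plus the nonnegative slack from the $h$ term; so in this regime $\psi_2>g(y_u)\ge\min\{g(y_u),0.5349\}$. (ii) When $g(y_u)>1-g(\theta)$, we have $\psi_2(y_u,\theta)=\int_0^\theta g + (1-\theta)h(\theta) + (1-\theta)(1-g(\theta))$, which no longer depends on $y_u$; here I would show $\Phi(\theta):=\int_0^\theta g(y_v)\,dy_v + (1-\theta)\big(h(\theta)+1-g(\theta)\big) > 0.5349$ for all $\theta\in[0,1]$. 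This last inequality is the crux: it is a pure single-variable claim about a piecewise-quadratic $\Phi$, so I would evaluate $\Phi$ at the breakpoint $t=0.3$ and at any interior critical points of each quadratic piece, check $\Phi(0)=h(0)+1-g(0)=1-b=0.54>0.5349$ and $\Phi(1)=\int_0^1 g>0.5349$, and conclude the minimum exceeds $0.5349$.

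The main obstacle is verifying case (ii): one must confirm that the piecewise-quadratic $\Phi(\theta)$ does not dip below $0.5349$ anywhere on $[0,1]$, in particular near its minimizer, which the constants $t=0.3,k_g^1=0.21,k_g^2=0.1,b=0.46,k_h^1=0.26,k_h^2=0.17$ were presumably tuned to make true with only a thin margin. I expect the minimum of $\Phi$ to occur in the second linear piece $\theta\in(t,1)$, where $\Phi$ is an explicit downward- or upward-opening parabola; computing its vertex and plugging in should give a value like $0.535\pm$ something small, which is $>0.5349$ but by a narrow margin, so the arithmetic must be done carefully (I would keep exact rational arithmetic rather than decimals). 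Once both regimes are handled, combining them yields $\psi_2(y_u,\theta)>\min\{g(y_u),0.5349\}$ for every $\theta\le 1$ and every $y_u$, which is exactly the statement; the parallel lemma for $\psi_1$ (Lemma~\ref{lemma:psi1}) would then be needed to finish $f(y_u)\ge\min\{g(y_u),0.5349\}$, but that is outside the scope of this lemma.
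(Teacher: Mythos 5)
Your overall plan matches the paper's proof in structure: split on which branch of $\min\{g(y_u),1-g(\theta)\}$ is active, and in the $y_u$-independent regime reduce to a finite check of the piecewise-quadratic $\Phi(\theta)=\int_0^\theta g+(1-\theta)(h(\theta)+1-g(\theta))$ (the paper does this via the signs of $\partial^2\psi_2/\partial\theta^2$, finding the interior minimizer $\theta^*\approx 0.273$ in the \emph{first} piece with value $\approx 0.5359$, and comparing with $\psi_2(y_u,1^\text{-})=\int_0^1 g\approx 0.5381$; your expectation that the minimum sits in $(t,1)$ is off, and note that piece is concave, but since you check all endpoints and stationary points this part still goes through).

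The genuine gap is in your regime (i). You claim $\int_0^\theta g(y_v)\,dy_v+(1-\theta)h(\theta)\ge \theta\,g(y_u)$ and justify it by relaxing the regime condition to $g(y_u)\le 1-b=0.54$, using $\int_0^\theta g\ge 0.46\,\theta$ plus ``nonnegative slack from the $h$ term.'' This chain breaks for $\theta$ close to $1$: there $(1-\theta)h(\theta)\to 0$ and $\int_0^1 g\approx 0.5381<0.54$, so e.g.\ at $\theta=0.99$ one has $\int_0^{0.99}g+(0.01)h(0.99)\approx 0.534<0.5346=0.54\cdot 0.99$. In other words, under the weakened hypothesis $g(y_u)\le 0.54$ the inequality you rely on is simply false near $\theta=1$, so regime (i) is not closed as written. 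Two standard fixes: (a) keep the $\theta$-dependent constraint $g(y_u)\le 1-g(\theta)$, i.e.\ prove $F(\theta)=\int_0^\theta g+(1-\theta)h(\theta)-\theta(1-g(\theta))\ge 0$; with the given constants $F(0)=0$ and $F'(\theta)=0.11\theta+0.18>0$ on $[0,t]$, $F'(\theta)=-0.04\theta+0.129>0$ on $(t,1)$, so $F$ is nonnegative and your conclusion $\psi_2\ge g(y_u)$ in regime (i) does hold; or (b) do as the paper does: use concavity of $\psi_2$ in $\theta$ in this regime ($\partial^2\psi_2/\partial\theta^2=g'-2h'<0$ on both pieces) to reduce to $\theta\in\{0,t,1^\text{-}\}$, comparing the $\theta=1^\text{-}$ value $\int_0^1 g\approx 0.5381$ against the constant $0.5349$ rather than against $g(y_u)$. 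With either repair your argument is sound and essentially equivalent to the paper's.
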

\begin{proof}
	First, if $\theta = 1$, we have $\psi_2(y_u,\theta) = \int_0^1g(y_v)dy_v \approx 0.5381 > 0.5349$. Now consider $\theta < 1$.
	
	If $g(y_u) < 1-g(\theta)$, we have $\psi_2 (y_u,\theta) = \int_{0}^{\theta}g(y_v)dy_v + (1 - \theta) h(\theta)  + (1-\theta)g(y_u)$. Thus
	\begin{align*}
		\frac{\partial \psi_2}{\partial \theta} &= g(\theta) + (1-\theta)h'(\theta) - h(\theta) - g(y_u),\\
		\frac{\partial^2 \psi_2}{\partial \theta^2} &= g'(\theta) - 2 \cdot h'(\theta) = \begin{cases}
			k_g^1-2k_h^2 =-0.31 &\theta < t\\
			k_g^2-2k_h^2 =-0.24 &\theta > t
		\end{cases}.
	\end{align*}
	
	Hence the minimum of $\psi_2 (y_u,\theta)$ is achieved at $\arg\min_{\theta<1} \{ \psi_2 (y_u,\theta) \} \in \{0,t,1^\text{-}\}$.
	Note that
	\begin{align*}
		\psi_2 (y_u,0) &= h(0) + g(y_u) \ge g(y_u), \\
		\psi_2 (y_u,t) &= \int_{0}^{t}g(y_v)dy_v + (1 - t) h(t)  + (1-t)g(y_u) >  t\cdot0.673 + (1-t) \cdot g(y_u) > g(y_u),\\
		\psi_2(y_u,1^\text{-}) &= \int_{0}^{1}g(y_v)dy_v \approx 0.5381 > 0.5349.
	\end{align*}
	We have $\min_{\theta<1}\{ \psi_2 (y_u,\theta) \} \geq \min\{g(y_u),0.5349\}$, as claimed.
	
	If $g(y_u) \geq 1-g(\theta)$, we have $\psi_2 (y_u,\theta) = \int_{0}^{\theta}g(y_v)dy_v + (1 - \theta) h(\theta)  + (1-\theta)(1-g(\theta))$. Thus
	\begin{align*}
		&\frac{\partial \psi_2}{\partial \theta} = g(\theta) + (1-\theta)h'(\theta) - h(\theta) - (1-g(\theta)) - (1-\theta)g'(\theta),\\
		&\frac{\partial^2 \psi_2}{\partial \theta^2} = 3\cdot g'(\theta) - 2\cdot h'(\theta)=\begin{cases}
			0.11  &\theta < t\\
			-0.04 &\theta > t
		\end{cases}.
	\end{align*}
	
	We calculate the zero point of $\frac{\partial \psi_2}{\partial \theta}$ in $[0,t)$, i.e., let $\frac{\partial \psi_2}{\partial \theta} = 0$, we have solution
	\begin{align*}
		&\theta^* = \frac{1+k_g^1-k_h^1-2b}{3k_g^1-2k_h^1}\approx 0.273.
	\end{align*}
	
	\vspace*{-20pt}
	\begin{figure}[H]
		\centering
		\subfigure[$\psi_2(y_u,\theta)$]{\centering\includegraphics[width = 0.45\textwidth]{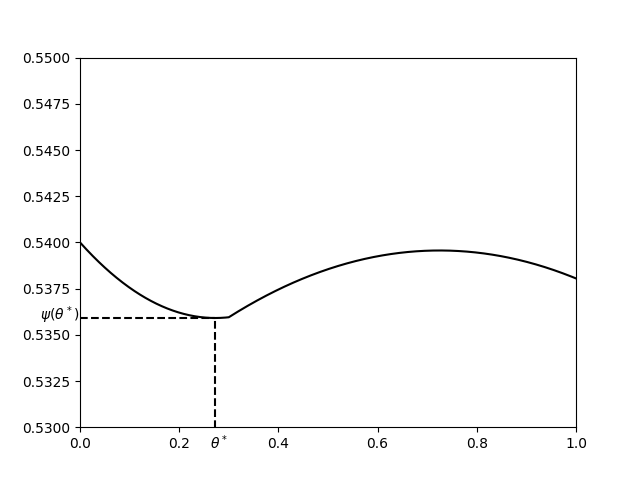}}
		\subfigure[$\frac{\partial \psi_2}{\partial \theta}$]{\centering\includegraphics[width = 0.45\textwidth]{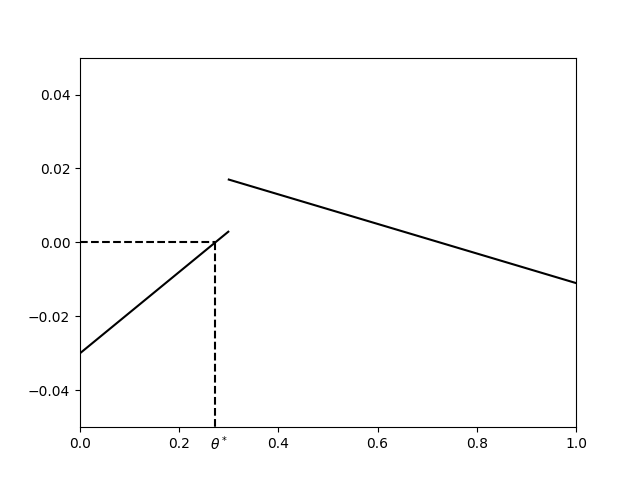}}
		\caption{$\psi_2(y_u,\theta)$ and $\frac{\partial \psi_2}{\partial \theta}$}
		\label{fig:tau=1}
	\end{figure}
	Thus, for any fixed $y_u$, $\psi_2(y_u,\theta)$ is decreasing in $[0,\theta^*]$ and increasing in $(\theta^*,t)$. So the minimum of $\psi_2(y_u,\theta)$ in $[0,t]$ is achieved at $\theta^*$. Also, since $\frac{\partial^2 \psi_2}{\partial \theta^2}<0$ in $(t,1)$, the minimum of $\psi_2(y_u,\theta)$ in $[t,1)$ is achieved at either $t$ or $1^\text{-}$. To sum up, the overall minimum is achieved at either $\theta^*$ or $1^\text{-}$:
	\begin{align*}
		\psi_2 (y_u,\theta^*) &= \int_{0}^{\theta^*}g(y_v)dy_v + (1 - \theta^*) h(\theta^*)  + (1-\theta^*)(1-g(\theta^*)) \approx 0.5359 >0.5349\\
		\psi_2(y_u,1^\text{-}) &= \int_{0}^{1}g(y_v)dy_v \approx 0.5381 >0.5349,
	\end{align*}
	Thus we have $\min_{\theta<1} \psi_2 (y_u,\theta) = \min\{\psi_2 (y_u,\theta^*),\psi_2 (y_u,1^\text{-})\} > 0.5349$, as required.
\end{proof}

Next we consider $\psi_1$.

\begin{lemma}\label{lemma:psi1}
	For all $\theta \leq \tau < 1$, we have $\psi_1 (y_u,\theta,\tau)>\min \{g(y_u),0.5349\}$.
\end{lemma}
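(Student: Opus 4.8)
The plan is to bound $\psi_1(y_u,\theta,\tau)$ by a case analysis on which branch each of the two inner minima takes, just as was done for $\psi_2$. Recall
\[
\psi_1(y_u,\theta,\tau) = \int_0^\theta g(y_v)\,dy_v + (\tau-\theta)h(\theta) + (1-\theta)\min\{g(y_u),\phi(\theta)\} + \theta\min\{g(y_u),\phi(\tau)\},
\]
where $\phi(y) = 1-g(y)-h(y)$ is decreasing with $\phi(1)=0$ and $\phi(1^{\text{-}})=0.21$. The four combinations are: (i) $g(y_u)\le\phi(\tau)$ (hence also $g(y_u)\le\phi(\theta)$ since $\phi$ is decreasing and $\theta\le\tau$); (ii) $\phi(\tau)<g(y_u)\le\phi(\theta)$; (iii) $g(y_u)>\phi(\theta)$ (hence also $g(y_u)>\phi(\tau)$). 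In case (i) we get $\psi_1 = \int_0^\theta g + (\tau-\theta)h(\theta) + g(y_u)$, which is at least $g(y_u)$ since the first two terms are nonnegative — and in fact we should get a little slack, so this branch is easy. The remaining two cases are where the real work lies.

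In case (iii), $\psi_1 = \int_0^\theta g(y_v)dy_v + (\tau-\theta)h(\theta) + (1-\theta)\phi(\theta) + \theta\phi(\tau)$, which does not depend on $y_u$ at all; I would show this is $\ge 0.5349$ by optimizing over $\theta\le\tau<1$. Since $\phi(\tau)$ is decreasing in $\tau$ and $h(\theta)\ge 0$, the $\tau$-derivative is $h(\theta)+\theta\phi'(\tau)$; on the piecewise-linear regime this is monotone in $\tau$ on each of $[\theta,t]$ and $[t,1)$, so the minimum over $\tau$ is attained at $\tau\in\{\theta,t,1^{\text{-}}\}$. For each such choice I reduce to a one-variable function of $\theta$, compute its second derivative on each linear piece of $g,h$ (which is a constant, analogous to the $-0.31,-0.24,0.11,-0.04$ values appearing in the $\psi_2$ proof), locate the finitely many candidate minimizers (endpoints $0$, the breakpoint $t=0.3$, $1^{\text{-}}$, and any interior critical point coming from a quadratic), and check the value of $\psi_1$ numerically at each — exactly the bookkeeping style already used in Lemma~\ref{lemma:psi2}. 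In case (ii), $\psi_1 = \int_0^\theta g + (\tau-\theta)h(\theta) + (1-\theta)\phi(\theta) + \theta g(y_u)$; here I treat $y_u$ as a free parameter satisfying $\phi(\tau)<g(y_u)\le\phi(\theta)$, note the expression is increasing in $g(y_u)$, so the worst case pushes $g(y_u)$ down toward $\phi(\tau)$, which returns us essentially to the expression of case (iii) — alternatively, since $\theta g(y_u)\ge \theta\phi(\tau)$ (no, this is the wrong direction): more carefully, I lower-bound $\theta g(y_u) \ge 0$ trivially only when that suffices, and otherwise observe $\psi_1 \ge \min\{g(y_u), (\text{case (iii) value})\}$ by comparing $\theta g(y_u)$ against $\theta\phi(\tau)$. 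The cleanest route is: in every case $\psi_1(y_u,\theta,\tau) \ge \min\{g(y_u),\, \Psi(\theta,\tau)\}$ where $\Psi(\theta,\tau)$ is the case-(iii) value, because replacing $g(y_u)$ by the larger $\phi(\tau)$ or $\phi(\theta)$ wherever the min selects $g(y_u)$ only decreases $\psi_1$ when $g(y_u)$ is the smaller quantity — so it suffices to prove $\Psi(\theta,\tau) \ge 0.5349$ for all $\theta\le\tau<1$, together with the trivial bound $\psi_1\ge g(y_u)$ coming from case (i).

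So the skeleton is: first establish $\psi_1(y_u,\theta,\tau)\ge\min\{g(y_u),\Psi(\theta,\tau)\}$ by the three-way min analysis above; then prove $\min_{\theta\le\tau<1}\Psi(\theta,\tau) > 0.5349$ via the $\tau$-endpoint reduction followed by a finite check of critical points and breakpoints for each of the resulting one-variable functions, using the constancy of second derivatives on each linear segment of $g$ and $h$. The main obstacle I anticipate is the $\Psi$-optimization: unlike $\psi_2$, here we have two coupled variables $\theta\le\tau$, so after fixing $\tau$ at an endpoint I must be careful that the substituted constraint $\theta\le\tau$ is respected and that the candidate minimizers (e.g. $\theta^*$ from a vanishing derivative, or the breakpoint $t=0.3$) actually fall in the feasible range for that choice of $\tau$ — this produces several sub-subcases, each requiring a short numerical verification that the value exceeds $0.5349$ with a comfortable margin. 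I expect, as with Lemma~\ref{lemma:psi2}, the binding cases to be near $\theta\approx 0.27$–$0.3$ and near $\theta=\tau=1^{\text{-}}$, where $\int_0^1 g \approx 0.5381$ provides the slack.
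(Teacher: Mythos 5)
There is a genuine gap, and it sits exactly in the mixed case $\phi(\tau) < g(y_u) \le \phi(\theta)$, which is the case that actually requires work. First, your resolved expression for case (ii) is wrong: since $g(y_u)\le\phi(\theta)$ and $g(y_u)>\phi(\tau)$, the correct expression is $\int_0^\theta g + (\tau-\theta)h(\theta) + (1-\theta)\,g(y_u) + \theta\,\phi(\tau)$, not $(1-\theta)\phi(\theta)+\theta\,g(y_u)$; your swapped version is what makes the "push $g(y_u)$ down to $\phi(\tau)$ and recover case (iii)" step look clean, but with the correct algebra that route fails outright (e.g.\ with $\theta$ small and $\tau$ near $1$ the substituted bound is about $\phi(1^{\text{-}})=0.21$, far below the target $g(y_u)$, which can be as large as $\phi(\theta)\approx 0.54$). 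Second, your "cleanest route" claim $\psi_1 \ge \min\{g(y_u),\Psi(\theta,\tau)\}$ is not a consequence of the min structure: in the mixed case $\psi_1<\Psi$ automatically, so one must show $\psi_1\ge g(y_u)$, and as a generic principle this is false (take $C=0$, $\theta=1/2$, $\phi(\theta)=1$, $\phi(\tau)=0$, $g(y_u)=1/2$: then $\psi_1=1/4<\min\{g(y_u),\Psi\}=1/2$). It does happen to hold for the paper's specific $g,h$, but only via a numeric slack you never invoke, namely $\int_0^\theta g \ge \theta g(0)=0.46\,\theta$ while $g(y_u)-\phi(\tau)\le \phi(0)-\phi(1^{\text{-}})=0.33$, giving $\psi_1-g(y_u)\ge 0.13\,\theta\ge 0$. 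The paper closes this case differently: having shown $\partial\psi_1/\partial\tau = h(\theta)-\theta\bigl(g'(\tau)+h'(\tau)\bigr)\le 0$ (because $k_h^1\le k_g^2+k_h^2$), it sets $\tau=1^{\text{-}}$, drops $\theta\,\phi(1^{\text{-}})$, and observes the result dominates $\psi_2(y_u,\theta)$, so Lemma~\ref{lemma:psi2} finishes. Either of these arguments must be supplied; your proposal contains neither, and indeed flags its own confusion about the direction of the inequality.

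The rest of your skeleton is essentially sound but heavier than needed. The trivial branch $g(y_u)\le\phi(\tau)$ matches the paper. For the branch $g(y_u)\ge\phi(\theta)$ you propose a two-variable optimization of $\Psi(\theta,\tau)$ with $\tau$ restricted to segment endpoints $\{\theta,t,1^{\text{-}}\}$; that bookkeeping would work (the binding point is $\tau=1^{\text{-}}$, $\theta^*\approx 0.127$, value $\approx 0.5349$, not $\theta\approx 0.27$--$0.3$ as you predict), but the paper's single observation that the $\tau$-derivative is globally nonpositive pins $\tau=1^{\text{-}}$ immediately and reduces everything to the one-variable function $\psi(\theta)$, avoiding your sub-case enumeration.
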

\begin{proof}
	If $g(y_u) \leq \phi(\tau)$, then we have
	$\psi_1(y_u,\theta,\tau) = \int_0^\theta g(y_v)dy_v + (\tau-\theta)h(\theta) + (1-\theta)g(y_u) + \theta g(y_u) \geq g(y_u)$, as required.
	Now consider $g(y_u) > \phi(\tau)$. 
	Observe that
	\begin{align*}
		\frac{\partial \psi_1}{\partial \tau} = h(\theta) - \theta(g'(\tau)+h'(\tau)) \leq 0,
	\end{align*}
	where the last inequality holds since $h(\theta) \leq k_h^1 \theta \le (k_g^2 + k_h^2) \theta \le (g'(\tau)+h'(\tau))\theta$ for all $\tau$.
	Thus, for all $\tau$ we have $\psi_1 (y_u,\theta,\tau) \geq \psi_1(y_u,\theta,1^\text{-})$, i.e., the minimum is achieved when $\tau = 1^\text{-}$.
	
	Depending on whether $g(y_u) \geq \phi(\theta)$, we consider two cases.
	If $g(y_u) < \phi(\theta)$, we have
	\begin{equation*}
		\psi_1(y_u,\theta,\tau) = \int_0^\theta g(y_v)dy_v + (\tau-\theta)\cdot h(\theta) + (1-\theta)\cdot g(y_u) + \theta\cdot \phi(\tau).
	\end{equation*}
	
	For any fixed $\theta$, the minimum is achieved when $\tau=1^\text{-}$, which is
	\begin{align*}
		\psi_1(y_u,\theta,1^\text{-})
		&= \int_0^\theta g(y_v)dy_v + (1^\text{-}-\theta)h(\theta) + (1-\theta)g(y_u) + \theta\cdot \phi(1^\text{-})\\
		&\geq \int_0^\theta g(y_v)dy_v + (1-\theta)h(\theta) + (1-\theta)g(y_u) \geq \psi_2(y_u,\theta) \geq \min\{g(y_u),0.5349\},
	\end{align*}
	where the last inequality follows from Lemma~\ref{lemma:psi2}.
	If $g(y_u) \geq \phi(\theta)$, we have
	\begin{equation*}
		\psi_1(y_u,\theta,\tau) = \int_0^\theta g(y_v)dy_v + (\tau-\theta)\cdot h(\theta) + (1-\theta)\cdot \phi(\theta) + \theta\cdot \phi(\tau),
	\end{equation*}
	the minimum of which is achieved when $\tau=1^\text{-}$. Define $\psi(\theta)$ to be the minimum:
	\begin{align*}
		\psi(\theta) \eqdef
		\psi_1(y_u,\theta,1^\text{-}) &= \int_0^\theta g(y_v)dy_v + (1^\text{-}-\theta)\cdot h(\theta) + (1-\theta)\cdot \phi(\theta) + \theta\cdot \phi(1^\text{-})\\
		&= \int_0^\theta g(y_v)dy_v + (1-\theta)\cdot(1-g(\theta)) + \theta\cdot (1-g(1^\text{-})-h(1^\text{-})).
	\end{align*}
	
	By the following, we have $\frac{\partial \psi}{\partial \theta}>0$ for all $\theta \in (t,1)$, i.e., $\psi(\theta)$ is increasing when $\theta \in (t,1)$.
	\begin{align*}
		&\frac{\partial \psi}{\partial \theta} = g(\theta) + (1-\theta)(-g'(\theta)) - (1-g(\theta)) + 1-g(1^\text{-})-h(1^\text{-}). \\
		&\frac{\partial^2 \psi}{\partial \theta^2} = 3g'(\theta) > 0, \quad \forall \theta \in [0,t) \cup (t,1) \\
		&\left. \frac{\partial \psi}{\partial \theta}\right|_{\theta=0} = 2g(0) - k_g^1 - g(1^\text{-}) - h(1^\text{-}) \approx -0.08 <0\\
		&\left. \frac{\partial \psi}{\partial \theta}\right|_{\theta=t} = 2g(t) - (1-t)k_g^2 - g(1^\text{-}) - h(1^\text{-}) = 0.186 >0.
	\end{align*}
	\vspace*{-20pt}
	\begin{figure}[H]
		\centering
		\subfigure[$\psi(\theta)$]{\centering\includegraphics[width = 0.45\textwidth]{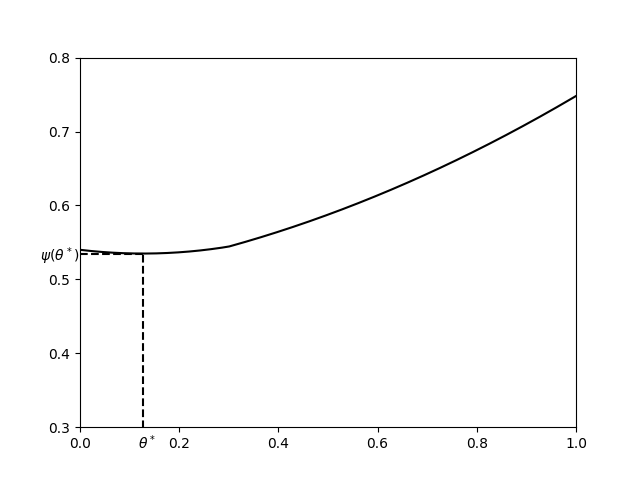}}
		\subfigure[$\frac{\partial \psi}{\partial \theta}$]{\centering\includegraphics[width = 0.45\textwidth]{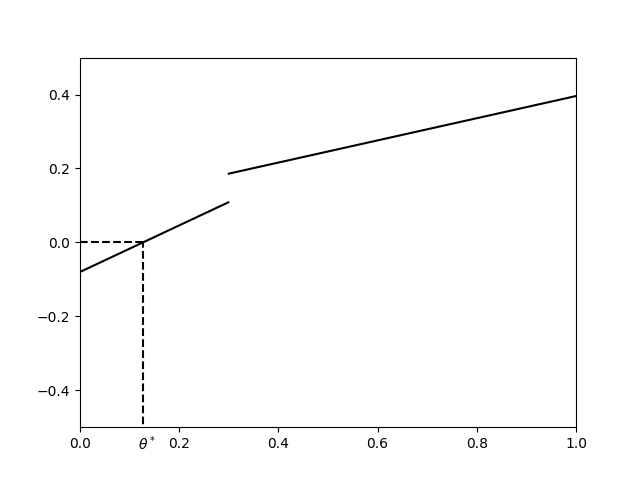}}
		\caption{$\psi(\theta)$ and $\frac{\partial \psi}{\partial \theta}$}
		\label{fig:tau=1m}
	\end{figure}
	
	Hence the minimum of $\psi(\theta)$ is achieved when $\theta \in (0,t]$.
	Let $\frac{\partial \psi}{\partial \theta} = 0$, we have
	\begin{align*}
		& \frac{\partial \psi}{\partial \theta} = g(\theta) - (1-\theta)g'(\theta) - (1-g(\theta)) + 1-g(1^\text{-})-h(1^\text{-}) = 0 \\
		\Longleftrightarrow \quad & 3k_g^1\theta + 2b-g(1^\text{-})-h(1^\text{-})-k_g^1 =0.\\
		\Longleftrightarrow \quad &\theta^* = \frac{k_g^1+g(1^\text{-})+h(1^\text{-})-2b}{3k_g^1} \approx 0.127.
	\end{align*}
	
	Thus for all $\theta \leq \tau < 1$, we have $\psi_1(y_u,\theta,\tau) \geq \psi(\theta) \geq \psi(\theta^*) \approx 0.5349$, as claimed.
\end{proof}

\section{Hardness Results}

\begin{proofof}{Theorem~\ref{th:problem_hardness}}
	Consider the following hard instance. Let $k$, $h$ be integer parameters, and $n := \sum_{i=0}^h k^i = \frac{k^{h+1}-1}{k-1}$ be the number of vertices on each side of a bipartite graph.
	In the following, we construct a bipartite graph on vertices $U\cup V$, where $U = \{ u_1,\ldots,u_n \}$ and $V = \{v_1,\ldots,v_{n-k^h}, b_1,\ldots,b_{k^h}\}$.
	It is easy to check by our construction that the graph is bipartite, but $U, V$ does not correspond to the two sides of the bipartite graph.
	
	\paragraph{Hard Instance.}
	Refer to Figure~\ref{fig:phard} (an illustrating example with $k=3$ and $h=2$).
	At the beginning, vertex $u_1$ arrives, together with all its $k+1$ neighbors (children). Let the deadline of $u_1$ be reached immediately.
	Then we choose uniformly at random $k$ vertices from the $k+1$ neighbors of $u_1$ to be $u_2,\ldots,u_{k+1}$.
	Let the remaining vertex be $v_1$.
	We repeat the procedure for vertices $u_2,\ldots,u_{k+1}$, i.e., each vertex $u_i$ has $k+1$ children, among which $k$ vertices are chosen to be $u_{(i-1)k+2}\ldots,u_{ik+1}$ while the remaining one becomes $v_i$, and let the deadline of $u_i$ be reached immediately.
	We continue building the tree for $h$ levels.
	Note that in level $i$,  there are $k^i$ vertices (excluding the $v$ vertices).
	Hence the tree with $h$ levels has $n$ vertices.
	
	\begin{figure}[htbp]
		\centering
		\includegraphics[width=0.6\textwidth]{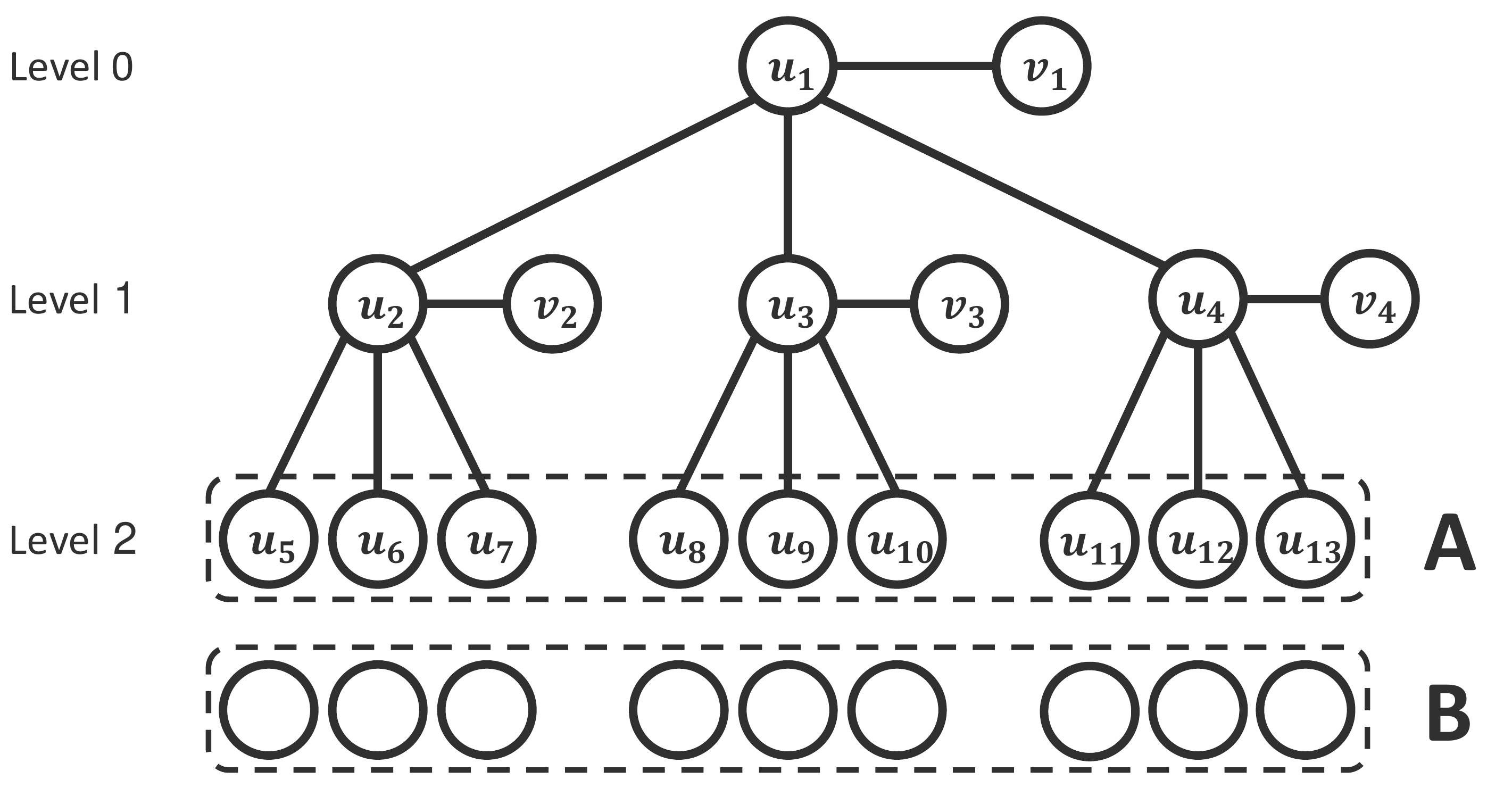}
		\caption{Hard instance for any online algorithm: an illustrating example with $k=3$ and $h=2$.}\label{fig:phard}
	\end{figure}
	
	At last, we pick a random permutation $A = (a_1,a_2,\ldots,a_{k^h})$ of the $k^h$ vertices $\{u_{n-k^h+1},\ldots, u_n\}$ at level $h$.
	Let vertices $B = \{b_1,\ldots,b_{k^h}\}$ arrive ($b_1$ arrives first and $b_{k^h}$ last), such that $b_i$ is connected to vertices $a_i,\ldots, a_n$, and the deadline of $b_i$ is reached immediately when it arrives.
	
	Let the deadlines of vertices in $A\cup (V\setminus B)$ be reached at the end.
	
	\paragraph{Competitive Ratio.}
	First observe that graph $G$ has a perfect matching, by matching $u_i$ to $v_i$ (for all $i\in[n-k^h]$) and $a_i$ to $b_i$ (for all $i\in[k^h]$).
	Now we consider any online algorithm.
	Note that when the deadline of $u_i$ is reached, it is not worse to match $u_i$ if it has an unmatched neighbor: if we do not match $u_i$, then the symmetric difference is an alternating path, thus the number of vertices matched does not increase.
	Hence we assume w.l.o.g. that all vertices in $U\setminus A$ will be matched eventually.
	
	Let $p_i$ be the probability that a vertex $u_x$ from level $i$ is matched when the deadline of its parent $u_y$ in the tree is reached.
	Note that $p_i$ is also the probability that $v_y$ is matched, as $v_y$ is chosen uniformly at random among the $k+1$ children of $u_y$.
	Observe that we have $p_i = \frac{1-p_{i-1}}{k+1}$, where $p_0 = 1$.
	It is easy to check (by induction) that for all $i \leq h$, we have
	\begin{equation*}
		p_i = \frac{1}{k+2}\left(1 - \left(\frac{-1}{k+1}\right)^i \right)
	\end{equation*}
	
	Hence before vertex $b_1$ arrives, each vertex from $A$ is matched with probability $p_h$.
	By the standard water-filling algorithm, it is easy to see that the expected number of matched vertices from $B$ (at the end of the algorithm) is $t$ such that
	\begin{equation*}
		\frac{1}{k^h} + \frac{1}{k^h-1} + \ldots + \frac{1}{k^h - t + 1} = 1 - p_h = \frac{k+1}{k+2} + \frac{1}{k+2} \left(\frac{-1}{k+1} \right)^h .
	\end{equation*}
	
	When $h$ tends to infinity, we have $t \approx (1- e^{-\frac{k+1}{k+2}})\cdot k^h$.
	Hence the competitive ratio is
	\begin{equation*}
		\frac{2t + |U\setminus A| + \frac{1}{k+2}|A| + \frac{1}{k+2}|V\setminus B|}{2n} = \frac{2(1-e^{-\frac{k+1}{k+2}})\cdot k^h\cdot (k-1) + k^h-1}{2(k^{h+1}-1)} + \frac{1}{2(k+2)},
	\end{equation*}
	which tends to $\frac{k-1}{k}(1-e^{-\frac{k+1}{k+2}}) + \frac{1}{2k} + \frac{1}{2(k+2)}$ when $h$ tends to infinity.
	For $k = 7$, the ratio is $\frac{62}{63} - \frac{6}{7}\cdot e^{-\frac{8}{9}}\approx 0.631745$.
\end{proofof}

\begin{proofof}{Theorem~\ref{th:ranking_hardness}}
	Consider the following hard instance. Let $k$, $h$ be integer parameters, and $n := k\cdot h$ be the number of vertices on each side of a bipartite graph $G$.
	In the following, we construct a bipartite graph on vertices $U\cup V$, where $U = \{ u_1,\ldots,u_n \}$ and $V = \{v_1,\ldots,v_{n}\}$.
	As before, it is easy to check by our construction that the graph is bipartite, but $U, V$ does not correspond to the two sides of the bipartite graph.
	
	\paragraph{Hard Instance.}
	Refer to Figure~\ref{fig:rhard}.
	For all $i\in [n]$, let $u_i$ be the only neighbor of $v_i$. 
	We group every $k$ consecutive vertices in $U$ as a group, i.e., let $U = \cup_{i\in [h]} U_i$, where the $i$-th group $U_i = \{ u_{(i-1)k+1},u_{(i-1)k+2},\ldots,u_{ik} \}$.
	Let there be an edge between $u_i$ and $u_j$ if they are from two consecutive groups, respectively.
	In other words, we form a complete bipartite graph between any two consecutive groups $U_i$ and $U_{i+1}$.
	\begin{figure}[htb]
		\centering
		\includegraphics[width=0.6\textwidth]{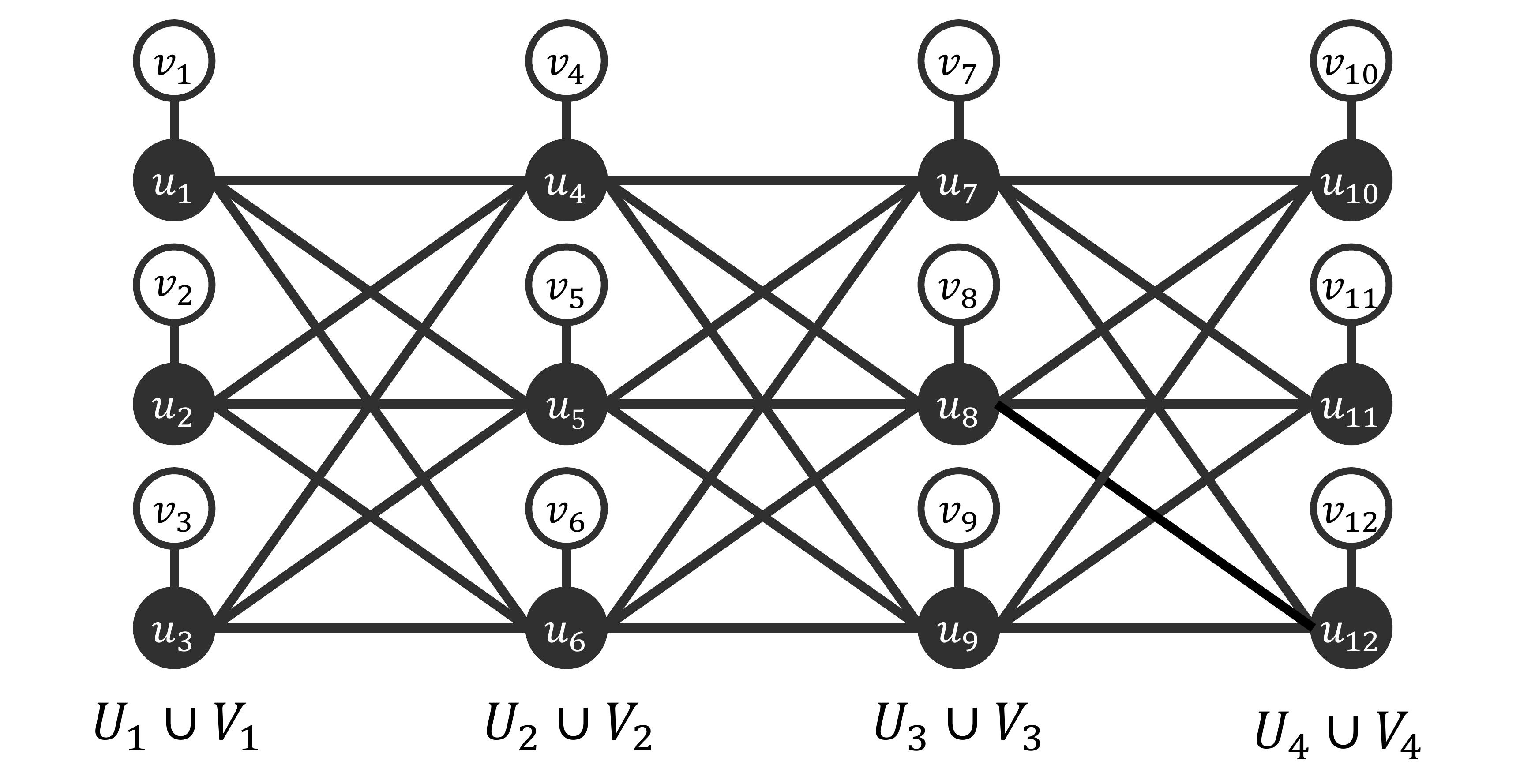}
		\caption{Hard instance for \ranking: an illustrating example with $k=3$ and $h=4$: vertices in $U$ and $V$ are represented by the solid black circles and white circles, respectively.}\label{fig:rhard}
	\end{figure}
	
	Let the deadline of vertex $u_1$ be reached first, then $u_2$'s deadline, $u_3$'s deadline, etc.
	
	\paragraph{Competitive Ratio of \ranking.}
	It is easy to see that graph $G$ has a perfect matching, by matching $u_i$ to $v_i$ for each $i\in[n]$.
	Recall that in the \ranking algorithm, each vertex $u$ is assigned a random rank $y_u\in [0,1)$.
	At the deadline of an unmatched vertex, it is matched to its unmatched neighbor $v$ (if any) with the smallest $y_v$.
	Observe that in our instance, all vertices from $U$ will be matched eventually, while each $v_i\in V$ will be matched only if at the deadline of $u_i$, $u_i$ is unmatched and $y_{v_i}$ is smaller than the ranks of all unmatched vertices from the next group $U_{\lceil\frac{i}{k}\rceil +1}$.
	
	For all $i\in[h]$, let $X_i \in \{0,1,\cdots,k\}$ be the number of unmatched vertices in $U_i$ right before the deadline of the first vertex $u_{(i-1)k+1}$ in $U_i$ is reached.
	It is easy to see that $X_{i+1}$ is a random variable that depends only on $X_i$. Hence the sequence $X_1,X_2,\ldots,X_h$ forms a Markov chain (with $k+1$ states) with initial state $X_1 = k$.
	Observe that all vertices in $U$ are matched, and the number of vertices matched in $V_i$ equals $X_i + X_{i+1} - k$. 
	Hence the competitive ratio of \ranking $\approx \frac{\sum_{i\in[h]} X_i}{kh}$.

	We say phase~$i$ begin when the deadline of the first vertex of $U_i$ is reached, and end after the deadline of the last vertex of $U_i$.
	Fix any phase~$i$, where $i<h$.
	Recall that initially $X_i$ vertices of $U_i$ are unmatched.
	Let $Z(t)$ be the number of unmatched vertices in $U_{i+1}$, when the deadlines of exactly $t$ unmatched vertices in $U_i$ have passed.
	We have $Z(0)=k$ and $X_{i+1} = Z(X_i)$.
	Let $y_1 \le y_2 \le \cdots \le y_k$ be the ranks of vertices in $U_{i+1}$.
	It is easy to see that $\expect{}{Z(t+1)} = Z(t) - 1 + y_{Z(t)}$. Taking expectation over all $y_i$'s, we have 
	$\expect{}{Z(t+1)} = \expect{}{Z(t)} - 1 + \frac{\expect{}{Z(t)}}{k+1}$.
	Let $z(\frac{t}{k}) \eqdef \frac{Z(t)}{k}$.
	It is easy to see when $k \to \infty$, $z(\frac{t}{k}) \to e^{-\frac{t}{k}}$.
	This is saying, given $X_i$, $\expect{}{X_{i+1}} = k \cdot e^{-\frac{X_i}{k}}$ when $k$ tends to infinity.
	
	Finally, note that all vertices in $U_{i+1}$ are symmetric. Hence, each of them is unmatched at the end of phase $i$ with probability $\frac{\expect{}{X_{i+1}}}{k} = e^{-\frac{X_i}{k}}$.
	Moreover, for any two vertices $u_a, u_b \in U_{i+1}$, the probability that $u_a$ is unmatched at the end of phase~$i$ is negatively correlated with the probability of $u_b$: conditioned on $u_b$ being unmatched at the end of phase~$i$, the probability of $u_a$ being unmatched is smaller.
	Thus we have measure concentration bound on $X_{i+1}$, by standard argument using moment generation function.
	In other words, the stationary distribution (when $k$ and $h$ tends to infinity) converges to a single point mass with,
	\begin{equation*}
		\frac{X}{k}=e^{-\frac{X}{k}},
	\end{equation*}
	which implies $\frac{X}{k} \approx 0.56714$, the Omega constant, which is also the competitive ratio of \ranking.
\end{proofof}

\end{document}